\documentclass[a4paper]{article}

\usepackage[T1]{fontenc}
\usepackage[utf8]{inputenc}
\usepackage{float}
\usepackage{graphicx}
\usepackage{subcaption} 
\captionsetup{compatibility=false}
\usepackage[thinlines]{easytable}
\usepackage{hyperref}
\usepackage{wrapfig}
\usepackage{dingbat}
\usepackage{color}
\usepackage{refcount}
\usepackage[table]{xcolor}

\usepackage[toc,page]{appendix}

\usepackage{indentfirst}

\usepackage{amsmath,amssymb,amsthm}
\usepackage{mathtools}


\usepackage[english]{babel}

\usepackage{natbib}

\makeatletter
\@addtoreset{equation}{section}

\makeatother

\oddsidemargin -0pt
\evensidemargin -0pt
\topmargin -20pt
\textheight 600pt
\textwidth 460pt

\setlength\parindent{24pt}

\theoremstyle{Theorem}
\newtheorem{theo}{Theorem}
\theoremstyle{Lemma}
\newtheorem{lem}[theo]{Lemma}
\newtheorem{rem}[theo]{Remark}
\newtheorem{assumption}[theo]{Assumptions}

\theoremstyle{Definition}
\newtheorem{defi}[theo]{Definition}

\theoremstyle{Corollary}

\DeclareFontFamily{U}{matha}{\hyphenchar\font45}
\DeclareFontShape{U}{matha}{m}{n}{
      <5> <6> <7> <8> <9> <10> gen * matha
      <10.95> matha10 <12> <14.4> <17.28> <20.74> <24.88> matha12
      }{}
\DeclareSymbolFont{matha}{U}{matha}{m}{n}
\DeclareFontSubstitution{U}{matha}{m}{n}

\DeclareFontFamily{U}{mathx}{\hyphenchar\font45}
\DeclareFontShape{U}{mathx}{m}{n}{
      <5> <6> <7> <8> <9> <10>
      <10.95> <12> <14.4> <17.28> <20.74> <24.88>
      mathx10
      }{}
\DeclareSymbolFont{mathx}{U}{mathx}{m}{n}
\DeclareFontSubstitution{U}{mathx}{m}{n}

\DeclareMathDelimiter{\vvvert}{0}{matha}{"7E}{mathx}{"17}

\newcommand{\N}{{\mathbb N}}
\newcommand{\Z}{{\mathbb Z}}

\newcommand{\R}{{\mathbb R}}

\newcommand{\PCA}{\textnormal{PCA}}

\newcommand{\bl}{r}

\newcommand{\const}{\kappa}


\newcommand{\be}{\begin{equation}} 
\newcommand{\Ee}{\end{equation}}

\numberwithin{theo}{section}

\begin{document}

\title{Testing separability for continuous functional data}

\author{Holger Dette, Gauthier Dierickx, Tim Kutta}

\maketitle
\begin{abstract}
\noindent Analyzing the covariance structure of data is a fundamental task of statistics. While this task is simple for low-dimensional observations, it becomes challenging for more 
intricate objects, such as multivariate functions. Here, the covariance  can be so complex that just saving a non-parametric estimate is impractical and structural assumptions are necessary to tame the model. One popular assumption for space-time data is separability of the covariance into purely spatial and temporal factors. \\ 
In this paper, we present a new test for separability in the context of dependent functional time series.  While most of the related work studies functional data in a Hilbert space of square integrable functions,  we model the observations as objects in the space of 
continuous functions equipped with the supremum norm.  We argue that this (mathematically challenging) setup enhances interpretability for users and is more in line with practical  preprocessing.
 Our test 
statistic measures the maximal deviation between the estimated covariance kernel and a separable  approximation.
Critical values are obtained by  a non-standard multiplier bootstrap for dependent data. We prove the statistical validity of 
  our approach  and   demonstrate its practicability in  a simulation study and a data example.

\end{abstract}

\noindent {\em Keywords:}
Dependent multiplier bootstrap;  space-time data;  separability;   Banach space; functional time series.

\noindent {\em MSC:}
62G10; 62R10.

\bigskip

\section{ Introduction } 
    \noindent
    Over the last decades, the analysis of high-dimensional space-time data has become a cornerstone of geostatistics. New technologies allow the collection of high-frequency and high-resolution measurements for   variables such as temperature, magnetic fields or pollutant concentrations (see, for example, \cite{GroKokZhuSoj2012,Aue2018,King2018}). One way to analyze such data, is to smooth it over space and time, which yields time series of spatio-temporal processes. This approach of reconstructing and analyzing random functions follows the paradigm of functional data analysis (FDA) and has recently gained attraction in the geostatistical community (for overviews see, e.g., \cite{Gen2020} and the monograph of \cite{MaGi2022}). 
    
    While the analysis of such processes promises deep scientific insights, their high complexity can push statistical methods to their limit. For example, commonly used   tools such as PCA and Kriging hinge on an approximation of the processes' covariance operator - an object that can be too massive to be stored or to be inverted for the purpose of prediction. For a concise overview of these computational challenges, we refer to  Table~1 in \cite{MasSarPan20}. To reduce complexity, many works impose structural assumptions on the covariance, one of them being \textit{separability}.
    
    Roughly speaking, separability states that the covariance of a space-time process can be decomposed into a purely temporal and a purely spatial component. The implied elimination of space-time interactions cuts the number of model parameters and makes the covariance tractable again (see \cite{Gen07}). Besides, separability entails a product structure for the principal components, facilitating the construction of estimators and inference methods (see \cite{GroKokZhuSoj2012,GroKokRei2016}). To rigorously define separability, consider a stochastic process $\{ X(s,t) ~|~s \in K_1, t \in K_2\} $, with one argument in a spatial domain $K_1$ and one in a temporal domain $K_2$. 
    Then, under suitable conditions (see, for example, \cite{JanKaiBook15}),  its covariance operator can be defined   point-wise as
    $$
    C(s,t,s',t') 
    := 
    \mathbb{ E } \left[ (X(s,t)-\mathbb{ E }  X(s,t))(X(s',t')-\mathbb{ E }  X(s',t'))\right].
    $$
    We call $C$ separable, if there exist two functions $C_1$ (spatial), $C_2$ (temporal), such that
    $$
    C(s,t,s',t') = C_1(s,s') \cdot C_2(t, t') \qquad \forall s,s' \in K_1, \,\, t,t' \in K_2.
    $$
    As pointed out before, the product structure of separability prunes model parameters, making the model  statistically and computationally more tractable.
    
    However, separability is not a free lunch for space-time data. Indeed, erroneously assuming a separable model can lead to inconsistent estimates and biased inference results. This point is crucial, as separability is rarely self-evident, as noticed by many authors  (see  \cite{ScaMar2005, Gen07,  AstPigTav17}, among many others). 
    To address this issue, statistical tests have been proposed to examine separability, such as for finite dimensional data by \cite{MatYaj2004,ScaMar2005,Fue2006,CruFer-CasGon-Man2010}. More recently, nonparametric methods tailored to functional observations  have been  devised by \cite{AstPigTav17,ConKokRei17,ConKokRei18,BagDet20,DetDieKut21}. While these latter works differ in terms of their inference strategies, they share the mathematical setup of modelling observations in a space of square integrable functions. This approach is standard in FDA (see the monographs of  \cite{HorKokBook12, hsingeubank2015}) and provides the most immediate extension of finite dimensional methodology to function spaces. Nevertheless, the choice of $L^2$-spaces is usually more informed by mathematical convenience, than by practicability. Indeed, as many undergraduate textbooks point out, the $L^2$-distance is hard to geometrically interpret and often stuns the novice by its unintuitive notion of convergence. More specific to FDA, basing a theory on $L^2$-spaces usually ignores structural features of the functions, such as continuity. Notice that the bulk of FDA relies on continuous, non-parametric curve estimation as preprocessing.  In such cases, insisting on an $L^2$-framework can  feel disjoint from a user's intuition and defy heuristic interpretations of  results. 
    
    Following recent work of    \cite{degras2011,caoyangtodem2012,degras2017}
    and
    \cite{DetKokAue20}, we opt instead to conduct FDA on the  in our opinion  more natural space of continuous functions equipped with the supremum norm. On this more intricate space, we study functional space-time processes and advance a new test for separability of the covariance, which is based on an estimate of the maximum deviation between the covariance operator  and an approximation by a covariance operator from a separable process.
    We combine  the  
    profound theory of  weak convergence of stochastic processes   (see,  for example, \cite{vdVWelBook96} or \cite{GinNikBook2016} among many others)
    with  new differentiability results for  these   separability measures to study  the asymptotic properties of  the  corresponding estimates. To improve the finite sample properties and to avoid 
    the estimation of complicated nuisance parameters 
   we propose a multiplier bootstrap for dependent data  as a more practicable alternative. 
 In particular, our results hold 
    under     weaker dependence and stationarity assumptions compared to previous works.  
    
    The rest of this paper is organized as follows. Section~\ref{Sec_MathConc} provides a brief introduction to random variables on the space of continuous functions and the notion of separability. Subsequently, we discuss three methods from the related literature to approximate a covariance kernel $C$ by a separable version (we develop theory for all three approximations at a later point). In Section~\ref{Sec_StatInfContCovKer}, we present the test statistic for the hypothesis of separability and prove its  weak convergence.
    To approximate the limiting distribution, we discuss a non-standard multiplier bootstrap for dependent data. Section~\ref{Sec_finite_sample} is dedicated to the finite sample properties of our test, which we study by virtue of simulations as well as a data example. Finally, all proofs and technical details are deferred to the Appendix.

\section{Mathematical concepts} \label{Sec_MathConc}

\noindent  
	In this section, we lay the mathematical foundations of FDA in the space of continuous functions. 
	Section~\ref{Subsec_SpaceContFunc} begins with a review of random, continuous functions, as well as basic concepts, such as functional expectations  and covariance kernels. Subsequently, we define the model assumption of separability, which is the focus of our below statistical analysis. In Section~\ref{Subsec_SepaApprox}, we discuss three methods to approximate a kernel $A$ by a separable version $A^{\mathbf{x}}$ and show that each approximation map ($A \mapsto A^{\mathbf{x}}$) is well-defined and differentiable (see Theorem~\ref{Theo_Frechet}).

	\subsection{Mathematical preliminaries }\label{Subsec_SpaceContFunc}
	
	\noindent 
	Let
	$$
	\mathcal{ C }(K) := \big\{f: K \to \mathbb{R}\, | \, f \,\, \textnormal{continuous} \big\},
	$$  	
	denote  the space 
	of continuous, real valued functions defined  on a non-empty, compact set $K \subset \mathbb{R}^d$,
	which equipped with the common supremum norm (or "sup-norm" for short)
	\begin{equation} \label{def_supnorm}
	\| f\| := \sup_{t \in K}|f(t)|
	\end{equation}
	is a Banach space. If there is no danger of confusion, we sometimes refer to a function $f \in \mathcal{ C }(K)$ by its evaluation $f(t)$. 
	Moreover, 
	we  also use the notation $\|\cdot\|$ to refer to the max-norm (maximum absolute entry) for matrices or vectors
	(corresponding to a finite set $K$ in \eqref{def_supnorm}).

	Letting $(\Omega, \mathcal{A}, \mathbb{P})$ denote a complete probability space, we call a map $X:(\Omega, \mathcal{A}, \mathbb{P}) \to \mathcal{ C }(K)$ a random $\mathcal{ C }(K)$-valued function, if $X$ is Borel-measurable w.r.t.\ the sup-norm. We point out that Borel-measurability of $X$ (as a Banach space valued function) is equivalent to measurability of the real valued marginals $X(t):(\Omega, \mathcal{A}, \mathbb{P}) \to \R$ for all $t \in K$. Supposing that the first absolute moment of $X$ exists, in the sense that $\mathbb{ E } \|X\|<\infty$, the expectation of $X$  is well-defined (in the Bochner sense), with $\mathbb{ E }  X \in \mathcal{ C }(K)$  and equal to the point-wise expectation $\mathbb{ E }[ X(t)]$ for any $t \in K$. If the stronger moment condition $\mathbb{ E } \|X\|^2<\infty$ holds, we can define the expectation of the product function $\{(X(s)-\mathbb{ E } X(s))(X(t)-\mathbb{ E } X(t))\}_{s,t \in K}$ on the space $\mathcal{ C }(K^2):=\mathcal{ C }(K \times K)$. We call this the \textit{covariance kernel} and define it point-wise as
 $$
	C(s,t) := \mathbb{ E } [X(s)-\mathbb{ E } X(s)][X(t)-\mathbb{ E } X(t)].
	$$
	For further details on defining moment of Banach space valued random variables we refer to \cite{JanKaiBook15}.
	 As common in the study of functional data, we sometimes consider for a continuous kernel function 
	$A \in \mathcal{ C }(K^2)$ 
	the corresponding integral operator $\{f(t)\}_{t \in K} \mapsto \{ \int_K A( s, t ) f(s) ds\}_{t \in K}$. 
    For ease of notation, we usually identify the integral operator with the associated kernel (notice that the relation is one-one) and define the expression 
    $A[f]
    :=
    \{ \int_K A( s, t ) f( s ) ds\}_{t \in K}$. In particular, we call the integral operator associated with $C$ the \textit{covariance operator}.
    Finally, we observe that $\mathcal{ C }(K)$ can be understood as a subspace of $L^2(K)$, the space of square integrable functions, equipped with the canonical semi-norm 
    $\|f\|_{ L^2 } := \{\int_K f(s)^2 ds\}^{1/2}$. 
    Recall that the $L^2$-topology is weaker than the uniform topology, since we consider functions with compact support.
    
    In the next step, we consider the mathematical property of \textit{separability}. Let $K_1 \subset \R^d, K_2 \subset \R^q$ be compact and non-empty sets. Then we say that a kernel 
    $A \in \mathcal{ C }((K_1 \times K_2)^2)$ 
    is separable, if there exist kernels 
    $A_i \in \mathcal{ C }(K_i^2)$ for $i=1,2$, s.t. 
\begin{equation*} 
    	A(s,t,s',t') 
    = 
        A_1(s,s') \, A_2 (t,t') 
    \qquad 
        \forall s,s' \in K_1, \,\, t,t' \in K_2. 
\end{equation*}
    Later, we will study the case of a separable covariance kernel $C$ of a random function $X$. In order to assess separability, we will employ an estimator $\hat C_N$ and compare it to a separable approximation. The construction of such separable approximations is the subject of the next section.

\subsection{Separable approximations}\label{Subsec_SepaApprox}

    \noindent In this section, we discuss the approximation of a kernel function 
    $A \in \mathcal{ C }((K_1 \times K_2)^2)$ 
    by the (separable) product of two functions $B = A_1 \cdot A_2$ with $A_i \in \mathcal{ C }(K_i^2)$ and $i=1,2$. 
    This approximation is key to our subsequent analysis, as it characterizes separability by the vanishing goodness-of-fit measure 
    $\|A-B\|=0$. How then should we construct a separable approximation?
    The most natural way might be to optimize over all separable maps w.r.t.\ the norm $\|\cdot \|$, i.e.,  to find 
$$
    B^{\textnormal{opt}} \in  \textnormal{argmin} \{\|A-B\|: B \, \textnormal{separable}\}.
$$
    Unfortunately, this type of approximation is computationally infeasible. To illustrate this point, let us consider the analogue problem of finding an optimal separable approximation for a $d \times d$-matrix w.r.t.\ to the max-norm (also denoted by $\|\cdot \|$). This problem reflects the search for an optimal, separable approximation for discretized a version of $A$ (as it would be saved on a computer in a real application). In the following Lemma "$\otimes$" denotes the well-known Kronecker product for matrices.

\begin{rem} \label{Lem_NP-Comp-of-MiniProb}
    Let $M \in \mathbb{R}^{d \times d}$ be a real valued matrix with $d=pq$ and $p, q \in \N$. Then the problem of finding a solution to the following minimization problem 
$$
    ``\textnormal{minimize} \,\, \|M- M_1 \otimes M_2\|  \quad \textnormal{with} \,\, 
    \textnormal{ $M_1 \in \mathbb{R}^{p \times p}$, \, $M_2 \in \mathbb{R}^{q \times q}$''},
$$
    is $NP$-complete. 
    \end{rem}
    
    \begin{proof}
        The problem of finding optimal rank-$1$-approximations with regard to the max-norm is an NP-complete problem, according to \cite{GilShi17}. According to Section~2 of \cite{Gen07} the problem of finding separable approximations is equivalent to finding rank-$1$-approximations.
    \end{proof}
    
    \noindent The notion of \textit{$NP$-completeness} in the above Remark~\ref{Lem_NP-Comp-of-MiniProb} refers to a set of problems which are (by today's knowledge) not efficiently solvable (for details, see \cite{Garey}). Evidently, if it is not feasible to find max-norm approximations for matrices, it has a fortiori to be true for continuous kernels (which then cannot be  optimally approximated by a separable kernel even on a grid with reasonable effort). This insight suggests to use different, suboptimal, but computationally feasible approximation methods. In the following, we discuss three procedures which have been applied previously in the study of separability for $L^2$-functions.

    \subsubsection{Partial trace approximations}
    \label{Subsubsect_PartTrace}
    
    \noindent Let $A \in \mathcal{ C }((K_1 \times K_2)^2)$ be a kernel function. We can then define the marginal kernels $A_1^{\rm tr } \in \mathcal{ C }(K_1^2)$ and $A_2^{\rm tr } \in \mathcal{ C }(K_2^2) $ point-wise as
\begin{equation} \label{Eq_def_partial_trace_factors}
        A_1^{\rm tr }( s, s' )
    := 
        \int_{ K_2 }  A( s, w, s', w ) dw 
    \quad 
        \textnormal{ and } 
    \quad 
        A_2^{\rm tr }( t, t' ) 
    =  
        \int_{ K_1 }  A( u, t, u, t' ) du 
\end{equation}
    and therewith the separable approximation $A^{\rm tr }$ of $A$ as
    \begin{equation} \label{Eq_TraceApprox}
        A^{\rm tr }( s, t, s', t' ) 
    := 
        \frac{ A_1^{\rm tr }( s, s' ) A_2^{\rm tr }( t, t' ) 
        }{
        \int_{K_1} \int_{K_2} A( u, w, u, w ) d u\, d w
        }.
\end{equation}
    Notice that $A^{ \rm tr  }$ is well-defined, if the denominator in \eqref{Eq_TraceApprox} is non-zero. For a covariance kernel (symmetric and positive semidefinite) this is automatically satisfied, unless 
    the kernel   is trivial. Partial trace approximations are well-known from quantum physics (see \cite{Bhat03} and references therein) and have recently been applied for separability tests of space-time processes in  $L^2$-spaces  (see \cite{ConKokRei17}, \cite{AstPigTav17}). In a recent work of \cite{MasSarPan20} generalizations of partial traces have been investigated in the context of ``almost separable matrices''.

    \subsubsection{Partial product approximations} 
    \label{Subsubsec_PartProd}
    
    \noindent Closely related to the notion of partial traces are partial products.  The partial product approximation depends on a user determined function $\psi \in \mathcal{ C }(K_2^2)$, where typical choices are discussed in \cite{BagDet20} (one being the constant $1$). We then define  for $A \in \mathcal{ C }((K_1 \times K_2)^2)$ point-wise the marginal kernels 
    \begin{equation} \label{Eq_def_partial_prod_factors}
        A_1^{ \rm pr }( s, s' ) 
    := 
        \int_{ K_2^2 }  A( s, w, s', w' ) \psi( w, w' ) dw \, dw' 
    \quad 
        \textnormal{and} 
    \quad 
        A_2^{ \rm pr }( t, t' ) 
    :=  
        \int_{ K_1^2 }  A( u, t, u', t') A_1^{ \rm pr }( u, u' ) du \, du'
    \end{equation}
    and therewith the separable approximation $A^{ \rm pr }$ as   
\begin{equation} \label{Eq_ProdApprox}
        A^{ \rm pr }( s, t, s', t' ) 
    := 
        \frac{ A_1^{ \rm pr }( s, s' ) A_2^{ \rm pr }( t, t' ) 
        }{
        \int_{ K_1^2 } \left( A_1^{ \rm pr }( u, u' ) \right)^2 d u \, d u'
        }.
\end{equation}
    As for partial traces, the approximation is well-defined for a non-vanishing numerator, i.e., for
    \begin{equation} \label{Eq_well_def_prod}
    A_1^{ \rm pr } \neq 0. 
\end{equation}
This condition is fulfilled if $A \neq 0$ for an appropriate the choice of $\psi$. Recently, partial products have been used in \cite{DetDieKut21} for the quantification of separability in  $L^2$-spaces and more recently for the efficient derivation of optimal separable approximations w.r.t.\ the $L^2$-norm in \cite{MasSarPan20}. These latter approximations are discussed next.

    \subsubsection{SPCA approximations} 

    \noindent SPCA  (separable principal component analysis) provides  the last approximation method that we want to investigate. With respect to the $L^2$-norm, SPCA-approximations are optimal, whereas for the sup-norm they do not occupy this special position (see Remark~\ref{Lem_NP-Comp-of-MiniProb}). SPCA-approximations have been known in finite dimensions for several decades (\cite{VLoPit93,Gen07}) and have recently been generalized to infinite-dimensional Hilbert spaces by  \cite{DetDieKut21}. The name ``SPCA'' has been introduced by \cite{MasSarPan20}, who proposed an efficient algorithm for the calculation of these approximations by virtue of the partial product (see Section~\ref{Subsubsec_PartProd} in this reference).
    Consider the kernels
$$
        \tilde A_1^{\PCA}(s, s', \bar s, \bar s') 
    := 
        \int_{ K_2^2 } 
        A( s, w, s', w' )
        A( \bar s, w', \bar s', w ) 
        dw \, dw'
$$
    and
$$
        \tilde A_2^{\PCA}( t, t', \bar t, \bar t') 
    := 
        \int_{ K_1^2 } 
        A( u, t, u', t' ) A( u', \bar t, u, \bar t' ) 
        du \, du',
$$
    both of which are  continuous, symmetric (w.r.t.\ to the first and second pair of components) and positive definite. According to Mercer's theorem (Theorem~4.49 in \cite{Steinwart2008}) we can write them down as
    $$
    \tilde A_1^{\PCA}(s, s', \bar s, \bar s')  = \sum_{i \ge 1} \lambda_i v_i(s,s') v_i(\bar s, \bar s'), \quad  \quad \tilde A_2^{\PCA}(t, t', \bar t, \bar t')  = \sum_{i \ge 1} \lambda_i u_i(t,t') u_i(\bar t, \bar t').
    $$
     Here $\{\lambda_i, v_i\}_{i \in \N}$ and $\{\lambda_i, u_i\}_{i \in \N}$ are the eigensystems of the respective integral operators and the eigenvalues are supposed to be in  descending order. It is not difficult to show that if the strict inequality 
    \begin{equation} \label{Eq_well_def_pca}
    \lambda_1 > \lambda_2
    \end{equation}holds, the eigenfunction $v_1, u_1$ are well-defined (up to sign) and continuous (see for both results Appendix~\ref{Subsec_AppCont-and-Diff-SepaPCA}). Supposing that this is true, we  define the marginals
    $$
    A_1^{\PCA}(s,s') = \sqrt{\lambda_1} v_1(s,s'), \quad \quad A_2^{\PCA}(t,t') = \sqrt{\lambda_1} u_1(t,t')
    $$
    and therewith the SPCA approximation
\begin{equation}         \label{Eq_SepPCA-Approx}
        A^{\PCA}(s,t,s',t') 
    = 
        A_1^{\PCA}(s,s') A_2^{\PCA}(t,t'). \\[3ex]
\end{equation}

\noindent We conclude this section with a general result regarding the approximation maps $A \mapsto A^\mathbf{x}$, for every $\mathbf{x} \in \{\rm tr ,  \rm pr , \PCA\}$. We demonstrate that each of these maps is well-defined (in the sense that the resulting kernels are indeed continuous and positive definite). Moreover, we conclude that the approximation maps are Fr\'{e}chet-differentiable, which is critical for the subsequent application of the functional Delta-method (see, for instance, Section~3.9 in \cite{vdVWelBook96}).

\begin{theo} \label{Theo_Frechet}
    Let $K_1 \subset \R^p$, $K_2 \subset \R^q$ be compact,  non-empty sets and let $\tilde A \in \mathcal{ C }((K_1 \times K_2)^2) $ be a covariance kernel with $\tilde A \neq 0$. Moreover, suppose that equation \eqref{Eq_well_def_prod} and \eqref{Eq_well_def_pca} are satisfied. Then the maps
$$
\begin{cases}
&
        \mathbf{ F }_{ i }^{ \mathbf{ x } }: 
            \mathcal{ C }((K_1 \times K_2)^2) 
            \to
            \mathcal{ C }(K_i^2):
            A \mapsto  A_i^\mathbf{x} 
\\
&
        \mathbf{ F }^{ \mathbf{ x } }: 
            \mathcal{ C }((K_1 \times K_2)^2) 
            \to 
            \mathcal{ C }((K_1 \times K_2)^2) :
            A \mapsto  A^\mathbf{x} 
\end{cases}
$$
are for $i=1,2$ and $\mathbf{x} \in \{\rm tr ,  \rm pr , \PCA\}$ well-defined in a sufficiently small, open neighborhood of $\tilde A$ and Fr\'{e}chet differentiable in $\tilde A$. Moreover, 
$
    \mathbf{ F }^{ \mathbf{ x } }_{ i } [ \tilde{ A } ]
$ 
and
$
\mathbf{ F }^{ \mathbf{ x } } [ \tilde{ A } ] 
$ 
are again covariance kernels.
\end{theo}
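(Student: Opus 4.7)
The plan is to treat each of the three approximations $\mathbf{x} \in \{\rm tr, \rm pr, \PCA\}$ separately, exploiting their common structural recipe: each $\mathbf{F}^{\mathbf{x}}$ is built from bounded (multi)linear maps on $\mathcal{C}((K_1 \times K_2)^2)$, a division by a scalar functional, and — only in the SPCA case — the nonlinear extraction of a simple top eigenpair. For each $\mathbf{x}$ I would (i) verify well-definedness on a sup-norm neighborhood of $\tilde A$, (ii) apply the product and chain rules to obtain Fr\'echet differentiability, and (iii) verify continuity, symmetry, and positive semidefiniteness of the resulting kernels.

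For the partial trace case, the marginals in \eqref{Eq_def_partial_trace_factors} are manifestly bounded and linear in $A$ (boundedness coming from compactness of the $K_i$ and the sup-norm), so they are their own Fr\'echet derivatives. The denominator in \eqref{Eq_TraceApprox} is a bounded linear functional that is strictly positive at $\tilde A$, because $\tilde A$ has nonnegative diagonal and is nontrivial; continuity keeps it bounded away from zero in a neighborhood, and Leibniz and chain rules yield Fr\'echet differentiability of $\mathbf{F}^{\rm tr}$. For positive semidefiniteness I would use Mercer's theorem to expand $\tilde A = \sum_k \mu_k \phi_k \otimes \phi_k$ with $\mu_k \ge 0$ and continuous $\phi_k$, which presents each $A_i^{\rm tr}$ as a nonnegative combination of PSD rank-one kernels, and then use that a tensor product of PSD kernels is PSD.

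The partial product case proceeds analogously, with the twist that $A_2^{\rm pr}$ is bilinear in $A$ via the intermediate $A_1^{\rm pr}$ and the denominator in \eqref{Eq_ProdApprox} is trilinear. Bounded multilinear maps are Fr\'echet differentiable, and the standing hypothesis $A_1^{\rm pr}[\tilde A] \neq 0$ combined with continuity keeps the denominator positive on a neighborhood. Positive semidefiniteness of each factor again follows from a Mercer expansion of $\tilde A$, provided $\psi$ is chosen to be a PSD kernel, which is satisfied by the standard choices discussed in \cite{BagDet20}.

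The SPCA case is where I expect the main obstacle, as it requires Fr\'echet differentiability of a top eigenpair extraction. The bilinear maps $A \mapsto \tilde A_i^{\PCA}$ are handled exactly as above, but the subsequent passage to $\lambda_1$ and $v_1$ (resp.\ $u_1$) is nonlinear. Here I would invoke analytic perturbation theory for simple eigenvalues: using the strict gap \eqref{Eq_well_def_pca}, the spectral projector onto the top eigenspace of the integral operator $T_A$ with kernel $\tilde A_1^{\PCA}$ admits the contour representation
\begin{equation*}
    P_1(A) = \frac{1}{2\pi i} \oint_{\Gamma} \bigl( \lambda I - T_A \bigr)^{-1} d\lambda,
\end{equation*}
where $\Gamma$ is a small circle around $\lambda_1(\tilde A)$ enclosing no other spectrum. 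The resolvent identity shows $A \mapsto P_1(A)$ is analytic in operator norm, and $v_1$ is recovered (up to sign) by normalizing the image of a fixed probe function. To transfer smoothness from the operator norm on $L^2$ to the sup-norm on $\mathcal{C}(K_i^2)$ I would appeal to the regularity argument referenced in Appendix~\ref{Subsec_AppCont-and-Diff-SepaPCA}, which uses continuity of $\tilde A$ to upgrade $L^2$-convergence of the perturbed eigenfunctions to uniform convergence. Finally, positive semidefiniteness of $A_i^{\PCA} = \sqrt{\lambda_1}\, v_i$ follows from the fact that $\tilde A_i^{\PCA}$ are Gram-type kernels built from $\tilde A$, so that $v_1, u_1$ appear as top singular vectors in the SVD of $\tilde A$ viewed as a tensor; the product $A^{\PCA} = A_1^{\PCA}\otimes A_2^{\PCA}$ is then PSD as a tensor product of PSD kernels, completing the proof.
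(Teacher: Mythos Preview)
Your handling of the trace and product cases is essentially what the paper does: factor the map through bounded (multi)linear pieces and a scalar denominator, observe the denominator is continuous and nonzero at $\tilde A$, and invoke product and chain rules. The paper writes out the explicit derivative formulas (adapted from \cite{DetDieKut21}) and checks the remainders are $\mathcal{O}(\|H\|^2)$ in sup-norm, but the architecture is identical.

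The SPCA case is where your proposal both diverges from the paper and has a genuine gap. The paper does not use a Riesz contour representation; it expands $v_1^{A_0+H}-v_1^{A_0}$ directly from the eigenequation $v_1^A = A[v_1^A]/\lambda_1^A$, splits into three pieces, and controls each via the perturbation identities of \cite{KokRei13}. The key observation --- which you defer to an unspecified ``regularity argument'' --- is that the standard $L^2$ derivative $\sum_{i>1}(\lambda_1-\lambda_i)^{-1}v_i^{A_0}\langle H, v_1^{A_0}\otimes v_i^{A_0}\rangle$ need not be continuous. The paper's fix is to keep an extra factor $A_0/\lambda_1^{A_0}$ in front of this series: since $A_0:L^2\to\mathcal{C}$ is bounded, this smooths the expression into $\mathcal{C}(K)$, and two explicit correction terms compensate. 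Your contour-integral route can in fact be made to work, but you would need to run it directly on $\mathcal{C}(K)$ (noting $T_A$ is compact there, so the Fredholm alternative gives bounded resolvents uniformly on $\Gamma$), rather than in $L^2$ followed by a transfer. As written, you work in $L^2$ operator norm and then gesture at the upgrade, which is exactly the step requiring a real argument.

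On positive semidefiniteness: the paper does not prove this at all but cites \cite{Filipiak2018} and \cite{MasSarPan20}. Your Mercer argument for the trace is fine; for the product you quietly assume $\psi$ is PSD, which the theorem does not; and for SPCA, being a top singular vector of the reshaped tensor does not by itself force $v_1$ to be PSD as a kernel on $K_1$ --- one needs that the reshaping of a PSD kernel yields a self-adjoint operator with nonnegative leading eigenvector, which is what the cited references establish.
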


\noindent The proof of this Theorem can be found in Appendix~\ref{Sec_AppCont-and-DiffSepaMaps}, where we also state the explicit form of the derivatives.
  In the next section, we will use this result in the context of statistical inference for spatio-temporal data.

\section{Testing separability for a continuous covariance kernel} 
\label{Sec_StatInfContCovKer}

\noindent In this section, we develop a statistical test for the hypothesis of a separable covariance kernel in the space of continuous functions. First, we specify the statistical framework in Section~\ref{SubSec_HypoSepaCov} and, second,  investigate test statistics for the hypothesis of separability in Section~\ref{SubSec_AsympTestSepa}. Lemma~\ref{Lem_measure_convergence} entails weak convergence of these test statistics to suprema of Gaussian processes (under the null hypothesis) and hence provides the theoretical tools for separability tests. To approximate the asymptotic quantiles of the test statistics, we present in Section~\ref{SubSec_MultiBootstrap-for-SepaMeas} a multiplier bootstrap for dependent data.

\subsection{Notations and assumptions}\label{SubSec_HypoSepaCov}

\noindent Let $K_1 \subset \R^p$ and  $K_2 \subset \R^q$ be compact,  non-empty sets and 
$( X_n )_{ n \in \mathbb{ Z } }$ be a time series  of random functions in the Banach space $\mathcal{ C }(K_1 \times K_2)$ (for a definition, see Section~\ref{Subsec_SpaceContFunc}). In the following, we will assume that 
$ ( X_n )_{ n \in \mathbb{ Z } }$ 
satisfies \textit{fourth order stationarity}, in the sense that for any  indices $( n_1, \ldots , n_4) \in \mathbb{ Z }^4$ and $k \in \mathbb{ Z }$ the vectors
$
( X_{ n_1 }, \ldots, X_{ n_4 } ) 
$
and
$
( X_{ n_1 + k }, \ldots, X_{ n_4 + k } ) 
$ 
have the same distribution. In particular, supposing that $\mathbb{ E } \|X_1\|^2 < \infty$, both the mean function 
$
    \mathbb{E} X_n ( s, t )$ and the covariance 
\begin{equation*} 
        C( s, t, s', t' ) 
    := 
        \mathbb{ E } \left[  X_n ( s, t ) X_n ( s', t' ) \right] 
        -
        \mathbb{ E } \left[ X_n ( s, t ) \right] \, 
        \mathbb{ E } \left[ X_n ( s', t' ) \right]
\end{equation*}
 do not depend on $n$ and are well-defined on the spaces $\mathcal{ C }( K_1 \times K_2 )$ and $\mathcal{ C }((K_1 \times K_2 )^2)$ respectively. In the following, we want to construct a test for the hypothesis of a separable covariance operator, i.e.,
 \begin{equation} \label{Eq_DefH_0}
       H_0:  C \textnormal{ is separable}
    \qquad \textnormal{vs.} \qquad 
       H_1:  C \textnormal{ is not separable},
\end{equation}
where separability is defined in Section~\ref{Subsec_SpaceContFunc}.
For this purpose, suppose that we observe a sample of $N$ random functions $X_1, \ldots, X_N$ (from the time series $( X_n )_{n \in \Z} $). 
For estimating $C$ we use the (standard) empirical covariance estimator defined as:
\begin{equation} \label{def_hat_c}
    \hat{ C }_N  ( s, t, s', t' ) 
  := 
    \frac{ 1 }{ N } 
    \sum_{ n = 1 }^N 
        \left( 
            X_n( s, t )-\overline{ X }_N( s, t ) 
        \right) \,
        \left(
            X_n( s', t' ) -\overline{ X }_N( s', t' ) 
        \right),   
\end{equation}
where 
$
    \overline{ X }_N (s, t )
:= 
    \frac{ 1 }{ N } \sum_{ n = 1 }^N X_n (s, t )
$, $(s, t ) \in K_1 \times K_2 $, denotes the sample mean estimator function. In the next section, we will construct a test statistic for the hypothesis $H_0$, by comparing $\hat C_N$ with a separable approximation. The use of this statistic  is motivated by the fact that, under suitable assumptions on the dependence structure,  $\hat{ C }_N  $ is  a consistent estimator of   $C$ by the law of large numbers (on Banach spaces). In order to quantify dependence, we introduce the popular concept of 
$\alpha$-mixing sequences.

\begin{defi}  \label{Def_Mixing}
Let
 $(X_{n})_{n \in \Z  }$ be a sequence of  random variables on some Banach space.
 For index sets $\mathcal{I}, \mathcal{J}\subset \Z$ we define the  set distance 
 $$
    \operatorname{dist}(\mathcal{I}, \mathcal{J}) 
= 
    \min \{ |i-j|: i \in \mathcal{I}, j \in \mathcal{J}
    \}
 $$
 and use the notation 
 $
 \mathcal{F}_\mathcal{I} 
 :=
 \sigma \left( \{ X_{ i } : i \in \mathcal{I} \} \right)$ 
 for the $\sigma-$algebra generated by the family of random variables  $\{ X_{ i }: i \in \mathcal{I}\} $.
For $r \in \mathbb{N}_0$ the $r$-th $\alpha$-mixing coefficient  is then
defined as
\begin{align*}
    \alpha(r)
=
    \sup 
    &
    \left\{ 
        |\mathbb{P}(A \cap B) 
        - 
        \mathbb{P}(A) \mathbb{P}(B)| 
        \,: \, 
        A \in \mathcal{F}_{ \mathcal{I} }, 
        ~
        B \in \mathcal{F}_{ \mathcal{J} },
        ~
        \operatorname{dist}(\mathcal{I}, \mathcal{J})\ge r,
        ~ \mathcal{I}, \mathcal{J} 
        \subset \Z
    \right\}.
\end{align*}
The sequence $(X_{n})_{n\in \mathbb { Z }  }$ 
 is called   
 $\alpha$-mixing if $\alpha(r) \to 0$, as $r \to \infty$.
\end{defi}

\noindent  We can now state the theoretical  assumptions for the separability test, developed in this section.

\begin{assumption}\label{Assum_MainAssumBanachCLT} $ $
\begin{itemize}
\item[i)] The sequence $(X_n )_{ n \in \Z }$ consists of centered, random functions in
$
\mathcal{ C }(K_1 \times K_2)
$ 
and is fourth order stationary.
\item[ii)] There exists a non-negative random variable $M$, parameters $\beta \in \left( 0, 1 \right]$ and $J \ge 0$ with the constraint
$J \beta > \lceil 2  ( \operatorname{dim}(K_1) + \operatorname{dim}(K_2) ) \rceil + 1$ such that 
\begin{equation} \label{Eq_HoelderContData}
    \mathbb{ E } ( \| X_1 \|^J M^{ J } ) 
    < 
    \infty,
\quad
\text{ and }
\quad
		| X_n ( s, t ) - X_n ( s', t' ) |
	\le
		M 
    \max\{  \| s - s' \|^{ \beta }, \| t - t' \|^{ \beta }  \},
\end{equation} 
	holds (almost surely) for all $( s, t ) , ( s', t' )  \in K_1 \times K_2$ 
	and all $n \in \mathbb { Z }$. 
\item[iii)] For some $\gamma > \max \{ J, 8 \}$ it holds that
$      \mathbb{ E } \|  X_1 \|^{ \gamma } 
    < 
        \infty.
$
\item[iv)] The sequence $(X_n)_{n \in \Z}$ is $\alpha$-mixing in the sense of Definition~\eqref{Def_Mixing}, with $\alpha(r) \le \const /  ( 1 + r )^{ a }$, where $\kappa>0$ is some constant and $a > 2 \gamma / (\gamma - 8 )$. 

\end{itemize}
\end{assumption}

\noindent We briefly discuss each of these Assumptions.

\begin{rem} \mbox{} 
\rm
\begin{itemize}
    \item[$i)$] We require second order stationarity s.t.\ the empirical mean $\overline{ X}_N$ and the empirical covariance operator $\hat{ C }_N $ are consistent estimators. The stronger assumption of fourth order stationarity guarantees existence of the long-run variance operator of $\sqrt{N}(\hat{ C }_N - C )$. An examination of our proofs shows that these assumptions can be further relaxed to  conditions on the moments of $X_n$. However, for  parsimony of presentation, we do not discuss these mathematically weaker (but harder to understand) adaptions. Our stationarity assumption is weaker than those in the related literature (both in $L^2$- and $\mathcal{ C }$-spaces), where  either independence or strict stationarity is considered (see \cite{ConKokRei17,AstPigTav17,BagDet20,DetKok21}).
    \item[$ii)$] To devise asymptotic tests for separability, we derive a CLT on the Banach space of continuous functions. Such results require the validation of tightness conditions, which depend on the geometry of the underlying space (reflected by the entropy rate). 
    In the case of i.i.d.\ observations, sufficient conditions for a CLT can be found in \cite{JainMar75}, and for the dependent   case (under $\alpha$- and 
    $\phi$-mixing) in \cite{DmitErmOst84}. 
    More specifically, Theorem~1 of \cite{JainMar75} requires a random Lipschitz condition, which
    together with an entropy condition entails a functional CLT on $\mathcal{ C } (K)$.   
    In this paper, we further relax this assumption by  requiring only a  random $\beta$-H\"{o}lder condition. 
    This condition specifically includes sample paths of the Brownian motion, that are almost surely $\beta$-H\"older continuous, for $\beta < 1/2$. 
    However, additional smoothness helps to reduce moment conditions imposed on the data (see the next assumption).

    \item[$iii)-iv)$] The existence of sufficiently many moments is the  key to proving our CLT (and its bootstrap variant). As might be expected, stronger moment conditions can be traded off against weaker smoothness assumptions for the data functions, as well as milder conditions on temporal dependence. We quantify dependence by the decay rate of strong mixing coefficients, where a slower decay expresses stronger time-dependence. In comparison to the related literature, our mixing conditions are rather weak and include large classes of dependent time series.

\end{itemize}
    
\end{rem}

\subsection[Central limit theorems in $\mathcal{ C }((K_1 \times K_2)^2 )$]{Central limit theorems in $\boldsymbol{ \mathcal{ C }((K_1 \times K_2)^2 ) }$}\label{SubSec_AsympTestSepa}

\noindent We begin our theoretical derivations by proving weak convergence of the standardized empirical covariance operator $\sqrt{N}(\hat{ C }_N  - C)$ to a Gaussian process $G$. This result, together with the corresponding bootstrap convergence, is of independent interest for the statistical investigation of the covariance on the space of continuous functions (notice that Theorem  \ref{Theo_CLT-Rv-and-EmpCovOp} is a special consequence of Theorem \ref{Theo_MultiplierBootstrap}).

\begin{theo} \label{Theo_CLT-Rv-and-EmpCovOp}
    Suppose that Assumption~\ref{Assum_MainAssumBanachCLT} holds. 
    	Then there exists a centered Gaussian process $G$ on the space $\mathcal{ C }((K_1 \times K_2)^2)$ such that 
$$
	      \sqrt{ N }  \big(\hat{ C }_N  - C \big)
    \stackrel{d}{\to} 
        G.
$$

\end{theo}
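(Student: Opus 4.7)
The plan is to derive the CLT by reducing to a sample average in $\mathcal{C}((K_1 \times K_2)^2)$ and splitting the argument into finite-dimensional convergence and tightness. Since $\mathbb{E}X_n \equiv 0$, one has
\begin{equation*}
\hat{C}_N(s,t,s',t') - C(s,t,s',t') = \frac{1}{N}\sum_{n=1}^N Z_n(s,t,s',t') - \bar{X}_N(s,t)\bar{X}_N(s',t'),
\end{equation*}
where $Z_n(s,t,s',t') := X_n(s,t) X_n(s',t') - C(s,t,s',t')$. A standard CLT on $\mathcal{C}(K_1\times K_2)$ (implied by a simpler version of the tightness argument sketched below) gives $\|\bar{X}_N\| = O_P(N^{-1/2})$, hence $\sqrt{N}\,\|\bar{X}_N \otimes \bar{X}_N\| = O_P(N^{-1/2}) = o_P(1)$, and the bilinear correction is uniformly negligible. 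It therefore suffices to prove that $S_N := N^{-1/2}\sum_{n=1}^N Z_n$ converges weakly in $\mathcal{C}((K_1\times K_2)^2)$; here $(Z_n)_{n\in\mathbb{Z}}$ is centered, fourth-order stationary, and inherits the $\alpha$-mixing of $(X_n)$.

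For finite-dimensional convergence, Assumption~\ref{Assum_MainAssumBanachCLT}(iii) provides $\mathbb{E}|Z_n(\mathbf{x})|^{\gamma/2} < \infty$ with $\gamma/2 > 4$, and $\alpha(r) \le \kappa(1+r)^{-a}$ with $a > 2\gamma/(\gamma-8)$ easily implies absolute summability of the cross autocovariances of $(Z_n)_{n \in \mathbb{Z}}$ at any finite collection of arguments. A classical CLT for strongly mixing vector sequences (e.g.\ Ibragimov--Rosenblatt) then yields joint Gaussian limits with covariance
\begin{equation*}
\Sigma(\mathbf{x},\mathbf{y}) = \mathbb{E}[Z_0(\mathbf{x})Z_0(\mathbf{y})] + \sum_{k\ne 0}\mathbb{E}[Z_0(\mathbf{x})Z_k(\mathbf{y})].
\end{equation*}
For tightness in the sup-metric $d$ on $(K_1\times K_2)^2$, the identity $|ab-cd|\le |a-c||b|+|c||b-d|$ combined with Assumption~\ref{Assum_MainAssumBanachCLT}(ii) yields the random H\"{o}lder bound
\begin{equation*}
|Z_n(\mathbf{x}) - Z_n(\mathbf{y})| \le \bigl(2M\|X_n\| + 2\mathbb{E}[M\|X_1\|]\bigr)\,d(\mathbf{x},\mathbf{y})^\beta,
\end{equation*}
whose H\"{o}lder constant has a finite $J$-th moment by Assumption~\ref{Assum_MainAssumBanachCLT}(ii)--(iii) and Cauchy--Schwarz. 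Since $(K_1\times K_2)^2 \subset \mathbb{R}^{2(p+q)}$ is compact, its covering numbers in the semimetric $d^{\beta}$ grow like $\varepsilon^{-2(p+q)/\beta}$, and the imposed $J\beta > \lceil 2(\dim K_1 + \dim K_2)\rceil + 1$ is precisely the balance required to integrate the entropy bound against the available moments of the H\"{o}lder constant.

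With these two ingredients at hand, I would invoke (or reproduce along its lines) the $\mathcal{C}(K)$-valued CLT for $\alpha$-mixing sequences of Dmitrovsky--Ermakov--Ostrovskii, which extends the Jain--Marcus criterion to weakly dependent data under precisely a random H\"{o}lder condition together with matching moments and mixing decay; Prokhorov's theorem then delivers the Gaussian limit $G$ with sample paths in $\mathcal{C}((K_1\times K_2)^2)$ and covariance $\Sigma$. The main obstacle is the tightness step: because the random H\"{o}lder constant carries only finitely many moments, under dependence the classical chaining argument must be run through a big-block/small-block decomposition in which block lengths trade off variance against the polynomial mixing decay, and one must verify that the conditions $a > 2\gamma/(\gamma-8)$ and $J\beta > \lceil 2(\dim K_1+\dim K_2)\rceil + 1$ jointly suffice to control both the entropy integral and the blocking remainder. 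A secondary subtlety is that the intrinsic covariance semimetric of $G$ may be degenerate on $(K_1\times K_2)^2$, so asymptotic equicontinuity must be formulated directly in terms of $d$ rather than in terms of that semimetric.
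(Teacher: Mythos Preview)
Your proposal is correct and follows essentially the same route as the paper: reduce to the centered sum $S_N = N^{-1/2}\sum Z_n$ (the bilinear correction is $o_P(1)$), establish finite-dimensional convergence via a mixing CLT, and prove tightness from a random H\"older bound on $Z_n$ combined with the entropy estimate for $(K_1\times K_2)^2$.

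The only noteworthy difference is in execution rather than strategy. You invoke the Dmitrovsky--Ermakov--Ostrovskii theorem as a black box (and speculate about a big-block/small-block scheme for tightness), whereas the paper carries out the tightness step explicitly: it applies Yoshihara's moment inequality for $\alpha$-mixing partial sums to obtain the Kolmogorov--Chentsov type bound $\mathbb{E}|S_N(\mathbf{x})-S_N(\mathbf{y})|^J \le \kappa\,\tilde\rho(\mathbf{x},\mathbf{y})^{J\beta}$ directly (no blocking needed at this stage), and then feeds this into the chaining/maximal inequality of van der Vaart--Wellner, Theorem~2.2.4. This makes transparent exactly how the constraint $J\beta > \lceil 2(\dim K_1+\dim K_2)\rceil + 1$ enters, and why Yoshihara's summability condition on the mixing coefficients is met under Assumption~\ref{Assum_MainAssumBanachCLT}(iv). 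For the finite-dimensional step the paper uses a blocking argument together with the Wooldridge--White array CLT, but this extra machinery is there only because the paper proves Theorem~\ref{Theo_CLT-Rv-and-EmpCovOp} simultaneously with the bootstrap version; for the statement at hand your direct appeal to an Ibragimov-type mixing CLT is simpler and adequate.
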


\noindent We now combine weak convergence of the empirical covariance with the Fr\'{e}chet-differentiability of the separable approximation maps (Theorem~\ref{Theo_Frechet}). This yields weak convergence of the empirical separability measure 
$        \|
            \hat{ C }_N  
            - 
             \hat{ C }_N^{ \mathbf{ x } }
        \| 
$ 
for any one of the approximation types $\mathbf{x} \in \{ \rm tr ,  \rm pr , \PCA \}$, considered in Sections~\ref{Subsec_AppCont-and-Diff-PartProd}-\ref{Subsec_AppCont-and-Diff-SepaPCA}.

\begin{lem} \label{Lem_measure_convergence}
    Suppose our Assumption~\ref{Assum_MainAssumBanachCLT} holds, that $C \neq 0$ is separable and  that \eqref{Eq_well_def_prod}, \eqref{Eq_well_def_pca} are satisfied. Then under the hypothesis $H_0$ (defined in \eqref{Eq_DefH_0}, it holds that
\begin{equation} \label{Eq_conv_approx}
         \sqrt{ N }  \left( \hat{ C }_N  
            - 
             \hat{ C }_N^{ \mathbf{ x } }
        \right) 
    \stackrel{d}{ \to}
         (\operatorname{ Id } - D_C  \mathbf{ F }^{ \mathbf{ x } } ) G
\end{equation}
    for all $\mathbf{x} \in \{ \rm tr ,  \rm pr , \PCA \}$. Here,  $D_C  \mathbf{ F }^{ \mathbf{ x } } $ is the  derivative of the separable approximation map (see Section~\ref{Sec_MathConc}),  $\operatorname{ Id }$ the identity operator and $ G $ the Gaussian process from Theorem~\ref{Theo_CLT-Rv-and-EmpCovOp}. In particular, it holds under $H_0$ (separability), that
\begin{equation} \label{Eq_conv_H_0}
         \sqrt{ N }  
        \left\|
            \hat{ C }_N  
            - 
             \hat{ C }_N^{ \mathbf{ x } }
        \right\| 
    \stackrel{d}{ \to}
         \left\| (\operatorname{ Id } - D_C\mathbf{ F }^{ \mathbf{ x } } ) [ G ] 
         \right\|
\end{equation}
and under $H_1$ (non-separability) that
\begin{equation*} 
         \sqrt{ N }  
        \left\|
            \hat{ C }_N  
            - 
         \hat{ C }_N^{ \mathbf{ x } }
        \right\| 
    \stackrel{d}{ \to}
        \infty.
\end{equation*}
\end{lem}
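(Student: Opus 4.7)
The plan is to apply the functional Delta method to the nonlinear map
$\Phi^{\mathbf{x}}:\mathcal{C}((K_1\times K_2)^2)\to \mathcal{C}((K_1\times K_2)^2)$ defined by $\Phi^{\mathbf{x}}[A]:=A-\mathbf{F}^{\mathbf{x}}[A]$, using Theorem~\ref{Theo_CLT-Rv-and-EmpCovOp} as the driving CLT and Theorem~\ref{Theo_Frechet} to supply the required Fr\'{e}chet differentiability of $\mathbf{F}^{\mathbf{x}}$ at $C$ (with derivative $D_C\mathbf{F}^{\mathbf{x}}$). The statements \eqref{Eq_conv_approx} and \eqref{Eq_conv_H_0} will then follow by the Delta method and the continuous mapping theorem (the sup-norm being continuous on $\mathcal{C}((K_1\times K_2)^2)$), while the behavior under $H_1$ will follow from consistency of $\hat{C}_N$ and the strict separation $\|C-\mathbf{F}^{\mathbf{x}}[C]\|>0$.

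First I would verify that $\Phi^{\mathbf{x}}$ is Fr\'{e}chet differentiable at $C$ with derivative $\operatorname{Id}-D_C\mathbf{F}^{\mathbf{x}}$; this is immediate from Theorem~\ref{Theo_Frechet} together with the fact that the identity map is linear and bounded. Under $H_0$, separability of $C$ gives $\mathbf{F}^{\mathbf{x}}[C]=C$ (a property to be checked briefly for each of the three approximations from Section~\ref{Subsec_SepaApprox}, which it does by construction), so $\Phi^{\mathbf{x}}[C]=0$. Therefore
\begin{equation*}
\sqrt{N}\bigl(\hat{C}_N-\hat{C}_N^{\mathbf{x}}\bigr)
=\sqrt{N}\bigl(\Phi^{\mathbf{x}}[\hat{C}_N]-\Phi^{\mathbf{x}}[C]\bigr),
\end{equation*}
and combining Theorem~\ref{Theo_CLT-Rv-and-EmpCovOp} with the functional Delta method (e.g.\ Theorem~3.9.4 in \cite{vdVWelBook96}) yields \eqref{Eq_conv_approx}. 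Applying the continuous mapping theorem to the sup-norm then gives \eqref{Eq_conv_H_0}.

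Under $H_1$, the kernel $C$ is not separable while $\mathbf{F}^{\mathbf{x}}[C]$ is separable by construction, so $\delta:=\|C-\mathbf{F}^{\mathbf{x}}[C]\|>0$. By Theorem~\ref{Theo_CLT-Rv-and-EmpCovOp} we have $\hat{C}_N\to C$ in probability in $\mathcal{C}((K_1\times K_2)^2)$, so for $N$ large enough $\hat{C}_N$ lies in the neighborhood of $C$ on which $\mathbf{F}^{\mathbf{x}}$ is well-defined and continuous (Theorem~\ref{Theo_Frechet}). Continuity of $\Phi^{\mathbf{x}}$ together with the continuous mapping theorem gives
\begin{equation*}
\bigl\|\hat{C}_N-\hat{C}_N^{\mathbf{x}}\bigr\|\;\stackrel{P}{\to}\;\|C-\mathbf{F}^{\mathbf{x}}[C]\|=\delta>0,
\end{equation*}
and multiplication by $\sqrt{N}$ forces divergence to $\infty$ in probability.

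The main obstacle, as I see it, is the well-definedness issue under $H_1$: to invoke continuity and consistency one needs $\hat{C}_N$ to stay (with probability tending to one) inside the open neighborhood on which $\mathbf{F}^{\mathbf{x}}$ is defined; for the \textsf{pr} and $\PCA$ variants this amounts to checking that the non-degeneracy conditions \eqref{Eq_well_def_prod} and \eqref{Eq_well_def_pca} persist under small sup-norm perturbations, which follows from continuity of $A\mapsto A_1^{\mathrm{pr}}$ and of the top spectral gap (as used in the proof of Theorem~\ref{Theo_Frechet}) together with the consistency of $\hat{C}_N$. Apart from this, the argument is a standard Delta-method plus continuous-mapping chain, once the derivative formulas from Appendix~\ref{Sec_AppCont-and-DiffSepaMaps} are in place.
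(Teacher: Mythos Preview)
Your proposal is correct and follows essentially the same route as the paper: apply the functional Delta method (using Theorem~\ref{Theo_CLT-Rv-and-EmpCovOp} for the CLT and Theorem~\ref{Theo_Frechet} for Fr\'{e}chet differentiability) to obtain \eqref{Eq_conv_approx}, then the continuous mapping theorem for \eqref{Eq_conv_H_0}, and finally consistency of $\hat C_N$ and $\hat C_N^{\mathbf{x}}$ for divergence under $H_1$. Your treatment is in fact more careful than the paper's one-paragraph sketch, in that you make explicit the need for $\hat C_N$ to eventually land in the neighborhood where $\mathbf{F}^{\mathbf{x}}$ is well-defined and continuous; the paper simply asserts consistency of $\hat C_N^{\mathbf{x}}$ without spelling this out.
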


\noindent The weak convergence in  \eqref{Eq_conv_approx} follows by a version of the Delta-method (see for instance Section~3.9 in \cite{vdVWelBook96}) applied to the process $\sqrt{N}( \hat{ C }_N  - C )$. 
The statement in \eqref{Eq_conv_H_0} is a direct consequence of this, using the continuous mapping theorem (see Theorem~1.3.6 in \cite{vdVWelBook96}). Finally, divergence under $H_1$ is entailed by consistency of $\hat{ C }_N $ and 
$ 
    \hat{ C }^{\mathbf{x} }_N.
$ 

\noindent Lemma~\ref{Lem_measure_convergence} implies a straightforward test for the hypothesis $H_0$: If the Gaussian process $ G $ (or equivalently its covariance) were known, we could easily approximate the upper $(1-\alpha)$ quantile $q_{ 1 - \alpha}$ of the limiting distribution and reject whenever
$$
         \sqrt{ N }  
        \left\| 
            \hat{ C }_N  - \hat{ C }_N ^\mathbf{x}
        \right\| 
    > 
        q_{ 1 - \alpha }.
$$
    Unfortunately, the covariance $C_G$ of $G$ is unknown and extremely difficult to approximate. Notice that it is defined  on the product space $\mathcal{ C }( ( K_1 \times K_2 )^4)$, making it nearly impossible: either to calculate or save (recall that our discussion is motivated by the intractability of $C$, and  $C_G$ consumes the squared amount of memory space). In the literature on $L^2$-tests, \cite{AstPigTav17,ConKokRei18} have tried to evade this problem by using projection methods. While this approach is viable for functional data with a discrete spatial component, it seems less effective in the case of continuous spatial data (see also our simulations in Section~\ref{Subsec_simulations}).
    As a more practicable and fully functional alternative, we therefore discuss a  multiplier bootstrap for dependent data in the next section.

    \subsection{A multiplier bootstrap}
    \label{SubSec_MultiBootstrap-for-SepaMeas}

\noindent In order to approximate the  distribution of $ \sqrt{ N }  (\hat{ C }_N  - C )$ we propose a multiplier bootstrap for dependent data, which is inspired by the methodology of \cite{BucKoj13}
and  has recently been adapted to functional data by \cite{DetKokAue20}. While various alternative bootstraps exist for  functional data (an important example for the covariance is \cite{PapSap2016}), their validity is usually demonstrated in an $L^2$-framework, making them inapplicable in our setup. \\
We begin our discussion by fixing a number $r \in \N$ of bootstrap replicates. For $1 \le k \le r$, we consider vectors of random weights $(w_{1,N}^{(k)},\ldots,w_{N,N}^{(k)}) \in \R^N$, which are independent of the data $X_1, \ldots, X_N$ and independent across $k$. We assume that each weight-vector follows a multivariate normal distribution. More precisely, the variables $w_{ i , N }^{ ( k ) }$ are supposed to be centered with unit variance, and $(l_N - 1 )$-dependent with 
$$
        \mathbb{ E } [ w_{ i, N }^{ ( k ) } w_{ j, N }^{ ( k ) } ]
    =
        ( 1 - | i - j | / l_N )
$$ for any $|i-j| \le l_N$.
Here $l_N \in \N$ is a bandwidth parameter, comparable to the block-length in a block bootstrap (see, for instance, Theorem~2.1 in \cite{BucKoj13}). We then define the bootstrapped process 
$ B_N^{ ( k ) }$ point-wise as 
\begin{equation} \label{Eq_boots_var}
        B_N^{ ( k ) }(s,t,s',t')
    := 
        \frac{ 1 
        }{ 
        N 
        } 
        \sum_{n=1}^N  w_{n,N}^{(k)} 
        \,
        \left\{ 
            [ X_n ( s, t ) 
            - 
            \overline{ X }_N (s,t) 
            ][ 
            X_n(s',t')  
            - 
            \overline{ X }_N ( s', t' ) ] - 
            \hat{ C }_N (s,t,s',t')
        \right\}
\end{equation}
for $1 \le k \le r$.

\begin{theo}[Multiplier Bootstrap]\label{Theo_MultiplierBootstrap}
	Suppose that Assumption~\ref{Assum_MainAssumBanachCLT} holds and that $l_N$ satisfies both restrictions $l_N \to \infty$ and  $l_N/ \sqrt{ N }  \to 0$. Then for any $r \in \N$ the weak convergence
\begin{equation}  \label{Eq_weak_con_bootstrap}
       \sqrt{ N }  
       \left(
         \hat{ C }_N  - C  ,  B_N^{ ( 1 ) }, \ldots, 
             B _N^{ ( r ) } 
       \right)  
    \stackrel{ d }{ \to }
        \left( 
            G, G^{ ( 1 ) }, \ldots, G^{ ( r ) } 
        \right)
\end{equation} 
	holds, where $G$ is the Gaussian process from Theorem~\ref{Theo_CLT-Rv-and-EmpCovOp} and $G^{ ( 1 ) }, \ldots, G^{ ( r ) }$ are i.i.d.\ copies of $G$.
\end{theo}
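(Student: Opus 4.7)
The plan is to establish joint weak convergence on the product Banach space $\mathcal{C}((K_1\times K_2)^2)^{r+1}$ by verifying the two standard ingredients: joint convergence of the finite-dimensional distributions, and asymptotic tightness of each coordinate process. This follows the template of the multiplier bootstrap for dependent data proposed in \cite{BucKoj13} and extended to continuous-function spaces in \cite{DetKokAue20}, with additional care needed for the non-separable space $\mathcal{C}((K_1\times K_2)^2)$ under the comparatively weak hypotheses of Assumption~\ref{Assum_MainAssumBanachCLT}. As a preliminary reduction I linearise, writing
\[
\sqrt{N}(\hat C_N - C)(s,t,s',t') = \frac{1}{\sqrt{N}}\sum_{n=1}^{N}\bigl\{X_n(s,t)X_n(s',t') - C(s,t,s',t')\bigr\} + R_N(s,t,s',t'),
\]
with $\|R_N\|=o_P(1)$ uniformly, since $\mathbb{E}X_n=0$ and $\overline{X}_N=O_P(N^{-1/2})$ in sup-norm (itself a CLT on $\mathcal{C}(K_1\times K_2)$), and analogously replacing the bracketed product in $B_N^{(k)}$ by the same centred terms, the error being negligible because the weights have mean zero and unit variance and $\hat C_N$ is sup-norm consistent.

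For joint convergence of the finite-dimensional distributions, I fix a grid $\{(s_i,t_i,s_i',t_i')\}_{i=1}^m$ and study the $\mathbb{R}^m$-valued sequence $V_n$ with entries $X_n(s_i,t_i)X_n(s_i',t_i')-C(s_i,t_i,s_i',t_i')$. By Assumption~\ref{Assum_MainAssumBanachCLT}(iii)-(iv), $(V_n)$ is $\alpha$-mixing with the inherited decay and has moments of order $\gamma/2>4$; the condition $a>2\gamma/(\gamma-8)$ is exactly what a strong-mixing CLT (e.g.\ Rio's) needs, giving $N^{-1/2}\sum_n V_n\Rightarrow\mathcal{N}(0,\Sigma)$ with long-run covariance $\Sigma$. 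For the bootstrap coordinates, conditionally on the data the vector $N^{-1/2}\sum_n w_{n,N}^{(k)}V_n$ is centred Gaussian with conditional covariance
\[
\hat\Sigma_N^{(k)} = \frac{1}{N}\sum_{n,n'=1}^N \Bigl(1-\tfrac{|n-n'|}{l_N}\Bigr)_{+}\, V_n V_{n'}^{\top},
\]
which is a Bartlett-kernel long-run variance estimator, consistent for $\Sigma$ under our moment/mixing assumptions together with $l_N\to\infty$ and $l_N/\sqrt{N}\to0$. Independence of the weights across $k$ and from the data then yields the desired $(r+1)$-fold independent Gaussian limit at finite-dimensional level.

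For asymptotic tightness, the compact index set $(K_1\times K_2)^2\subset\mathbb{R}^{2(p+q)}$ has covering numbers of order $\epsilon^{-2(p+q)/\beta}$ under the $\beta$-th power of the Euclidean metric $\rho$. The H\"older assumption delivers
\[
|X_n(s,t)X_n(s',t')-X_n(u,v)X_n(u',v')| \le 2M\,\|X_1\|\,\rho\bigl((s,t,s',t'),(u,v,u',v')\bigr)^{\beta},
\]
whose envelope has finite $J$-th moment by $\mathbb{E}(\|X_1\|^J M^J)<\infty$, while the constraint $J\beta>2(\operatorname{dim}K_1+\operatorname{dim}K_2)+1$ makes the associated Dudley-type chaining integral converge. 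A maximal inequality for $\alpha$-mixing partial sums on $\mathcal{C}$-spaces then yields asymptotic equicontinuity of $\sqrt{N}(\hat C_N-C)$; for $\sqrt{N}B_N^{(k)}$ the same chaining runs conditionally on the data, since given the data the process is a centred Gaussian field whose increment variance is bounded by the squared empirical envelope multiplied by $N^{-1}\sum_n(w_{n,N}^{(k)})^2=O_P(1)$.

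The main obstacle I anticipate is controlling the bootstrap tightness uniformly in $N$ despite the nontrivial $l_N$-dependence of the weights: the triangular correlation is precisely what produces the correct long-run variance in the limit, but prevents direct appeal to the classical multiplier CLT for i.i.d.\ weights. The resolution I envisage is a blocking construction at scale $l_N$, exploiting that weights separated by more than $l_N$ are exactly independent, combined with moment inequalities for sums in $\mathcal{C}$-spaces; the condition $l_N/\sqrt{N}\to 0$ renders the block-approximation error $o_P(1)$, while $l_N\to\infty$ ensures the Bartlett kernel captures the full long-run variance, closing the argument.
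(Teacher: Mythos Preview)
Your overall architecture---linearise to centred products, then prove finite-dimensional convergence plus asymptotic equicontinuity---matches the paper. The treatment of the finite-dimensional distributions differs in an interesting way: you handle the sample-covariance part and the bootstrap parts separately (the former by a mixing CLT, the latter by conditional Gaussianity and consistency of the Bartlett long-run-variance estimator), whereas the paper applies the Cram\'er--Wold device to the full linear combination $N^{-1/2}\sum_i\sum_m \breve Y_{i,m}(a_m+a_m'w_{i,N})$, blocks at a scale $b_N$ with $l_N/b_N\to 0$ and $b_N/\sqrt{N}\to 0$ (so not at scale $l_N$ as you envisage), and invokes a triangular-array CLT on the block sums, with Yoshihara's moment inequality controlling fourth moments. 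Both routes are viable; the paper explicitly notes that the conditional formulation is equivalent.

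The genuine gap is in your bootstrap tightness step. You claim that, conditionally on the data, the increment variance of $\sqrt{N}B_N^{(k)}$ is bounded by the squared empirical envelope times $N^{-1}\sum_n (w_{n,N}^{(k)})^2$. But $N^{-1}\sum_n (w_{n,N}^{(k)})^2$ is not measurable with respect to the data, so it cannot bound a variance conditional on the data. What the conditional variance actually equals is the Bartlett quadratic form $N^{-1}\sum_{n,n'}(1-|n-n'|/l_N)_+\,[V_n(x)-V_n(x')][V_{n'}(x)-V_{n'}(x')]$, and the operator norm of the weight covariance matrix is of order $l_N$, not $O(1)$. A Gaussian chaining based on this conditional metric therefore carries an extra factor $l_N$ that your entropy and moment hypotheses do not absorb. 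The paper avoids this by conditioning the \emph{other} way: given the weights, $\sqrt{N}B_N^{(k)}$ is a weighted partial sum of the $\alpha$-mixing sequence $\breve Y_i$, and Yoshihara's Theorem~3 yields a $J$-th moment increment bound with constant proportional to a power of $\sum_n w_{n,N}^2/N$; taking expectation over the (Gaussian, $l_N$-dependent) weights then gives a uniformly bounded constant, and the chaining via Theorem~2.2.4 of \cite{vdVWelBook96} proceeds exactly as for $\sqrt{N}(\tilde C_N-C)$. The expression you wrote is precisely the one that appears after this correct conditioning---you have simply attached it to the wrong conditional argument.
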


\noindent 
A detailed proof of  Theorem~\ref{Theo_MultiplierBootstrap} is given in the Appendix, but let us briefly  give some arguments  why \eqref{Eq_weak_con_bootstrap} holds. 
First, notice that due to independence of the weights (of the data and among themselves across $k$), we have uncorrelatedness of $B_N^{(k)}$ with $\hat{ C }_N -C$ and with any other $B_N^{(k')}$. Hence, if these objects are (jointly) normal in the limit, they also have to be independent. This leaves open the questions, why $B_N^{(k)}$ is normal and why it has the right covariance structure. On a high level, asymptotic normality of $B_N^{(k)}$ is obvious, as it is a sum of weakly dependent random variables. A closer look reveals that dependence in  \eqref{Eq_boots_var} is governed by the bandwidth $l_N$, which cannot increase too fast for a CLT to hold (this will be reflected in the assumption $l_N = o(\sqrt{N})$). On the other hand, $l_N$ has to diverge for $N \to \infty$, s.t.\ $B_N^{(k)}$ has the same asymptotic covariance as $G$. As $l_N$ grows, neighboring terms in $B_N^{(k)}$ have (almost) identical weights. Consequently, their covariance is (almost) identical to that of the terms in $\hat{ C }_N $, without a multiplier. 

\noindent
By Theorem~\ref{Theo_MultiplierBootstrap} 
the bootstrap variable $B_N^{ ( k )}$ mimics the random fluctuations of the empirical covariance $\hat{ C }_N $ around $C$. Thus, it can help us to approximate the distribution of the separability measure
$ 
        \|
            \hat{ C }_N  
            - 
          \hat{ C }_N ^{ \mathbf{ x } }
        \|;
$
the actual object of interest.
To make this clear, we  use the representation 
$$ 
        \|
            \hat{ C }_N  
            - 
             \hat{ C }_N ^{ \mathbf{ x } }
        \| 
    = 
        \|
            \{\hat{ C }_N - C \} 
            - (
            [ C + \{ \hat{ C }_N  - C \} ]^{ \mathbf{ x } }
            - C)
        \|.
$$
To give a bootstrap approximation, we can replace $\{\hat{ C }_N -C\} $ by $ B_N^{ ( k )}$. Furthermore, since we want to approximate the distribution under the null, we have to replace $C$ by the separable estimator $\hat{ C }_N ^{\mathbf{ x } }$ everywhere else. This gives us the bootstrap version
\begin{equation} \label{Eq_def_bootstrap_statistic}
        B_N^{ \infty, ( k ) }
    := 
        \|
        B_N^{ ( k )} 
        - 
        (
        [\hat{ C }_N ^{\mathbf{ x } } + B_N^{ ( k )}]^{ \mathbf{ x } }
        -
        \hat{ C }_N ^{\mathbf{ x } } )
        \|.
\end{equation}
\begin{lem} \label{Lem_consistency_boots}
    Under the assumptions of Theorem~\ref{Theo_MultiplierBootstrap} it holds that
    \begin{align*}
      \sqrt{N}(B_N^{\infty,  ( 1 )}, \cdots, B_N^{\infty,  ( r )})  \stackrel{d}{ \to}
         \big(\left\| (\operatorname{ Id } - D_C\mathbf{ F }^{ \mathbf{ x } } ) [ G_1 ] 
     \right\|,\cdots,\left\| (\operatorname{ Id } - D_C\mathbf{ F }^{ \mathbf{ x } } ) [ G_r ] 
         \right\| \big)~,
    \end{align*}
     where $G$ is the Gaussian process from Theorem~\ref{Theo_CLT-Rv-and-EmpCovOp} and $G^{ ( 1 ) }, \ldots, G^{ ( r ) }$ are i.i.d.\ copies of $G$.
\end{lem}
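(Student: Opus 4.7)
The plan is to combine the joint weak convergence supplied by Theorem~\ref{Theo_MultiplierBootstrap} with a second application of the functional Delta-method in which the base point of the Taylor expansion is the fixed separable kernel $C$ rather than the random kernel $\hat{C}_N^{\mathbf{x}}$. Since the statement is about the behaviour of the bootstrap under the null, I work under $H_0$, so that $C^{\mathbf{x}} = C$. First, Theorem~\ref{Theo_MultiplierBootstrap} delivers
$$
\sqrt{N}\bigl(\hat{C}_N-C,\ B_N^{(1)},\ldots, B_N^{(r)}\bigr)\stackrel{d}{\to}\bigl(G,\ G^{(1)},\ldots, G^{(r)}\bigr),
$$
with the right-hand side having independent components. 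Combined with Theorem~\ref{Theo_Frechet} and the ordinary Delta-method (Section~3.9 in \cite{vdVWelBook96}) this also yields $\sqrt{N}(\hat{C}_N^{\mathbf{x}}-C)=O_P(1)$, and in particular $\hat{C}_N^{\mathbf{x}}\stackrel{P}{\to} C$, while each $B_N^{(k)}=O_P(1/\sqrt{N})$.

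The core step is a first-order expansion of $\mathbf{F}^{\mathbf{x}}(\hat{C}_N^{\mathbf{x}}+B_N^{(k)})-\mathbf{F}^{\mathbf{x}}(\hat{C}_N^{\mathbf{x}})$ around $C$. Writing $h_N:=\hat{C}_N^{\mathbf{x}}-C$ and $b_N:=B_N^{(k)}$, Fréchet differentiability of $\mathbf{F}^{\mathbf{x}}$ at $C$ gives
$$
\mathbf{F}^{\mathbf{x}}(C+h_N+b_N)-\mathbf{F}^{\mathbf{x}}(C+h_N)=D_C\mathbf{F}^{\mathbf{x}}[b_N]+R_N,
$$
where, by subtracting two Fréchet expansions based at $C$, the remainder obeys $\|R_N\|\le\epsilon(\|h_N+b_N\|)\|h_N+b_N\|+\epsilon(\|h_N\|)\|h_N\|$ for a deterministic function $\epsilon$ with $\epsilon(t)\to 0$ as $t\to 0$. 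Since $h_N, b_N = O_P(1/\sqrt{N})$, the continuity of $\epsilon$ at $0$ yields $\sqrt{N}\,R_N=o_P(1)$, and therefore
$$
\sqrt{N}\, B_N^{\infty,(k)} = \bigl\|(\operatorname{Id}-D_C\mathbf{F}^{\mathbf{x}})[\sqrt{N}\,B_N^{(k)}]\bigr\| + o_P(1).
$$

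The map $H\mapsto\|(\operatorname{Id}-D_C\mathbf{F}^{\mathbf{x}})[H]\|$ is continuous on $\mathcal{C}((K_1\times K_2)^2)$, so applying it coordinatewise to the joint convergence $\sqrt{N}(B_N^{(1)},\ldots,B_N^{(r)})\stackrel{d}{\to}(G^{(1)},\ldots,G^{(r)})$ and invoking Slutsky's theorem yields the claimed limit, with the independence of the components inherited from Theorem~\ref{Theo_MultiplierBootstrap}.

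The main obstacle is controlling the Taylor expansion uniformly over the \emph{random} base point $\hat{C}_N^{\mathbf{x}}$, since Theorem~\ref{Theo_Frechet} only guarantees differentiability at the single fixed kernel $C$. Expanding both $\mathbf{F}^{\mathbf{x}}(\hat{C}_N^{\mathbf{x}}+B_N^{(k)})$ and $\mathbf{F}^{\mathbf{x}}(\hat{C}_N^{\mathbf{x}})$ around the deterministic point $C$ circumvents this: only ordinary Fréchet differentiability at $C$ is needed, and the remainder bound then reduces to the elementary observation that $h\mapsto\|\mathbf{F}^{\mathbf{x}}(C+h)-\mathbf{F}^{\mathbf{x}}(C)-D_C\mathbf{F}^{\mathbf{x}}[h]\|/\|h\|$ vanishes as $\|h\|\to 0$, together with the rate $\|h_N\|,\|b_N\|=O_P(1/\sqrt{N})$.
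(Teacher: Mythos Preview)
Your argument is essentially the one the paper has in mind (the paper does not spell out a proof of this lemma but treats it, like Lemma~\ref{Lem_measure_convergence}, as an immediate consequence of Theorem~\ref{Theo_MultiplierBootstrap} combined with the Delta-method and the continuous mapping theorem). The two-fold Fr\'{e}chet expansion around the deterministic point $C$, together with the $O_P(N^{-1/2})$ rates for $h_N$ and $b_N$, is exactly the right mechanism, and your handling of the remainder is correct.

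There is, however, one step you pass over silently. The bootstrap statistic is
\[
B_N^{\infty,(k)}=\bigl\|B_N^{(k)}-\bigl([\hat C_N^{\mathbf{x}}+B_N^{(k)}]^{\mathbf{x}}-\hat C_N^{\mathbf{x}}\bigr)\bigr\|,
\]
whereas you expand $\mathbf{F}^{\mathbf{x}}(\hat C_N^{\mathbf{x}}+B_N^{(k)})-\mathbf{F}^{\mathbf{x}}(\hat C_N^{\mathbf{x}})$. Equating these requires $\mathbf{F}^{\mathbf{x}}(\hat C_N^{\mathbf{x}})=\hat C_N^{\mathbf{x}}$, i.e.\ that separable kernels are fixed points of $\mathbf{F}^{\mathbf{x}}$. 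This idempotency is true for all three approximation types (it is a short computation for $\rm tr$ and $\rm pr$, and immediate for $\PCA$ since a separable kernel is already rank one in the rearranged sense), but it is not stated in Theorem~\ref{Theo_Frechet} and should be made explicit. Without it, a term $(\operatorname{Id}-D_C\mathbf{F}^{\mathbf{x}})[h_N]$ of order $O_P(N^{-1/2})$ would survive in your expansion and spoil the conclusion; idempotency of $\mathbf{F}^{\mathbf{x}}$ (hence of its derivative $D_C\mathbf{F}^{\mathbf{x}}$ at the fixed point $C$, by the chain rule) is precisely what kills it.

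A second, minor point: you restrict to $H_0$, which is the correct reading of the lemma as written (the limit involves $D_C\mathbf{F}^{\mathbf{x}}$, and this matches the expansion only when $C^{\mathbf{x}}=C$). For the consistency claim that follows the lemma one also needs the bootstrap quantile to be tight under $H_1$; the same argument, expanding around $C^{\mathbf{x}}$ instead of $C$, yields $\sqrt{N}B_N^{\infty,(k)}\stackrel{d}{\to}\|(\operatorname{Id}-D_{C^{\mathbf{x}}}\mathbf{F}^{\mathbf{x}})[G^{(k)}]\|$ under $H_1$, which suffices.
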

\noindent Lemma~\ref{Lem_consistency_boots} theoretically underpins the following test decision: Let
 $\hat q_{1-\alpha}^{(r)}$ be the empirical $(1-\alpha)$ quantile of the vector $( B_N^{ \infty, ( 1 )},\cdots, B_N^{\infty,  ( r )})$. Reject the hypothesis $H_0$ (defined in \eqref{Eq_DefH_0}), if 
\begin{equation} \label{Eq_test_decision}
    \| \hat{ C }_N   - \hat{ C }_N ^{ \mathbf{ x } }\|>\hat q_{1-\alpha}^{(r)}.
\end{equation}
Lemma~\ref{Lem_consistency_boots} implies that the resulting test holds asymptotic level $\alpha$, as $r \to \infty$, that is 
$$
    \lim_{r \to \infty } 
    \lim_{N \to \infty }
    \mathbb{ P }_{H_0}\left\{ 
        \| \hat{ C }_N   
        - 
        \hat{ C }_N ^{ \mathbf{ x } }\| 
        >
        \hat q_{1-\alpha}^{(r)}
    \right\}
= 
    \alpha,
$$
and is consistent as $N \to \infty$ under the alternative $H_1$, that is 
$$
    \lim_{N \to \infty }
    \mathbb{ P }_{H_0}\left\{ 
        \| \hat{ C }_N   
        - 
        \hat{ C }_N ^{ \mathbf{ x } }\| 
        >
        \hat q_{1-\alpha}^{(r)}
    \right\}
= 
1
$$
for any $r \in \mathbb{N}$.

\begin{rem} \label{Rem_RemarkOverBootstrap}\rm \mbox{}
\begin{itemize}
\item[$i)$] There are different ways to mathematically validate a bootstrap procedure. One way is proving convergence of the bootstrap measure (conditional on the data) to the correct limiting distribution. Here, convergence is measured w.r.t.\ some metric on the probability measures, such as the Kolmogorov--Smirnov distance for probability measures on $\mathbb{R}^d$ or the bounded Lipschitz distance on general metric spaces (see \cite{vdVWelBook96} for a definition). This approach of studying conditional distributions is considered in many classical works, such as \cite{Hall1992}. As an alternative, it is possible to derive unconditional convergence results (such as Theorem~\ref{Theo_MultiplierBootstrap} and Lemma~\ref{Lem_consistency_boots}), where it is shown that a number of bootstrapped statistics converge to independent copies of the same limiting distribution. Such results reflect the  need of generating bootstrap repetitions for most practical test decisions. One merit of this approach is its clearer interpretability compared to the abstract notion of convergence on the spaces of measures, w.r.t.\ a difficult metric. Yet, from a theoretical standpoint, the two approaches are in many instances equivalent. In particular, Lemma~\ref{Lem_consistency_boots} implies for the conditional bootstrap measure $\mathbb{P}^{B_N^{(1)}|X_1, \cdots,X_N}$ the convergence in probability
$$
        d \left( \mathbb{P}^{B_N^{(1)}|X_1, \cdots,X_N}, \mathbb{P}^{\|
            \hat{ C }_N  
            - 
             \hat{ C }_N ^{ \mathbf{ x } }\|
        } \right) 
    \overset{\mathbb{P}}{\to} 
        0,
$$
where 
$d$ is the Kolmogorov--Smirnov  metric for probability measures on $\R$ and 
$\mathbb{P}^{\|
            \hat{ C }_N  
            - 
             \hat{ C }_N ^{ \mathbf{ x } }\|}$ denotes the probability
             measure of the test statistic $\|
            \hat{ C }_N  
            - 
             \hat{ C }_N ^{ \mathbf{ x } }\|$. The proof of this assertion follows directly from Lemma~2.3 in \cite{Buecher2019} and a similar, conditional version of Theorem~\ref{Theo_MultiplierBootstrap} can be given by using the techniques in Section~3 of that paper.

    \item[$ii)$] The bandwidth parameter $l=l_N$ in Theorem~\ref{Theo_MultiplierBootstrap} plays a similar role as the block length in a block bootstrap. In the special case of independent data it is not necessary to assume that $l_N \to \infty$ and a choice $l=1$ yields the desired result (for multiplier bootstraps in the independent case, see also Section~3.6 in \cite{vdVWelBook96}). We also want to highlight that our assumption $l_N/ \sqrt{ N }  \to 0$ is weaker than in previous works, where usually a polynomial decay rate of $l_N/\sqrt{N}$ was assumed 
        (see \cite{DetKok21}). 
        \item[$iii)$] The bootstrap test decision presented in this paper can be implemented without ever saving the entire empirical  covariance operator, or any other object of comparable size. In the following, we want to sketch an argument why this is true. Let us therefore focus on the partial trace approximation: To approximate the partial trace approximation $(\hat{ C }_N)^{tr}$, it suffices to calculate the trace of the empirical covariance $\operatorname{ Tr }[\hat{ C }_N]$, as well as the partial traces $(\hat{ C }_N)_1^{tr}$, $(\hat{ C }_N)_2^{tr}$. All of these objects can be calculated directly from the data, as e.g.,
        \begin{equation} \label{eq_calc_traces}
        \operatorname{ Tr } [ \hat{ C }_N]
    =
        \frac{ 1 }{ N } \sum_{ i = 1 }^N
        \int_{K_1}\int_{K_2}
        X_i^2 (s, t ) ds \, dt
\end{equation}
or
\begin{equation} \label{eq_calc_patraces}
        (\hat{ C }_N)_1^{tr} ( s, s' )
    =
        \frac{ 1 }{ N } \sum_{ i = 1 }^N
        \int_{K_2}
        X_i( s, t ) X_i( s', t ) dt.
\end{equation}
In particular, we do not have to save $\hat{ C }_N$ to calculate  $(\hat{ C }_N)^{tr}$. In contrast to $\hat{ C }_N$, the three objects $\operatorname{ Tr }[\hat{ C }_N]$, $(\hat{ C }_N)_1^{tr}$, $(\hat{ C }_N)_2^{tr}$ are small (a discretization takes about as much memory as a data function $X_n$) and hence we assume that they (and thereby $(\hat{ C }_N)^{tr}$) are tractable. So, from now on we assume that these objects are saved. Now, calculating the test statistic $\|
            \hat{ C }_N  
            - 
             \hat{ C }_N ^{ tr}\|$ can be done by maximizing the distance $|
            \hat{ C }_N  
            - 
             \hat{ C }_N ^{ tr }|$ blockwise (it is easy to calculate for example $\hat C_N$ only for certain subsets of the arguments $(s,t,s',t')$ and then evaluate the maximum over all subsets). The calculation of the bootstrap statistic  $B_N^{\infty,  ( k )}$ (defined in  \eqref{Eq_def_bootstrap_statistic}) 
             is slightly more intricate. Here, we have to calculate $[\hat{ C }_N ^{tr }
             + B_N^{ ( k )}]^{ tr} $ (the other objects in $ B_N^{\infty,  ( k )}$ can again be calculated  for subsets of indices in a
             straightforward way).
             Calculating $[\hat{ C }_N ^{tr }+ B_N^{ ( k )}]^{ tr} $ boils down to calculating the (partial) traces of $\hat{ C }_N ^{tr }$ and $B_N^{ ( k )}$
             separately (as the partial traces are linear). For $\hat{ C }_N ^{tr }$ they equal $(\hat{ C }_N)_1^{tr}$, $(\hat{ C }_N)_2^{tr}$ and we have already calculated and saved them before. For $B_N^{ ( k )}$, they are equal to the partial traces of $\frac{ 1 
        }{ 
        N 
        } 
        \sum_{n=1}^N  w_{n,N}^{(k)} \hat C_N$ (so essentially those of $\hat C_N$) and those of $ \frac{ 1
        }{ 
        N 
        } 
        \sum_{n=1}^N  w_{n,N}^{(k)} 
        \,
            [ X_n
            - 
            \overline{ X }_N 
            ]\cdot
            [ 
            X_n
            - 
            \overline{ X }_N  ] $. Notice that we cannot save this later object (it is of the same size as $\hat C_N$). So, as a computational trick, we can express it as the “covariance” of the complex valued data $\sqrt{ \operatorname{sign}(w_{n,N}^{(k)}) }\cdot|w_{n,N}^{(k)}| \cdot [X_n
            - 
            \overline{ X }_N] $. These objects are again small enough to be saved, and from them, we can calculate the partial traces of $ \frac{ 1
        }{ 
        N 
        } 
        \sum_{n=1}^N  w_{n,N}^{(k)} 
        \,
            [ X_n
            - 
            \overline{ X }_N 
            ]\cdot
            [ 
            X_n
            - 
            \overline{ X }_N  ] $ directly (see \eqref{eq_calc_traces} and \eqref{eq_calc_patraces}).  Notice that in practice it is  necessary to take the real part in the end, to eliminate complex valued remainders, due to computational imprecisions.
\end{itemize}
\end{rem}

\section{Finite sample properties} \label{Sec_finite_sample}

\noindent In this section, we study the finite sample performance of our new method. We begin by a simulation study, where we compare the performance of the test with a benchmark procedure from \cite{ConKokRei18}. Subsequently, we apply our test to a dataset of roman language recordings as described and used in \cite{AstPigTav17}.

\subsection{Simulations} \label{Subsec_simulations}

\noindent Following \cite{ConKokRei18}, we generate spatio-temporal data by virtue of a functional $\operatorname{MA}(1)$-process. More precisely, we generate Gaussian processes $e_0, \dots,e_{ N  }$, living on the unit square $[0,1]^2$, with covariance kernel $C$, point-wise defined as
\begin{align} \label{Genitings_sigma}
    C(s,t,s',t')
:=
&
    \frac{1}{(a|t-t'|+1)^{1/2}} 
    \exp\left( -\frac{b^2|s-s'|^2}{(a|t-t'|+1)^c}\right),
\qquad 
    s,s',t,t' \in \left[ 0, 1 \right].
\end{align}
As in \cite{ConKokRei18}, we set $a=3$ and $b=2$ and consider the observations 
\begin{equation*}
    X_n(s,t)
:= 
    \sum_{s'=1}^S \exp\left\{ -b^2 (s-s')^2 \right\} [e_n(t, s')+e_{n-1}(t, s')] 
\quad 
    n=1, \cdots, N. 
\end{equation*}
Notice that for $c=0$  the kernel $C$ in \eqref{Genitings_sigma} factorizes into purely spatial and temporal components. By implication, the covariance of $X_n$ is separable for $c=0$ (the hypothesis). On the other hand, if $c>0$ the kernel $C$ is not separable and this inseparability is inherited by the covariance of $X_n$. We hence generate data under $H_0$ by setting $c=0$ and under the alternative by setting $c=1$. As in \cite{ConKokRei18}, we discretize the time component $t$, by dividing the unit-interval into $T = 50$ equidistant points ($1/T, 2/T, \ldots, 50/T$). For the spatial component,  \cite{ConKokRei18} use a similar discretization ($1/S, 2/S, \ldots , (S-1)/S$), but for smaller numbers of gridpoints with 
$S = 4,6,8,10,12,14$. Considering larger $S$ is problematic for the procedure in \cite{ConKokRei18}, which relies on the estimation of the asymptotic covariance operator of $\sqrt{N}(\hat{ C }_N - C )$ - an expensive and difficult undertaking in high dimensions (we touched this point in our discussion at the end of Section~\ref{SubSec_AsympTestSepa}). To implement their procedure, \cite{ConKokRei18} rely on dimension reduction, for $S \le 8$ in time, and for $S>8$ in both space and time. As we might expect, such projections entail deteriorating power for larger $S$ as the amount of variance explained decreases. In our study, we use the same number of gridpoints (to allow meaningful comparisons) but also study the larger sizes of $S=20$ and $S=30$. As sample sizes, we consider $N=50, 100, 150, 200$ and as corresponding block-lengths $l_N = 2,2,3,4$. The number of bootstrap repetitions for the generation of the empirical quantile is fixed at $400$ and the nominal level at $\alpha=5\%$. All reported results are based on 1000 simulation runs. \\ In Table~\ref{Table_1} we report empirical rejection probabilities under the hypothesis of separability ($c=0$) and the alternative ($c=1$). In brackets, we include the rejection probabilities reported in \cite{ConKokRei18} (whenever available), where we have chosen each time the maximum number of projection parameters (in time for $S \le 8$ and in space and time for $S>8$), which produced the most powerful results (see their Tables~I-IV for a full picture of performance under variation of the projection parameters). 

\begin{table}[H]
{\footnotesize
\begin{center}
\begin{tabular}{|*{9}{c|}}
\hline  \multicolumn{1}{|c|}{   }  &
\multicolumn{4}{c|}{ rejection probability under $H_0$} & \multicolumn{4}{|c|}{  rejection probability under $H_1$ } \\  \hline
 $   S$ &   $N=50$ &   $N=100$ &    $N=150$ &   $N=200$&   $N=50$ &    $N=100$ &    $N=150$ &   $N=200$   \\ \hline 
 $4$ & $3.3$ & $5.8$ $(5.0)$ & $5.2$ $(5.3)$ & $5.4 $ $(5.5)$ & $36.0$ & $91.0$ $(95.1)$&$99.2$ $(99.8)$&$100.0$ $(100.0)$\\ \hline 
 $6$ & $2.5$ & $6.0$ $(5.3)$ & $5.2$ $(6.6)$&$4.5 $ $(5.1)$ &$40.5$ & $92.8$ $(89.0)$ &$99.1$ $(99.3)$ &  $100.0$ $(100.0)$\\ \hline 
 $8$ & $3.75$ & $6.5$ $(7.5)$ & $4.2$ $(4.7)$& $5.1 $ $(5.7)$&$34.8$ &$92.3$ $(85.2)$ &$98.0$ $(98.7)$&$99.9$ $(100.0)$\\ \hline 
 $10$ &  $2.75$ &$7.6$ $(4.7)$&$3.4$ $(5.3)$ &$3.7 $ $(5.8)$ & $41.8$& $90.6$ $(84.7)$& $98.6$ $(98.5)$ &$99.9$ $(100.0)$\\ \hline 
 $12$ &$1.5$ & $ 7.2$ $(5.0)$ & $ 3.1$ $(6.0)$&$5.2 $ $(5.7)$ & $37.8$ & $90.4 (82.9)$ & $99.0$ $(97.8)$&$99.9$ $(100.0)$\\ \hline 
 $14$ & $2.0$ & $ 7.0$ $(4.6)$ &$ 5.3$ $(5.7)$ &$3.7 $ $(5.6)$ & $33.5 $ & $92.5$ $(78.9)$ & $98.6$ $(93.7)$&$100.0$ $(95.7)$\\ \hline 
 $20$ & $3.4$ &$5.2$ $\qquad$ & $5.7$ $\qquad$  & $3.80$ $\qquad$ & $38.0$ &$90.3$ $\qquad$ &$98.3$ $\qquad$ &$99.9$ $\qquad$\\ \hline 
 $30$ &$2.7$ &$7.0$ $\qquad$ & $5.3$ $\qquad$& $3.61$ $\qquad$ & $36.1$ &$87.4$ $\qquad$ &$98.2$ $\qquad$ &$99.9$ $\qquad$\\ \hline 
\end{tabular}
$ $\\
\caption{\label{Table_1}\textit{
 Empirical rejection probabilities of the bootstrap test \eqref{Eq_test_decision} under the hypothesis ($c=0$) and the alternative ($c=1$). Benchmark values from the test of \cite{ConKokRei18} are given in brackets if available.}}
\end{center}
}
\end{table}

\noindent Our results in Table~\ref{Table_1} attest a satisfactory performance of the bootstrap test. The nominal level is reasonably approximated for $N \ge 100$, while for $N=50$ the test is somewhat conservative.  The power of the bootstrap test is high in most scenarios. While for small values of $S$, the benchmark test fares slightly better, the bootstrap's performance does not deteriorate for larger $S$, where it clearly outperforms the benchmark. Even raising $S$ to $20$ or $30$ does not impinge 
on performance in any systematic way (exactly what a theory for continuous processes would suggest). Computationally, the bootstrap is particularly user-friendly: It allows a straightforward parallelization in the generation of bootstrap samples and is hence easy to implement. We have run our simulations on a standard desktop computer (3.2 GHz Apple M1 Pro, Octa-core, 16 GB RAM) and any test evaluation needed less than a minute (for $S \le 10$ less than $2$ seconds), which underpins the practicability of this approach.
The subsequent data example was run on the same machine for even larger values of $T$ and $S$.

\subsection{A data example}
\noindent    We apply our test to the \textit{acoustic phonetic} dataset of acoustic (log-)spectograms of \cite{AstPigTav17}. A brief discussion with additional references can be found in  Section~4.2 of their paper. 
    Both the raw and preprocessed data together with detailed descriptions are contained in the file \path{Acoustic_Data_And_Code.zip} available from \href{https://rss.onlinelibrary.wiley.com/hub/journal/14679876/series-c-datasets/67_5}{Series C datasets of Volume 67} (\path{https://rss.onlinelibrary.wiley.com/hub/journal/14679876/series-c-datasets/67_5}). The data consist of recordings of spoken words (in this case the numbers one to ten) in five different roman languages. For statistical use they have been transformed in the form of acoustic log-spectograms. For our purposes only the preprocessed data are used (namely the files \path{WarpedPSD.RData} and \path{SVRF_WarpedPSD_SuppMat.RData}). There are in total 219 data "functions" in a frequency-time domain as described in Sections~2-4 of \cite{PigHadColAst18}.

\noindent As already indicated by the analysis in  \cite{PigHadColAst18} the null hypothesis~\eqref{Eq_DefH_0} of separability seems to be violated in this instance. Before applying our test statistic we look at relative measures of separability. More concretely, for the separable trace approximation, the relative measure is given by 
$$
    \frac{ \| \hat{ C }_N 
        - 
        \mathbf{ F }^{ \mathbf{ tr } } 
        ( \hat{ C }_N ) \|  
    }{
    \| \hat{ C }_N \|
    }.
$$

 \noindent   
Indeed, as a preliminary analysis, we found that the relative measure of separability for the language covariance operators are rather high, 
see Table~\ref{Table_RelMeasSepaLanguages}.

\begin{table}[H]
    \centering
    \begin{tabular}{>{\centering}m{2cm}||c|c|c|c|c }
        & {\bf French } & {\bf Italian } & {\bf Portuguese  } &  {\bf American Spanish } & {\bf Iberian Spanish } \\ \hline
    {\bf Relative measure } & 0.631 & 0.895 & 0.868 & 0.947 & 0.882 \\
    \end{tabular}
    \caption{\it The relative measure of separability w.r.t.\ the trace approximation of the five roman languages.}
    \label{Table_RelMeasSepaLanguages}
\end{table}

\noindent    For every of the five languages we ran our bootstrap statistic of $1,000$ repetitions on the residual (i.e.\ centered) surface data for each language separately. As a result the hypotheses of separability for every of these languages is rejected with a highly significant $p-$value of less than $0.1 \%$. See the first row of Table~\ref{Table_P-valuesDataExample} for the p-values of our test. For the sake of completeness the second and third row contain the p-values obtained in \cite{AstPigTav17} for 2 frequency and 3 time dimensions and 8 frequency and 10 dimensions, respectively.
     Our method provides several advantages, it is free of any tuning parameter, in contrast to the Studentized version of the empirical bootstrap of \cite{AstPigTav17}, who have to choose additional parameters of dimensions of eigendirections (note that their test can  only detect deviations from  
     separability along those eigendirections). 
    Moreover, when too few  eigendirections are chosen (see Section~4.2 of \cite{AstPigTav17}) the hypotheses of separability cannot be rejected at the $5\%$-level. 
    Secondly, in order to calculate the bootstrap we do only need to save the data, not the whole covariance operator, hence avoiding storage problems.
    
\begin{table}[H]
    \centering
    \begin{tabular}{>{\centering}m{2.5cm}|| c | c | c | c | c  }
        & {\bf French } & {\bf Italian } & {\bf Portuguese  } &  {\bf American Spanish } & {\bf Iberian Spanish } \\ \hline
    {Test  \eqref{Eq_test_decision}  } & <0.001 & <0.001 & 0.001 & <0.001 & <0.001 \\
    {Emp.\ stud.\ test }  $(2,3)$ & 0.078 & 0.197 & 0.022 & 0.360 & 0.013 \\
    { Emp.\ stud.\  test } $(8,10)$ & 0.001 & 0.002 & 0.001 & 0.001 & <0.001 \\
    \end{tabular}
    \caption{\it  $p-$values of three different bootstrap tests of five Roman languages. First row: the test  \eqref{Eq_test_decision} proposed in this paper. 
    Second and third row: the  studentized version of the empirical bootstrap test proposed by \cite{AstPigTav17}
 with frequencey and time dimensions $(2,3)$ and $(8,10)$ respectively.}
    \label{Table_P-valuesDataExample}
\end{table}

\bigskip 

\noindent
\textbf{Acknowledgements.} 
This work  was partially supported by the  
 DFG Research unit 5381 {\it Mathematical Statistics in the Information Age}.

\bibliographystyle{chicago} 
\bibliography{main}

\newpage

\appendix

\section{Proof of Theorem \ref{Theo_Frechet}}
\label{Sec_AppCont-and-DiffSepaMaps}

\noindent  In this section, we investigate the differentiability of the approximation maps $( \cdot )^{ \rm tr  }$, $( \cdot )^{ \rm pr }$, $( \cdot )^{ \PCA }$. We adapt  results from  \cite{DetDieKut21} (Theorem~3.4), where differentiability of the approximations on the space of $L^2$-functions is shown. 
Since the continuous functions form a subspace of $L^2$, endowed with a stronger norm, the Fr\'{e}chet-differentials (for each map) have to coincide on both spaces (if they exist). Therefore, it only remains to show that the ``differential quotients'' converge in the space of continuous functions. Notice that we do not establish positive-semi definitness for any of the approximations, which is well known in the literature (for partial traces see Lemma 2.4 in \cite{Filipiak2018} and for the other approximation types the discussion in \cite{MasSarPan20}).

    \subsection{Differentiability of the partial trace}

    Recall the definition of the partial trace kernels in \eqref{Eq_def_partial_trace_factors} and the partial product approximation in \eqref{Eq_TraceApprox}.

\begin{lem}\label{Lem_TracesContSupNorm}
	Let $K_1 \subset \R^p$, $K_2 \subset \R^q$ be compact sets. Then the (partial) trace operators  
\begin{align*}
    \rm tr  &: \mathcal{ C } ( (K_1 \times K_2 )^2 ) \to \R
    \\
    \rm tr _i &: \mathcal{ C } ( (K_1 \times K_2 )^2 ) \to \mathcal{ C } ( K_i^2 )
\end{align*}
    defined in Section~\ref{Subsubsect_PartTrace}
    are continuous.
\end{lem}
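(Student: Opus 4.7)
The plan is to exploit the fact that all three maps are linear, so that continuity is equivalent to boundedness between Banach spaces. Thus it suffices to obtain uniform operator-norm bounds of the form $\|\operatorname{tr}_i A\| \le c_i \|A\|$ and $|\operatorname{tr} A| \le c \|A\|$, with constants depending only on the Lebesgue measures of $K_1$ and $K_2$.

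First I would treat the full trace. The bound
\[
|\operatorname{tr} A| = \Bigl|\int_{K_1}\int_{K_2} A(u,w,u,w)\,du\,dw\Bigr| \le \lambda(K_1)\lambda(K_2) \, \|A\|
\]
is immediate, since the integrand is bounded in absolute value by $\|A\|$ on the compact set $K_1 \times K_2$ (where $\lambda$ denotes Lebesgue measure). This already proves continuity of $\operatorname{tr}$.

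Next I would turn to the partial traces, illustrating on $\operatorname{tr}_1$; the argument for $\operatorname{tr}_2$ is symmetric. The task splits into two parts: (a) verifying that $A_1^{\rm tr}$ actually lies in $\mathcal{C}(K_1^2)$, and (b) controlling its supremum norm. For (a), observe that $A$ is uniformly continuous on the compact product $(K_1 \times K_2)^2$, so for any $\varepsilon > 0$ there exists $\delta > 0$ such that $|A(s,w,s',w) - A(\bar s,w,\bar s',w)| < \varepsilon$ uniformly in $w \in K_2$ whenever $\|(s,s') - (\bar s,\bar s')\| < \delta$. Integrating over $K_2$ yields
\[
|A_1^{\rm tr}(s,s') - A_1^{\rm tr}(\bar s,\bar s')| \le \lambda(K_2)\,\varepsilon,
\]
which gives continuity of $A_1^{\rm tr}$ on $K_1^2$. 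For (b), pulling the supremum inside the integral gives at once $\|A_1^{\rm tr}\| \le \lambda(K_2)\|A\|$, and this operator bound establishes continuity.

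I do not anticipate any real obstacle: everything is linear, and the only mild point is verifying that the partial-trace output is continuous, which follows routinely from uniform continuity of the input on the compact domain. No measurability or integrability issues arise because the integrands are continuous on compact sets. The explicit constants $\lambda(K_1)\lambda(K_2)$ and $\lambda(K_{3-i})$ could also be recorded for later use in bounding Fréchet derivatives, but that is not needed for the statement itself.
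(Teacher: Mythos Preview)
Your proposal is correct and follows essentially the same approach as the paper: both exploit linearity and bound the operators via $|{\rm tr}\,A| \le |K_1|\,|K_2|\,\|A\|$ and $\|A_i^{\rm tr}\| \le |K_{3-i}|\,\|A\|$. If anything, you are slightly more thorough than the paper, which simply states that ``analogous arguments'' handle the partial traces and does not explicitly verify that $A_i^{\rm tr}$ lands in $\mathcal{C}(K_i^2)$; your uniform-continuity argument in part~(a) fills that small gap.
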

\begin{proof}
	The proof follows by elementary calculations. For $A \in \mathcal{ C } ( ( K_1 \times K_2 )^2)$ we can upper bound the integral
\begin{equation*}
		| {\rm tr}  [ A ]  |
	=
		\left| 
		\int_{  K_1 \times K_2  } A( u, w, u, w )  du \, dw   
		\right|
	\le
		| K_1 |\cdot | K_2 | \cdot \| A \|. 
\end{equation*} 
    Here,
    $|K_i|$ denotes the Lebesgue measure of $K_i$ in the appropriate dimension.
	 Analogous arguments can be used for the continuity of 
	 $\rm tr _1, \rm tr _2$ using their  definitions in Section~\ref{Subsubsect_PartTrace}.

\end{proof}

\begin{theo}\label{Theo_FrechDiffTr}
	Let $K_1 \subset \R^p$, $K_2 \subset \R^q$ be compact sets.
	Then, the map 
$
    \mathbf{ F }^{ \rm tr }
$ 
    defined in  \eqref{Eq_TraceApprox} is Fr\'{e}chet differentiable in any non-zero covariance kernel $C$. 
\end{theo}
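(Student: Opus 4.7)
The approach is to exhibit $\mathbf{F}^{\rm tr}$ as a composition of maps whose Fréchet differentiability in the sup-norm setting is either immediate or follows from classical facts, and then apply the chain rule. Since $\mathcal{C}((K_1 \times K_2)^2)$ is continuously embedded in $L^2((K_1\times K_2)^2)$ with the $L^2$-norm dominated by a constant multiple of the sup-norm, the derivative (if it exists in the sup-norm setting) must agree with the $L^2$-derivative already computed in \cite{DetDieKut21}. Thus the real task is to verify that the remainder from a first-order expansion tends to zero in sup-norm, and my plan is to do this through the explicit composition rather than a remainder estimate in $L^2$ that we would have to upgrade.

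The first step is to observe that the partial-trace maps
$\mathbf{F}_i^{\rm tr} : \mathcal{C}((K_1\times K_2)^2) \to \mathcal{C}(K_i^2)$
and the full trace $\operatorname{tr}: \mathcal{C}((K_1\times K_2)^2) \to \R$ are \emph{linear} and (by Lemma~\ref{Lem_TracesContSupNorm}) \emph{bounded}, hence Fréchet differentiable everywhere with derivatives equal to themselves. Second, the bilinear product map
$$
\Pi : \mathcal{C}(K_1^2) \times \mathcal{C}(K_2^2) \to \mathcal{C}((K_1\times K_2)^2),\qquad \Pi(f,g)(s,t,s',t') := f(s,s')g(t,t')
$$
is continuous (it satisfies $\|\Pi(f,g)\| \le \|f\|\,\|g\|$) and bilinear, hence Fréchet differentiable with derivative $D_{(f_0,g_0)}\Pi(f,g) = \Pi(f,g_0) + \Pi(f_0,g)$. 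Third, the scalar inversion $r : \R\setminus\{0\} \to \R$, $r(x) = 1/x$ is smooth.

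The third step is then an application of the chain rule. Writing
$$
\mathbf{F}^{\rm tr}[A] = \Pi\bigl(\mathbf{F}_1^{\rm tr}[A],\mathbf{F}_2^{\rm tr}[A]\bigr)\cdot r\bigl(\operatorname{tr}[A]\bigr),
$$
we need only argue that a small neighborhood of $\tilde A = C$ on which $\operatorname{tr}[A]\neq 0$ exists. Since $C$ is a non-zero covariance kernel, $C(u,w,u,w) = \operatorname{Var}(X(u,w)) \ge 0$ and not identically zero, so $\operatorname{tr}[C] > 0$; continuity of $\operatorname{tr}$ in the sup-norm gives an open neighborhood of $C$ on which $\operatorname{tr}[A]$ stays bounded away from zero, ensuring $\mathbf{F}^{\rm tr}$ is well-defined there. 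The chain rule then yields Fréchet differentiability in sup-norm at $C$, with
$$
D_C \mathbf{F}^{\rm tr}[H] = \frac{\Pi(H_1^{\rm tr},C_2^{\rm tr}) + \Pi(C_1^{\rm tr},H_2^{\rm tr})}{\operatorname{tr}[C]} - \frac{\Pi(C_1^{\rm tr},C_2^{\rm tr})}{\operatorname{tr}[C]^2}\,\operatorname{tr}[H].
$$

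\textbf{Main obstacle.} There is no really deep obstruction: the argument is a routine chain-rule application once linearity/boundedness of the (partial) trace operators in sup-norm has been recorded, which is precisely the content of Lemma~\ref{Lem_TracesContSupNorm}. The mildly delicate point is confirming that the neighborhood of $C$ produced above---inside which the denominator is non-vanishing and the composition makes sense---can be chosen uniformly, which follows because $\operatorname{tr}$ is a continuous linear functional on $\mathcal{C}((K_1\times K_2)^2)$ and $\operatorname{tr}[C]>0$ by strict positivity of the variance for a non-zero covariance kernel. The consistency with the $L^2$-derivative from \cite{DetDieKut21} is then automatic via the continuous embedding $\mathcal{C} \hookrightarrow L^2$.
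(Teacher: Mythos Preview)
Your proof is correct and follows the same overall strategy as the paper: write $\mathbf{F}^{\rm tr}$ as a composition of simple maps and apply the chain rule, using Lemma~\ref{Lem_TracesContSupNorm} for boundedness of the (partial) traces. The only difference is in how the composition is cut up. The paper decomposes $\mathbf{F}^{\rm tr} = \mathbf{H}^{\rm tr}\circ\mathbf{G}^{\rm tr}$ with $\mathbf{G}^{\rm tr}[A]=(A_1^{\rm tr}/\operatorname{tr}[A],\,A_2^{\rm tr})$ and $\mathbf{H}^{\rm tr}(F,G)=F\cdot G$; because the first component of $\mathbf{G}^{\rm tr}$ is nonlinear, the paper then carries out an explicit $\mathcal{O}(\|H\|^2)$ remainder calculation for that component. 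Your decomposition pulls the scalar $1/\operatorname{tr}[A]$ outside, so every building block is either bounded linear, bounded bilinear, or the smooth map $x\mapsto 1/x$ on $\R\setminus\{0\}$; no explicit remainder estimate is needed. This buys you a cleaner and slightly more abstract argument, while the paper's explicit calculation has the advantage of displaying the derivative in a form directly comparable to the $L^2$ expression from \cite{DetDieKut21}. The resulting derivative formula you state agrees with the paper's.
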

\begin{proof}
	
	The Fr\'{e}chet differentiability of 
$
    \mathbf{ F }_{ i }^{ \rm tr } 
= 
    \mathbf{ H }^{ \rm tr }\circ \mathbf{ G }^{ \rm tr  }, 
$  
    follows by the chain rule from the differentiability of the two maps
$$
	\mathbf{ G }^{  \rm tr  }:  
		\mathcal{ C } ( ( K_1 \times K_2 )^2  ) 
	\to 
		\mathcal{ C } ( K_1^2    ) \times \mathcal{ C } (  K_2^2   )  
\quad
	\text{ and } 
\quad
	\mathbf{ H }^{ \rm tr } : 
		\mathcal{ C } ( K_1^2    ) \times \mathcal{ C } (  K_2^2   ) 
	 \to 
		 \mathcal{ C } ( ( K_1 \times K_2 )^2 )
$$
    defined (point-wise) as
\begin{align*}
&\left(
    \begin{array}{c} 
        \mathbf{ G }_1^{ \rm tr }[ T](s,s') 
    \\
        \mathbf{ G }_2^{ \rm tr }[ T](t,t') 
    \end{array}
	\right) 
=   
    \left(
        \begin{array}{c} 
		     \dfrac{ \int_{ K_2 } T( s, w, s', w )  dw   
    		 }{ 
    		 \int_{ K_1 \times K_2 } T( u, w, u, w ) du \, dw 
	    	 }
		\\
		    \int_{ K_1 } T( u, t, u, t' ) du 
		 \end{array}
	\right) \\[1ex]
& 
        \mathbf{ H }^{ \rm tr } 
        \left[ 
          \left(
            \begin{array}{c} 
	    	    F 
            	\\
	    	     G
	    	\end{array}
		\right) 
	\right](s,t,s',t') 
	= 
	    F( s, s') G( t, t').
\end{align*}
Using (the proof of) Theorem~3.4 in \cite{DetDieKut21}, we claim that the Fr\'{e}chet differential of $ \mathbf{ G }^{ \rm tr }$ in $C$, i.e., $D_{ C } \mathbf{ G }^{ \rm tr } [T]( s, t, s', t' )$, 
can be point-wise expressed as 
$$
		\left(\begin{array}{c} 
			
			\dfrac{ \int_{ K_2 } T( s, w, s', w ) dw    
		 }{ 
		 \int_{ K_1 \times K_2 } C( u, w, u, w ) du  dw 
		}
			- 
			\dfrac{ \int_{ K_1 \times K_2 } T( u, w, u, w ) du  dw
			\,
			\int_{ K_2 } C( s, w, s', w ) dw    
		 }{ 
		 \left( 
		    \int_{ K_1 \times K_2 } C( u, w, u, w ) du  dw 
		  \right)^2
		 }	
		\\ 
			 \int_{ K_1 } T( u, t, u, t' )  du 
		  \end{array}  
		\right) 
$$
Boundedness of the map $D_{ C } \mathbf{ G }^{ \rm tr } $ follows directly from the continuity of $\rm tr , \rm tr _1, \rm tr _2$ (see Lemma~\ref{Lem_TracesContSupNorm}). Now, we verify that 
$
D_{ C } \mathbf{ G }^{ \rm tr } 
$ 
is indeed the differential. Since the second component is linear and continuous in the sup-norm it must be its own differential. Hence it is enough to consider the first component.
For this purpose, let $ H \in \mathcal{ C } ( ( K_1 \times K_2 )^2 )$ such that $\| H \| \to 0$. Notice that for $H$ sufficiently small, all (partial) traces in the subsequent objects are well-defined. 
A simple calculation yields the representation
\begin{align*}
&	
    \mathbf{ G }_1^{ \rm tr } [ C + H ]( s, t, s', t' ) 
    -  \mathbf{ G }_1^{ \rm tr } [ C ](s, t, s', t') 
    - D_{ C } \mathbf{ G }_1^{ \rm tr }  [ H ](s, t, s', t')
    \\
=
&
	    \frac{ - \left( \int_{ K_2 } H( s, w, s', w ) dw \right)
	    \, \left( \int_{ K_1 \times K_2 } H( u, w, u, w ) du \, dw \right)
		}{
	 	\left( \int_{ K_1 \times K_2 } ( C + H )( u, w, u, w ) du \, dw  \right)
	 	\,
	 	\left( \int_{ K_1 \times K_2 } C( u, w, u, w ) du \, dw \right)
	 	}
	\\
	+&
		\frac{  
		\left(
		\int_{ K_1 \times K_2 } H( u, w, u, w ) du \, dw
		  \right)^2
		\,
		\int_{ K_2 } C( s, w, s', w ) dw 
		}{
		 \left( \int_{ K_1 \times K_2 } ( C + H )( u, w, u, w ) du \, dw \right)
		\,
		\left(
		\int_{ K_1 \times K_2 } C( u, w, u, w ) du \, dw
		  \right)^2 
	}.
\end{align*}
    Arguing as in the proof of Lemma~\ref{Lem_TracesContSupNorm} above one easily sees that
 both terms on the right are uniformly of order $\mathcal{O} \left( \| H \|^2 \right)$, which shows the desired property of Fr\'{e}chet differentiability.
Next we turn to the differentiability of the map $\mathbf{ H }^{ \rm tr }$, where again (the proof of) Theorem~3.4 in \cite{DetDieKut21} suggests the following candidate for  a differential  
$$
		D_{ (  L_1  , L_2  ) } \mathbf{ H }^{ \rm tr }  \left[
		\left(\begin{array}{c} 
			F 
		\\ 
			 G 
		  \end{array}  \right) \right] (s, t, s', t' )= 
	    L_1( s, s' ) G( t, t' ) + F( s, s' ) L_2( t, t' )
$$
with 
$( L_1  , L_2 )^t := \mathbf{ G }^{ \rm tr } [ C ] $. Proving that  $D_{ (  L_1  , L_2  ) } \mathbf{ H }^{ \rm tr } $ is indeed the Fr\'{e}chet differential follows by similar, but simpler calculations as in  the proof of the differentiability  of $\mathbf{ H }^{ \rm pr }$ in Theorem~\ref{Theo_FrechDiffPr} (below) and is therefore omitted.

\end{proof}

    \subsection{Differentiability of the partial product} 
    \label{Subsec_AppCont-and-Diff-PartProd}

    Recall the definition of the partial product kernels in \eqref{Eq_def_partial_prod_factors} and the partial product approximation in \eqref{Eq_ProdApprox}.

\begin{lem}
	For any $\psi \in \mathcal{ C } ( K_2^2 )$, 
	the  partial product operators as defined in Section~\ref{Subsubsec_PartProd} are continuous.
\end{lem}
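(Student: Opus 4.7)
The plan is to verify continuity of each of the three partial product maps $\mathbf{F}_1^{\rm pr}$, $\mathbf{F}_2^{\rm pr}$ and $\mathbf{F}^{\rm pr}$ by straightforward sup-norm estimates, in the same spirit as Lemma~\ref{Lem_TracesContSupNorm}. Since $\mathbf{F}_1^{\rm pr}$ is linear in $A$, $\mathbf{F}_2^{\rm pr}$ is quadratic in $A$, and $\mathbf{F}^{\rm pr}$ is a rational expression in these two, each verification reduces to bounding the relevant integrals by a power of $\|A\|$ and checking continuity of the resulting kernels in their free arguments using uniform continuity on compact sets.

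First I would handle $\mathbf{F}_1^{\rm pr}: A \mapsto A_1^{\rm pr}$. Continuity of $(s,s') \mapsto A_1^{\rm pr}(s,s')$ follows from uniform continuity of $A$ on the compact set $(K_1 \times K_2)^2$ together with the boundedness of $\psi$, via dominated convergence inside the defining integral \eqref{Eq_def_partial_prod_factors}. As $\mathbf{F}_1^{\rm pr}$ is linear in $A$, operator continuity then reduces to the uniform bound $\|A_1^{\rm pr}\| \le |K_2|^2 \, \|\psi\| \, \|A\|$. For $\mathbf{F}_2^{\rm pr}$, continuity of the kernel $(t,t') \mapsto A_2^{\rm pr}(t,t')$ follows from the same uniform continuity argument combined with the continuous dependence of $A_1^{\rm pr}$ on $A$ established in the previous step. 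The map is bilinear in $A$ with the bound $\|A_2^{\rm pr}\| \le |K_1|^2 \, |K_2|^2 \, \|\psi\| \, \|A\|^2$, and continuity as an operator then follows by the standard bilinear splitting $xy - ab = (x-a)y + a(y-b)$ applied pointwise inside the outer integral.

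Finally, for $\mathbf{F}^{\rm pr}$ the output is the pointwise product $A_1^{\rm pr}(s,s') \, A_2^{\rm pr}(t,t')$ divided by the scalar quantity $\int_{K_1^2} A_1^{\rm pr}(u,u')^2 \, du \, du'$. The numerator is a continuous function of $A$ in the sup-norm, using that the product of a kernel in $\mathcal{C}(K_1^2)$ and one in $\mathcal{C}(K_2^2)$ lies in $\mathcal{C}((K_1 \times K_2)^2)$ with sup-norm equal to the product of the factor norms. The denominator is a continuous scalar functional of $A$ that is strictly positive by assumption \eqref{Eq_well_def_prod}, hence remains bounded away from zero on an open neighborhood of the given $A$. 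On that neighborhood the quotient rule for continuous Banach-valued maps yields continuity of $\mathbf{F}^{\rm pr}$. The only mildly delicate point, though still routine, is securing this local non-vanishing of the denominator; everything else reduces to uniform continuity on compact sets and elementary sup-norm estimates.
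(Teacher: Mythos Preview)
Your proof is correct and follows essentially the same approach as the paper: linearity plus boundedness for $\mathbf{F}_1^{\rm pr}$, and the bilinear splitting $xy-ab=(x-a)y+a(y-b)$ for $\mathbf{F}_2^{\rm pr}$. The paper's lemma and its proof actually cover only the two marginal maps $\mathbf{F}_1^{\rm pr}$ and $\mathbf{F}_2^{\rm pr}$; continuity of the full approximation $\mathbf{F}^{\rm pr}$ is obtained there only as a by-product of the subsequent Fr\'echet differentiability theorem, whereas you treat it directly via the quotient argument---this is a harmless addition. You also explicitly verify that the output kernels are continuous (i.e., that the codomains are correct), which the paper leaves implicit; again this is fine and in the same spirit.
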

\begin{proof}
	The proof is trivial for the linear operator $\mathbf{ F }^{ \rm pr }_1$, which is evidently bounded. In order to show continuity of $\mathbf{ F }^{ \rm pr }_2$, let $A, H \in \mathcal{ C } ( K_2^2 )$ 
	with $\| H  \| \to 0$. Then,
	by definition of its kernel we have
\begin{align} \label{Eq_decomposition_par_prod}
    & 
        \int_{ K_1^2 }  H ( u, t , u', t' ) ( A + H )_1^{ \rm pr }( u, u' ) du \, du' 
        -  
        \int_{ K_1^2 }  A( u, t, u', t' ) 
            \left( ( A + H )_1^{ \rm pr }( u, u' )- A_1^{ \rm pr }( u, u' ) \right) 
        du \, du' 
    \\
    =& 
        \int_{ K_1^2 } \left( 
            H( u, t , u', t' ) H_1^{ \rm pr }( s, s' ) ( u, u' ) 
            +
            H( u, t , u', t' ) A_1^{ \rm pr }( u, u' ) 
            +
            A( u, t , u', t' ) H_1^{ \rm pr }( u, u' ) 
        \right) du \, du'
    \nonumber
\end{align}
    where we used linearity of $\mathbf{ F }^{ \rm pr }_1$ in the second equality. 
    The first integral is of order $\mathcal{O}(\| H \|^2)$ where we have used the (Lipschitz) continuity of the linear map $\mathbf{ F }^{ \rm pr }_1$. The second and third integral are evidently (bounded) linear maps in $H$ corresponding to the derivative.

\end{proof}

\begin{theo}\label{Theo_FrechDiffPr}
	Let $K_1 \subset \R^p$, $K_2 \subset \R^q$ be compact sets and $\psi \in \mathcal{ C } ( K_2^2 )$ be chosen such that $C_1^{ \rm pr }$ (defined in \eqref{Eq_def_partial_prod_factors}) is not identically $0$.
	Then, the map $\mathbf{ F }^{ \rm pr }$ defined in \eqref{Eq_ProdApprox} is Fr\'{e}chet differentiable in $C$.
\end{theo}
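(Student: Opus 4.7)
The plan is to mirror the strategy used to prove Theorem~\ref{Theo_FrechDiffTr}: factor $\mathbf{F}^{\rm pr}$ as $\mathbf{H}^{\rm pr} \circ \mathbf{G}^{\rm pr}$ and apply the chain rule. Concretely, I would set
\[
\mathbf{G}^{\rm pr}: \mathcal{C}((K_1 \times K_2)^2) \to \mathcal{C}(K_1^2) \times \mathcal{C}(K_2^2), \quad A \mapsto (A_1^{\rm pr}, A_2^{\rm pr}),
\]
and
\[
\mathbf{H}^{\rm pr}: \mathcal{C}(K_1^2) \times \mathcal{C}(K_2^2) \to \mathcal{C}((K_1 \times K_2)^2), \quad (F, G) \mapsto \frac{F(s,s')\, G(t,t')}{\int_{K_1^2} F(u,u')^2 \, du\, du'},
\]
so that $\mathbf{F}^{\rm pr} = \mathbf{H}^{\rm pr} \circ \mathbf{G}^{\rm pr}$. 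The hypothesis $C_1^{\rm pr} \neq 0$ ensures $\mathbf{H}^{\rm pr}$ is well-defined on a small sup-norm ball around $(C_1^{\rm pr}, C_2^{\rm pr})$.

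For $\mathbf{G}^{\rm pr}$, the first coordinate $A \mapsto A_1^{\rm pr}$ is a bounded linear operator and hence its own Fr\'echet derivative. The second coordinate $A \mapsto A_2^{\rm pr}$ is the composition of a bounded bilinear form with the diagonal map $A \mapsto (A, \mathbf{F}_1^{\rm pr}[A])$, so it is homogeneous of degree two in $A$ and Fr\'echet differentiable by the product rule. The candidate derivative
\[
D_C \mathbf{F}_2^{\rm pr}[H](t, t') = \int_{K_1^2} \bigl( H(u,t,u',t')\, C_1^{\rm pr}(u,u') + C(u,t,u',t')\, H_1^{\rm pr}(u,u') \bigr) du\, du'
\]
has essentially been extracted in \eqref{Eq_decomposition_par_prod}, where the remainder is already shown to be $\mathcal{O}(\|H\|^2)$ in sup-norm.

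For $\mathbf{H}^{\rm pr}$, I would combine the product rule for the numerator $F \otimes G$ with the quotient rule for the scalar functional $F \mapsto \int F^2$. Because $C_1^{\rm pr}$ is bounded away from zero in $L^2$-norm on a small sup-norm ball, the reciprocal admits a geometric-series expansion. This yields the candidate derivative already identified in Theorem~3.4 of \cite{DetDieKut21}, and since the embedding $\mathcal{C}(K_1^2) \hookrightarrow L^2(K_1^2)$ is continuous on the compact set $K_1^2$, that derivative remains a bounded linear operator in the sup-norm framework. What must still be verified is that the Taylor remainder is $o(\|(F_H, G_H)\|)$ in sup-norm; this follows from a direct algebraic expansion combined with the tensor-product bound $\|F \otimes G\| \le \|F\|\, \|G\|$ and the aforementioned lower bound on the denominator.

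The main obstacle is purely combinatorial: merging the bilinear structure of $\mathbf{F}_2^{\rm pr}$ with the reciprocal of $\|F\|_{L^2}^2$ produces a remainder that splits into several explicit terms, each of which must be absorbed into $\mathcal{O}(\|H\|^2)$ using the continuity bounds for the (partial) product operators already proved above. No fundamentally new analytical idea is required beyond what Theorem~\ref{Theo_FrechDiffTr} showcases; verifying that every residual term vanishes at the correct rate in the \emph{stronger} sup-norm uses only the elementary estimate $\|f\|_{L^2} \le |K_1|^{1/2}\, \|f\|$ on compacts, so the challenge is clean presentation rather than substance.
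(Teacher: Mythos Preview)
Your proposal is correct and follows essentially the same route as the paper: the same factorization $\mathbf{F}^{\rm pr}=\mathbf{H}^{\rm pr}\circ\mathbf{G}^{\rm pr}$, the same treatment of $\mathbf{G}^{\rm pr}$ via linearity of the first component and the bilinear expansion \eqref{Eq_decomposition_par_prod} for the second, and the same product--quotient rule for $\mathbf{H}^{\rm pr}$ with remainder control in sup-norm. The only cosmetic difference is that the paper bounds the $\mathbf{H}^{\rm pr}$ remainder by an explicit algebraic rearrangement rather than a geometric-series expansion of the reciprocal, but both lead to the identical $\mathcal{O}(\max\{\|H_1\|,\|H_2\|\}^2)$ estimate.
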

\begin{proof}

As in the proof of Theorem~\ref{Theo_FrechDiffTr} above, we can again decompose $\mathbf{ F }^{ \rm pr }$ into two simpler ones
$$
	\mathbf{ G }^{ \rm pr } : 
		\mathcal{ C } ( (K_1 \times K_2 )^2  ) 
	\to 
		\mathcal{ C } ( K_1^2    ) \times \mathcal{ C } (  K_2^2   ) 
 \quad
	\text{ and } 
\quad
	\mathbf{ H }^{ \rm pr } : 
		\mathcal{ C } ( K_1^2    ) \times \mathcal{ C } (  K_2^2   ) 
	 \to 
		 \mathcal{ C } ( (K_1 \times K_2 )^2   )
$$
   point-wise defined as 
\begin{align*}
&
        \mathbf{ G }^{ \rm pr }[ T ]( s, t, s', t' ) 
    = 
        \left(
            \begin{array}{c} 
		        \int_{ K_2^2 } T(s, w, s', w' ) \psi ( w, w' )  dw \, d w' 
        	\\ 
		        \int_{ K_1^2 }  T( u, t, u', t' )
		        \left( 
		            \int_{ K_2^2 } T(u, w, u', w' ) \psi ( w, w' )  dw \, d w' 
		        \right) 
		        d u \, d u' 
		    \end{array}
	   \right)
    \\
&
	\mathbf{ H }^{ \rm pr } \left[ \left(\begin{array}{c} 
		F 
	\\[1ex]
		 G 
		  \end{array}
		  \right) \right]( s, t, s', t' )
	= 
		\dfrac{ F( s, s' )  G ( t, t' ) 
		}{
		\int_{ K_1^2 } ( F ( u,  u' ) )^2  d u \, d u' 
		}.
\end{align*}
 
The derivative in $C$ of the first component of $\mathbf{ G }^{ \rm pr }$ is by linearity and boundedness (w.r.t.\ the sup norm) the map itself. The second component is differentiable, which follows by the decomposition \eqref{Eq_decomposition_par_prod}, where the last two integrals are linear maps of $H$ (the derivative) and the third one is of order $\mathcal{O}(\| H \|^2)$.  
Indeed defining the map (which in fact only depends actually on the variables $(t, t')$)
\begin{align*}		
        D_{ C } \mathbf{ G }^{ \rm pr }_2  [ H ]( s, t, s', t' )  
    =&	
	    \int_{ K_1^2 }  C( u, t, u', t' )
		 \left( 
		    \int_{ K_2^2 } H( u, w, u', w' ) \Delta ( w, w' ) d w \, d w' 
		 \right) d u \, d u' 
    \\
    &+
	    \int_{ K_1^2 }  H( u, t, u', t' )
		 \left( 
		    \int_{ K_2^2 } C(u, w, u', w' ) \Delta ( w, w' )  d w \, d w' 
		 \right) d u \, d u'.
\end{align*}
and subtracting it from $( \mathbf{ G }^{ \rm pr }_2 [ C + H ] - \mathbf{ G }^{ \rm pr }_2 [ C ] ) (s, t, s', t' )$, we infer as in \eqref{Eq_decomposition_par_prod} that this difference is of the order $\mathcal{O}(\| H \|^2)$.
Next we calculate the derivative  of $\mathbf{ H }^{ \rm pr }$ 
for a generic pair $( L_1, L_2 )^t$, $L_i \in \mathcal{ C } ( K_i^2 )$, $i=1, 2$ with $L_1 \neq 0$. It is point-wise given by 
\begin{align*}
  	    D_{ ( L_1, L_2 )^t } \mathbf{ H }^{ \rm pr }  
  	    \left[ 
  	        \left(\begin{array}{c} 
	        F 
	        \\[1ex]
		    G 
		    \end{array}
		\right) \right]
		( s, t, s', t' ) 
	=& 
	    \frac{ F( s, s' ) L_2( t, t' )  
            +
            L_1( s, s' ) G( t, t' )
        }{
            \int_{ K_1^2 } ( L_1 ( u,  u' ) )^2 d u \, d u' 
        }
    \\[1ex]
    &-
        \frac{ 2 \left( 
                \int_{ K_1^2 } F( u,  u' ) L_1( u,  u' ) d u \, d u'  
            \right)
            L_1( s, s' ) L_2( t, t' )
        }{
            \left( \int_{ K_1^2 } ( L_1 ( u,  u' ) )^2 d u \, d u'  \right)^2
        }
\end{align*}
Boundedness of the derivative follows directly from boundedness of the kernels $F, G$ (as well as the boundedness away from $0$ of the denominator). Now consider $H_i \in \mathcal{ C }(K_i^2)$ with $\max(\| H_1 \|, \| H_2 \|) \to 0$ differentiability of the map $\mathbf{ H }^{ \rm pr }$ follows by decomposing the difference
\begin{align*}  
   	& 
   	    \left(
   	    \mathbf{ H }^{ \rm pr }
   	        \left[ \left(
                \begin{array}{c} 
	            L_1 + H_1
	            \\[1ex]
	      	    L_2 + H_2
		        \end{array}
		    \right) \right] 
		-	
		\mathbf{ H }^{ \rm pr } 
            \left[ \left(
                \begin{array}{c} 
                L_1 
                \\[1ex]
		        L_2 
                \end{array}
		    \right) \right]
		- 
		D_{ ( L_1, L_2 )^t } \mathbf{ H }^{ \rm pr } 
		    \left[ \left(
		        \begin{array}{c} 
                H_1 
	            \\[1ex]
	            H_2 
		        \end{array}
            \right) \right]
        \right)
        ( s, t, s', t' ) 
    \\
=&
        \left( 
	        H_1( s, s' ) L_2( t, t' ) 
	        + 
	        L_1( s, s' ) H_2(t,t') 
	        - 
	        2 L_1( s, s' ) L_2(t,t') 
	        \dfrac{ 
	             \int_{K_1^2} L_1( u, u' ) H_1( u, u' ) du \, du'  
	           }{
	            \int_{K_1^2} \left( L_1( u, u' ) \right)^2 du \, du'
	           } 
	   \right) 
	\\
	& 
        \times \left( 
            \frac{ 
                1 
            }{
                \int_{K_1^2} \left( ( L_1 + H_1 )( u, u' ) \right)^2 du \, du'
            } 
	       - 
            \frac{
                1
            }{
                \int_{K_1^2} \left( L_1( u, u' ) \right)^2 du \, du'
            }
            \right)
\end{align*}
Evidently, the first factor is of order $\mathcal{O}( \max \{\| H_1 \|, \| H_2 \| \} )$. Furthermore, a small calculation reveals the same rate for the second one. 
This completes the proof of the differentiability for the maps 
$\mathbf{ G }^{ \rm pr }$ and $\mathbf{ H }^{ \rm pr }$ and the differentiability of the partial product approximation follows by the chain rule and the identity 
$
        \mathbf{ F }^{ \rm pr } 
    = 
        \mathbf{ H }^{ \rm pr } \circ \mathbf{ G }^{ \rm pr }
$.

\end{proof}

    \subsection{Differentiability of the SPCA} \label{Subsec_AppCont-and-Diff-SepaPCA}

\noindent  In order to make the following derivations easier to read, we define for a compact set $K$ the space of continuous, symmetric kernels $\mathcal{ C }(K^2)^{Sym}$  as
$$
        \mathcal{ C }( K^2 )^{Sym} 
    := 
        \{A \in \mathcal{ C }( K^2 ): A(x,y) = A(y,x) \,\, \forall x,y \in K\}.
$$
Any kernel $A$ represents the corresponding Hilbert--Schmidt (integral) operator, which (according to the spectral theorem for normal operators acting on $L^2[0,1]$) can be decomposed as
\begin{equation} \label{Eq_eigen_decomposition}
        A(x,y) 
    = 
        \sum_{i \in \N} v_i^{A}(x) v_i^{A}(y) \lambda_i^{A},
\end{equation}
where $\{v_i^{A}\}_{i \in \N}$ are the eigenfunctions and $\{\lambda_i^{A}\}_{i \in \N}$ the corresponding eigenvalues. Without loss of generality, we assume in the following that $|\lambda_1^{A}|\ge |\lambda_2^{A}| \ge |\lambda_i^{A}|$ for any $i \ge 3$. Notice that a priori, the identity \eqref{Eq_eigen_decomposition} only holds true in an $L^2$-sense, but it can be shown that it remains true in the space of continuous functions w.r.t.\ the sup-norm by Mercer's Theorem, see, for instance, Theorem~3.a.1 in \cite{KoeBook86}. Moreover, it can be shown that the eigenfunctions allow the choice of a continuous representative (see Lemma~\ref{Lemma_cont}). Therefore, we will subsequently assume that  $v_i^{A}$ is this representative in $\mathcal{ C }( K )$. 
Let us now assume that $A_0 \in \mathcal{ C }( K^2 )^{Sym}$ is a kernel which satisfies 
$|\lambda_1^{ A_{ 0 } }|> |\lambda_2^{ A_{ 0 } }|$. This means that the first eigenvalue of $A_{ 0 }$ is unique, and the first eigenfunction (corresponding to $\lambda_1^{ A_{ 0 } }$) as well. Since eigenfunctions are generally only determined up to sign, we suppose here that some choice of sign for $v_1^{ A_{ 0 } }$ has been fixed. Then, it follows that there exists a sufficiently small $\delta = \delta(A_0) > 0 $, such that for any $ A \in \mathcal{ C }( K^2 )^{Sym}$ with $ \| A - A_0 \| < \delta $ it holds that $| \lambda_1^{ A } | > | \lambda_2^{ A } | $ making the first eigenvalue of $A$ unique. This also (up to a sign) identifies the first eigenfunction $v_1^{ A }$ of $A$ and we may fix that choice of sign, which minimizes the distance $\int_K (v_1^{A_0}(t)-v_1^{A}(t))^2dt $ (see Lemma~\ref{Lemma_eigen_det} for details). Now, the 
``eigen-maps'' 
\begin{equation} \label{Eq_eigen_maps}
        \Lambda: 
            \begin{cases}
                U_\delta(A_0) \to \R,
            \\
                A \mapsto \lambda_1^{A}\end{cases} 
            \qquad 
                V: \begin{cases}
                U_\delta(A_0) \to \mathcal{ C }( K ),
            \\
                A \mapsto v_1^{A}
            \end{cases}
\end{equation}
are well-defined. Since the separable approximation is (essentially) a rank-$1$-approximation of an operator, the key step in proving differentiability of the SPCA-approximation is proving differentiability of the eigenfunction-map and eigenvalue-map. Since the eigen-maps are known to be differentiable in an $L^2$-sense (again see the proof of Theorem~3.4 in \cite{DetDieKut21}) we only have to validate that the differentials are still the limit of the ``differential quotients'' (as in the above proofs). In the case of the eigenfunction-maps this requires a non-standard representation of the differential, to still guarantee that it maps into the space of continuous functions.

\begin{lem}
	Suppose that $A_0 \in \mathcal{ C }( K^2 )^{Sym}$ satisfies $| \lambda_1^{ A_0 } | > | \lambda_2^{ A_0 } |$. Then for $\delta = \delta(A_0) $ sufficiently small, it holds that the eigen-maps (defined in \eqref{Eq_eigen_maps}) are Fr\'{e}chet-differentiable.
\end{lem}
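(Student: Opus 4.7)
The plan is to upgrade the $L^2$-differentiability of the eigen-maps (already available from Theorem~3.4 of \cite{DetDieKut21}) to Fr\'{e}chet differentiability with respect to the sup-norm on $\mathcal{ C }(K^2)^{Sym}$. The key enabling fact is that every $H \in \mathcal{ C }(K^2)$ defines a bounded integral operator $T_H : L^2(K) \to \mathcal{ C }(K)$, $(T_H f)(y) := \int_K H(x,y) f(x)\,dx$, satisfying $\|T_H f\| \le \sqrt{|K|}\,\|H\|\,\|f\|_{L^2}$. Thus integral operators with continuous kernels smooth $L^2$-functions into continuous ones, and this is what will let us systematically convert the $L^2$-estimates of \cite{DetDieKut21} into sup-norm estimates.

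Step 1: identify the candidate differentials. By classical perturbation theory (and in agreement with the $L^2$-differentials of \cite{DetDieKut21}), the natural candidates are
\begin{align*}
D_{A_0}\Lambda[H] &= \int_{K^2} v_1^{A_0}(x)\, H(x,y)\, v_1^{A_0}(y)\,dx\,dy,\\
D_{A_0}V[H] &= R_{A_0}\bigl( T_H v_1^{A_0} - D_{A_0}\Lambda[H]\cdot v_1^{A_0}\bigr),
\end{align*}
where $R_{A_0}$ denotes the reduced resolvent, i.e.\ the inverse of $(\lambda_1^{A_0} I - T_{A_0})$ restricted to $\{v_1^{A_0}\}^{\perp_{L^2}}$, which is well-defined by the simple-eigenvalue gap $|\lambda_1^{A_0}| > |\lambda_2^{A_0}|$. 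The eigenvalue differential is obviously a bounded linear functional in sup-norm. For $D_{A_0}V$ I must verify that it maps into $\mathcal{ C }(K)$ continuously: its argument lies in $\mathcal{ C }(K) \cap \{v_1^{A_0}\}^{\perp_{L^2}}$ by the smoothing property of $T_H$, and if $\phi = R_{A_0}\psi$ with $\psi \in \mathcal{ C }(K)$, then $\lambda_1^{A_0}\phi = \psi + T_{A_0}\phi$, whose right-hand side is continuous since $T_{A_0} : L^2(K) \to \mathcal{ C }(K)$ is bounded. Hence $\phi \in \mathcal{ C }(K)$, and a sup-norm estimate $\|\phi\| \le C\|\psi\|$ follows either directly or via the closed graph theorem.

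Step 2: remainder estimates in sup-norm. For the eigenvalue, since $\|H\|_{L^2} \le |K|\,\|H\|$, the $L^2$-remainder bound from \cite{DetDieKut21} immediately gives $|\Lambda(A_0+H) - \Lambda(A_0) - D_{A_0}\Lambda[H]| = o(\|H\|)$. For the eigenfunction, set $r_H := v_1^{A_0+H} - v_1^{A_0} - D_{A_0}V[H]$ and $\mu_H := \lambda_1^{A_0+H} - \lambda_1^{A_0}$. Starting from the eigen-identity $\lambda_1^{A_0+H} v_1^{A_0+H} = T_{A_0+H} v_1^{A_0+H}$ and subtracting both the unperturbed identity and the defining equation of $D_{A_0}V[H]$, a direct algebraic manipulation yields
\begin{equation*}
(\lambda_1^{A_0} I - T_{A_0})\, r_H = E_H,
\end{equation*}
where $E_H$ is an explicit error built from $T_H(v_1^{A_0+H} - v_1^{A_0})$, $\mu_H\cdot(v_1^{A_0+H} - v_1^{A_0})$, a term proportional to $(\mu_H - D_{A_0}\Lambda[H])$, and a $v_1^{A_0}$-component fixed by the normalization $\|v_1^{A_0+H}\|_{L^2} = \|v_1^{A_0}\|_{L^2} = 1$. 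Using the known $L^2$-estimates $\|v_1^{A_0+H} - v_1^{A_0}\|_{L^2} = O(\|H\|)$ and $|\mu_H| = O(\|H\|)$ from \cite{DetDieKut21}, together with the smoothing bound $\|T_H g\| \le \sqrt{|K|}\,\|H\|\,\|g\|_{L^2}$, each summand of $E_H$ is $o(\|H\|)$ in sup-norm. Applying $R_{A_0}$ then gives $\|r_H\| = o(\|H\|)$, as required.

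The main obstacle is Step~2 for the eigenfunction map: producing a clean integral identity for $r_H$, carefully tracking the $v_1^{A_0}$-component through the $L^2$-normalization of the perturbed eigenfunctions, and verifying that the smoothing trick systematically upgrades every $L^2$-remainder to a sup-norm remainder. Modulo this, everything reduces to the $L^2$-perturbation theory already developed in \cite{DetDieKut21} combined with elementary mapping properties of integral operators with continuous kernels.
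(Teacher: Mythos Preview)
Your approach is correct and genuinely different from the paper's. Both proofs exploit the same core fact---that an integral operator with continuous kernel maps $L^2(K)$ boundedly into $\mathcal{C}(K)$---but they organize the argument differently. The paper expands $v_1^{A_0+H}-v_1^{A_0}$ via the eigen-identity $v_1^A = A[v_1^A]/\lambda_1^A$ into three pieces $T_1+T_2+T_3$, analyzes each term with the Kokoszka--Reimherr lemmas, and arrives at a non-standard three-term representation of the differential (explicitly \emph{not} the usual eigen-series, which the paper cautions need not live in $\mathcal{C}(K)$). You instead work with the reduced resolvent $R_{A_0}$, derive the error equation $(\lambda_1^{A_0}I-T_{A_0})r_H=E_H$, and use the bootstrap $\lambda_1^{A_0}\phi=\psi+T_{A_0}\phi$ to show $R_{A_0}$ preserves continuity. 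Your route is more conceptual and, incidentally, shows that the ``common'' resolvent formula \emph{is} continuous when the input is---so the paper's detour to a non-standard representation is not strictly necessary.

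One point in Step~2 needs tightening. Among the summands of $E_H$ you list $\mu_H\cdot(v_1^{A_0+H}-v_1^{A_0})$, and you claim it is $o(\|H\|)$ in sup-norm using only the $L^2$-estimate $\|v_1^{A_0+H}-v_1^{A_0}\|_{L^2}=O(\|H\|)$ and the smoothing bound for $T_H$. But this term is not of the form $T_H g$; to control its sup-norm you need $\|v_1^{A_0+H}-v_1^{A_0}\|=O(\|H\|)$ in $\mathcal{C}(K)$. That follows by one more application of your own bootstrap: write $v_1^{A_0+H}-v_1^{A_0}$ through the eigen-identities as a sum of $T_{A_0}$ and $T_H$ acting on $L^2$-bounded functions plus an eigenvalue-difference term, and use the smoothing bound once more (this is essentially the paper's $T_1+T_2+T_3$ decomposition read as a preliminary sup-norm Lipschitz estimate). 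With that in hand, $|\mu_H|\cdot\|v_1^{A_0+H}-v_1^{A_0}\|=O(\|H\|^2)$ and your argument closes.
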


\begin{proof}
Adapting the proof from Theorem~3.4 in \cite{DetDieKut21} is trivial in the case of the eigenvalue-map, where the differential is given by
$$
        D_{ A_0 } \Lambda[ T ] 
    = 
        \int_{ K^2 } T(x,y) v_1^{ A_{ 0 } }(x) v_1^{ A_{ 0 } }(y)dx \, dy.
$$
In contrast, establishing the result for the eigenfunction-map is more intricate. Therefore, it warrants a detailed discussion. This time we do not start with the differential, but find it easier to derive it in a step-by-step process. For this purpose, consider $H \in \mathcal{ C }(K^2)^{Sym}$ with $\| H \| \to 0$ (in particular, we may assume that 
$A_0 + H \in U_\delta(A_0))$. We now investigate the decomposition
\begin{align*}
		v_1^{ A_0 + H } - v_1^{ A_0 } 
	&=
		\frac{(A_0 + H )  
		}{  
		\lambda_1^{ A_0 + H } 
		}
		\left[ v_1^{ A_0 + H } \right]
		- 
		\frac{A_0  
		}{ 
		\lambda_1^{ A_0 } 
		}
		\left[ v_1^{ A_0 } \right]
	\\
	&=
		\left( 
		    \frac{ 1 
		    }{  
		    \lambda_1^{ A_0 + H } 
		    } 
		    -  
		    \frac{ 1 
		    }{  
		    \lambda_1^{ A_0 } 
		    }  
		\right) 
		( A_0 + H ) \left[ v_1^{ A_0 + H } \right]
		+
		\frac{ (A_0 + H )   
		}{  
		\lambda_1^{ A_0  } 
		} 
		\left[ v_1^{ A_0 + H }  -  v_1^{ A_0 } \right]
		+
		\frac{ H    
		}{  
		\lambda_1^{ A_0  } 
		} 
		\left[ v_1^{ A_0 } \right]
	\\
	& =:
		T_1 + T_2 + T_3
\end{align*}
	Next, we will find the differential of each term separately.
	We start with the first term $T_1$ and note that 
	$
	T_1 
	= 
	- ( \lambda_1^{ A_0 + H } - \lambda_1^{ A_0 } ) v_1^{ A_0 + H } / \lambda_1^{ A_0 + H }$, 
	since $( A_0 + H ) \left[ v_1^{ A_0 + H } \right] = \lambda_1^{ A_0 + H } v_1^{ A_0 + H }$. 
	We claim that the differential is given by 
	$
	H \mapsto - v_1^{ A_0 } / \lambda_1^{ A_0 } 
	\cdot 
	\int_{K^2} H (x,y) v_1^{ A_0 }(x) v_1^{ A_0 }(y) dx \, dy,
	$
	 which is obviously linear and continuous w.r.t.\ the sup-norm. Adding and subtracting cross-terms yields the following bound 
\begin{align*}
		\left\| 
		T_1 
		+
		\frac{  v_1^{ A_0 } 
		}{ 
		\lambda_1^{ A_{ 0 }  } 
		} 
		\int_{K^2} H(x,y) v_1^{ A_0 }(x)  v_1^{ A_0 }(y) dx \, dy   
		\right\|
	&\le 
		\int_{K^2} H(x,y) v_1^{ A_0 }(x)  v_1^{ A_0 }(y) dx \, dy 
		\frac{ \| v_1^{ A_{ 0 } + H } - v_1^{ A_{ 0 }  } \|}{ \lambda_1^{ A_{ 0 }  } }
	\\
	&
	+
		\frac{  \| v_1^{ A_{ 0 }  } \|	 }{ \lambda_1^{ A_{ 0 }  } } 
		\left|  
	    	\int_{K^2} H( x, y) v_1^{ A_0 }(x)  v_1^{ A_0 }(y) dx \, dy 
	    	- 
	    	(  \lambda_1^{ A_{ 0 } + H } -  \lambda_1^{ A_{ 0 }  } ) 
	    \right|
\end{align*}
	Both terms are of order $\mathcal{O} \left( \| H \|^2 \right)$. For the first one we observe that the integral is obviously of order $\mathcal{O} \left( \| H \|^2 \right)$ and the bound of $\mathcal{O} \left( \| H \|^2 \right)$ for the difference of eigenfunctions follows by Lemma~\ref{Lem_bounds_op} part $ii)$. The second term on the right is of order $\mathcal{O} \left( \| H \|^2 \right)$, due to the differentiability of the eigenvalue-map (see the beginning of this proof), as
	$$
	    D_{ A_{ 0 } } \Lambda [ H ] 
	= 
	    \int_{K^2} H( x, y) v_1^{ A_0 }(x)  v_1^{ A_0 }(y) dx \, dy 
	$$
	(for details we refer to \cite{DetDieKut21}).\\
	We continue by analyzing $T_2$. Notice that by the (second) inequality of Lemma~2 in \cite{KokRei13}, one has
$$
	    \| H [  v_1^{ A_{ 0 } + H }  -  v_1^{ A_{ 0 } } ] / \lambda_1^{ A_{ 0 } } \| 
	\le 
	    \frac{ \| H \| 
	    }{
	     | \lambda_1^{ A_{ 0 } } | 
	    }
	    \left( \int_{ K } \left( v_1^{ A_{ 0 } + H }  -  v_1^{ A_{ 0 } } \right)^2 (x ) dx \right)^{ 1 / 2 } 
    < 
        \const \| H \|^2.
$$ 
    This implies that 
$
	    T_2
	=
	    A_{ 0 } [ v_1^{ A_{ 0 } + H }  -  v_1^{ A_{ 0 } } ]  /  \lambda_1^{ A_{ 0 }  } + \mathcal{O}(\| H \|^2)
	$. 
	We can rewrite the non-negligible term as
\begin{align} \label{Eq_DecompKernelOnDiffEigvect}
        \frac{ A_{ 0 } 
        }{
        \lambda_1^{ A_{ 0 }  }
        }
        \left[ v_1^{ A_{ 0 } + H }  -  v_1^{ A_{ 0 } } \right]  
=&
	    \frac{ A_{ 0 } 
        }{
        \lambda_1^{ A_{ 0 }  }
        } 
	    \left[ \sum_{ i \ge 1 } 
	        \left( \int_{ K } \left(v_1^{ A_{ 0 } + H }(x)  
	                            -  
	                         v_1^{ A_{ 0 } }( x ) \right) 
	        v_i^{ A_{ 0 } }(x) dx 
	        \right)
	        v_i^{ A_{ 0 } }
	        \right]
\end{align}
    Notice that the RHS is nothing else than the $L^2$-basis expansion of
    $ 
    v_1^{ A_{ 0 } + H }  
    -  
    v_1^{ A_{ 0 } }$ w.r.t.\ the ONB 
    $\{ v_i^{ A_{ 0 } } \}_{ i \in \mathbb{ N } }$. This relation definitely holds w.r.t.\ the $L^2$-norm, since $A_0$ is an integral operator, but it needs not hold w.r.t.\ the sup norm. However, we claim it does. First, note that both side are well-defined continuous functions. Moreover, notice that
\begin{align*}
&
   \sup_x
   \left| \int_K A_0 (x, y)
        \left[
         v_1^{ A_{ 0 } + H } (y) 
        -  
        v_1^{ A_{ 0 } } ( y )
        - 
        \sum_{ i \ge 1 } 
	        \left( 
	        \int_{ K } 
	            \left( 
	            v_1^{ A_{ 0 } + H }( u )  
	            -  
	           v_1^{ A_{ 0 } }( u ) 
	           \right) 
	        v_i^{ A_{ 0 } }(u) du 
	        \right)
	        v_i^{ A_{ 0 } } ( y )
	    \right]
	   dy
	\right|
\\
\le&
    \| A_0 \| \int_K  
    \left|
     v_1^{ A_{ 0 } + H } (y) 
        -  
        v_1^{ A_{ 0 } } ( y )
        - 
        \sum_{ i \ge 1 } 
	        \left( 
	        \int_{ K } 
	            \left( 
	            v_1^{ A_{ 0 } + H }( u )  
	            -  
	           v_1^{ A_{ 0 } }( u ) 
	           \right) 
	        v_i^{ A_{ 0 } }(u) du 
	        \right)
	        v_i^{ A_{ 0 } } ( y )
	   dy
    \right| \, = 0
\end{align*}
    where $a_0$ is the continuous kernel of $A_0$.
    Using the first identity and the second identity 
    of Lemma~1 in \cite{KokRei13} on  
    $$
    \left( \int_{ K } 
	   \left( v_1^{ A_{ 0 } + H }(x)  
	   -  
	   v_1^{ A_{ 0 } }( x ) \right) 
	   v_1^{ A_{ 0 } }(x) dx 
	   \right)
	v_1^{ A_{ 0 } } 
\quad
\text{  and } 
\quad
    \sum_{ i \ge 2 } 
	        \left( \int_{ K } 
	            \left(v_1^{ A_{ 0 } + H }(x)  
	             -  
	            v_1^{ A_{ 0 } }( x ) \right) 
	        v_i^{ A_{ 0 } }(x) dx 
	        \right)
	        v_i^{ A_{ 0 } }
$$
	 respectively,
	 one sees that the RHS of 
    \eqref{Eq_DecompKernelOnDiffEigvect} equals 
\begin{align*}
    - 
		\frac{ A_{ 0 } 
        }{
        2 \lambda_1^{ A_{ 0 }  }
        }
        \left[ v_1^{ A_{ 0 } } \right]
        \int_{ K } ( v_1^{ A_{ 0 } + H }(x)  -  v_1^{ A_{ 0 } }(x) )^2 dx 
    +
        \frac{ A_{ 0 } 
        }{
        \lambda_1^{ A_{ 0 }  }
        } 
        \left[ 
			\sum_{ i > 1 } 
			    \frac{ v_i^{ A_{ 0 } } 
			    }{  
			    \lambda_1^{ A_{ 0 } + H } - \lambda_i^{ A_{ 0 } }
			    } 
				\int_{ K^2 } H (x,y) v_1^{ A_{ 0 } + H } (x)  v_i^{ A_{ 0 }  }(y) dx \, dy
		\right]
\end{align*}	 
    where equality holds w.r.t.\ sup-norm by the same argument as above.
    From the proof of Lemma~2 in \cite{KokRei13}, it follows easily that 

    $
    \int_{K} (v_1^{ A_{ 0 } + H }(x)  -  v_1^{ A_{ 0 } }(x) )^2 dx 
    = 
    \mathcal{O}(\| H \|^2)
    $ so that the first term is negligible. Using a similar reasoning as before (together with the bounds of Lemma~\ref{Lem_bounds_op}), we can then show that
\begin{align*}
& 
        \frac{ A_{ 0 } 
        }{
        \lambda_1^{ A_{ 0 }  }
        } 
        \left[ 
			\sum_{ i > 1 } 
			\frac{ v_i^{ A_{ 0 } } 
			}{
			\lambda_1^{ A_{ 0 } + H } -  \lambda_i^{ A_{ 0 }  }
			} 
			\int_{ K^2 } H (x,y) v_1^{ A_{ 0 } + H }(x)  v_i^{ A_{ 0 }  }(y) dx \, dy
		\right] 
\\
=& 
        \frac{ A_{ 0 } 
        }{
        \lambda_1^{ A_{ 0 }  }
        } 
        \left[ 
			\sum_{ i >1 } 
			\frac{ v_i^{ A_{ 0 } } 
			}{  
			\lambda_1^{ A_{ 0 } } -  \lambda_i^{ A_{ 0 }  } 
			} 
			\int_{ K^2 } H (x,y) v_1^{ A_{ 0 } }(x)  v_i^{ A_{ 0 } }(y) dx \, dy
		\right]
		 + \mathcal{O}(\| H \|^2),
\end{align*}
where we omit the precise calculations to avoid redundancy.
The remaining series in the second line is linear in $H$. A simple calculation shows that it is also bounded. 
Finally, we notice that the term $T_3$ already is a linear, bounded map in $H$. As a consequence, we can rewrite the decomposition
\begin{align*}
& 
        v_1^{ A_{ 0 } + H } - v_1^{ A_{ 0 } } = T_1 + T_2 + T_3 
    =  
        - v_1^{ A_{ 0 } } / \lambda_1^{ A_{ 0 } }  
        \int_{K^2} H (x,y) v_1^{ A_{ 0 } }(x) v_1^{ A_{ 0 } }(y) dx \, dy 
    \\
& 
    +
        \frac{ A_{ 0 } 
        }{
        \lambda_1^{ A_{ 0 }  }
        } 
        \left[ 
			\sum_{ i > 1 } 
			\frac{ v_i^{ A_{ 0 } } 
			}{
			\lambda_1^{ A_{ 0 } } -  \lambda_i^{ A_{ 0 }  }
			} 
			\int_{K^2} H (x,y) v_1^{ A_{ 0 } }(x)  v_i^{ A_{ 0 } }(y) dx \, dy
		\right] 
	+ 
	    \frac{ H    
		}{  
		\lambda_1^{ A_0  } 
		} 
		\left[ v_1^{ A_0 } \right] 
	+ 
	    \mathcal{O}(\| H \|^2).
\end{align*}
	The non-vanishing part is the Fr\'{e}chet-differential. Notice that in $L^2$ this differential can be further simplified to the more common expression  
	$
	\sum_{ i >1 } 
	\frac{ v_i^{ A_{ 0 } } 
	}{ 
	\lambda_1^{ A_{ 0 } } -  \lambda_i^{ A_{ 0 }  }
	} 
	\int_{K^2} H(x,y) v_1^{ A_{ 0 } }(x)  v_i^{ A_{ 0 } }(y) dx \, dy
	$ 
    (where it has been used that 
    $
    A_{ 0 } [ v_i^{ A_{ 0 } } ] 
    = 
    \lambda_i^{ A_{ 0 } } v_i^{ A_{ 0 } } 
    $ 
    in an $L^2$-sense), 
    for instances, see \cite{KokRei13}. However, this expression is not necessarily an element of the continuous functions anymore, since even the continuity of the eigenfunctions may not hold for all $i \in \mathbb{ N }$. These considerations conclude our proof of the differentiability of the eigen-maps.
\end{proof}

\begin{theo}
    Let $K_1 \subset \R^p$, $K_2 \subset \R^q$ be compact sets and $C \in \mathcal{ C }((K_1 \times K_2)^2)$ be a separable covariance operator. Then the map $\mathbf{ F }^{ \PCA }$ as defined in \eqref{Eq_SepPCA-Approx} 
    is Fr\'{e}chet differentiable in $C$ w.r.t.\ the sup-norm.
\end{theo}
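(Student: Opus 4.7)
The plan is to express $\mathbf{F}^{\PCA}$ as a composition of simpler Fr\'{e}chet differentiable maps and invoke the chain rule, leveraging the preceding lemma on differentiability of the eigen-maps. Concretely, for $i \in \{1, 2\}$ introduce the quadratic map $\mathbf{Q}_i : \mathcal{C}((K_1 \times K_2)^2) \to \mathcal{C}((K_i^2)^2)^{Sym}$ sending $A$ to $\tilde A_i^{\PCA}$; the eigen-map $\mathbf{E}_i : U_\delta(\tilde C_i^{\PCA}) \to \R \times \mathcal{C}(K_i^2)$, $B \mapsto (\lambda_1^B, v_1^B)$; the scaling $\mathbf{S}_i : \R_{>0} \times \mathcal{C}(K_i^2) \to \mathcal{C}(K_i^2)$ given by $(\lambda, v) \mapsto \sqrt{\lambda}\, v$; and the tensor product $\mathbf{P} : \mathcal{C}(K_1^2) \times \mathcal{C}(K_2^2) \to \mathcal{C}((K_1 \times K_2)^2)$ sending $(F, G)$ to the kernel $(s,t,s',t') \mapsto F(s,s') G(t,t')$. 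Then
\[
\mathbf{F}^{\PCA}(A) \;=\; \mathbf{P}\bigl(\,\mathbf{S}_1 \circ \mathbf{E}_1 \circ \mathbf{Q}_1(A),\; \mathbf{S}_2 \circ \mathbf{E}_2 \circ \mathbf{Q}_2(A)\,\bigr),
\]
and it suffices to check Fr\'{e}chet differentiability of each of the four constituents at the appropriate point.

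First I would verify that each $\mathbf{Q}_i$ is a bounded symmetric bilinear form in $A$: from its defining integral formula, $\tilde A_i^{\PCA}$ is obtained by integrating a product of two evaluations of $A$ over the compact set $K_{3-i}^2$, whence $\|\mathbf{Q}_i(A)\| \le |K_{3-i}|^2 \|A\|^2$, and polarization yields Fr\'{e}chet differentiability at every $A_0$. Continuity and symmetry of the image kernel are routine. By the lemma on the eigen-maps just established, $\mathbf{E}_i$ is well-defined and Fr\'{e}chet differentiable on a sup-norm neighborhood of $\tilde C_i^{\PCA}$, provided the spectral gap $|\lambda_1(\tilde C_i^{\PCA})| > |\lambda_2(\tilde C_i^{\PCA})|$ holds. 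For a separable, non-trivial $C = C_1 \otimes C_2$, a direct calculation shows that $\tilde C_1^{\PCA}$ is (up to a positive multiplicative constant) the rank-one kernel $C_1 \otimes C_1$ on $K_1^2 \times K_1^2$, so the spectral gap $\lambda_1 > 0 = \lambda_2$ is in fact automatic, and assumption \eqref{Eq_well_def_pca} is built into the hypotheses of the theorem. The scaling $\mathbf{S}_i$ is Fr\'{e}chet differentiable on $\R_{>0} \times \mathcal{C}(K_i^2)$ since the square root is smooth away from zero and scalar multiplication is bilinear; and $\mathbf{P}$ is a continuous bilinear map satisfying $\|\mathbf{P}(F, G)\| \le \|F\| \|G\|$, hence Fr\'{e}chet differentiable everywhere via the product rule.

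I expect the main obstacle not to lie in the chain-rule bookkeeping itself, which merely recycles the sort of calculations already carried out in Theorem~\ref{Theo_FrechDiffTr} and Theorem~\ref{Theo_FrechDiffPr}, but rather in verifying that the composition is well-defined on a common open neighborhood of $C$ in $\mathcal{C}((K_1 \times K_2)^2)$. Concretely, for $A$ sufficiently close to $C$ in sup-norm one needs (i) $\tilde A_i^{\PCA} \in U_\delta(\tilde C_i^{\PCA})$ so that $\mathbf{E}_i$ applies, and (ii) $\lambda_1(\tilde A_i^{\PCA}) > 0$ so that $\mathbf{S}_i$ applies. Assertion (i) follows from the continuity bound $\|\mathbf{Q}_i(A) - \mathbf{Q}_i(C)\| \le |K_{3-i}|^2 (\|A\| + \|C\|)\|A - C\|$ derived above, and (ii) follows from the continuity of the top eigenvalue (a direct consequence of the differentiability of $\Lambda$ in the preceding lemma) together with $\lambda_1(\tilde C_i^{\PCA}) > 0$. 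Once these preliminaries are in place, all intermediate objects sit in the correct continuous function spaces, and the Fr\'{e}chet differentiability of $\mathbf{F}^{\PCA}$ at $C$, together with an explicit (if cumbersome) expression for $D_C \mathbf{F}^{\PCA}$, follows at once from the chain rule.
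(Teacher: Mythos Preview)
Your proposal is correct and follows essentially the same route as the paper: the paper's own proof is a one-line deferral to Theorem~3.4 of \cite{DetDieKut21} once the sup-norm differentiability of the eigen-maps has been secured, and your chain-rule decomposition $\mathbf{P}\circ(\mathbf{S}_i\circ\mathbf{E}_i\circ\mathbf{Q}_i)$ is precisely what that referenced argument amounts to. You have in fact spelled out more detail than the paper does (the boundedness of the quadratic maps, the automatic spectral gap under separability, the well-definedness of the neighborhood), and the only point you leave implicit---the compatible choice of signs for $v_1$ and $u_1$ so that their product approximates $A$ rather than $-A$---is handled by the local sign convention already fixed in the preceding lemma.
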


\begin{proof} Having established differentiability of the first eigenvalue and eigenfunction of a symmetric operator, the proof now follows step by step as that of Theorem~3.4 of \cite{DetDieKut21}.

\end{proof}

\section{Proof of 
 Theorem~\ref{Theo_MultiplierBootstrap} 
}

\noindent We show weak convergence in the space of continuous functions, relying on the theory of weak convergence for spaces of bounded functions as described in \cite{vdVWelBook96}. For this purpose two conditions need to be verified, tightness and convergence of the marginals, see their Theorem~1.5.4. First, we demonstrate weak convergence of the marginals of the covariance operator, using the Cram\'{e}r--Wold device (see Theorem~29.4 in \cite{BillBook12}). We use classical blocking technique (to handle the bootstrapped part) together with moment inequalities from \cite{Yoshihara1978} for strongly mixing 
random variables. Second, we prove tightness, by establishing asymptotic equicontinuity (see, for instance, Theorem~1.5.7 from \cite{vdVWelBook96}).
The latter is done in turn by controlling entropy bounds. We define the packing numbers and entropy of sets.

 For the sake of brevity, we introduce the following point-wise notations for the objects of interest. So for
 $( s, t ), ( s', t' )  \in K_1 \times K_2$  we set
\begin{equation*}
        \tilde{ X }_i ( s, t )
    := 
        X_i ( s, t ) - \mathbb{ E } X_i ( s, t ) \, ;
    \quad
        \tilde C_N ( s, t, s', t' ) 
    := 
        \frac{ 1 }{ N } 
        \sum_{ i = 1 }^{ N } 
        \tilde{ X }_i ( s, t )  \tilde{ X }_i ( s', t' ),
\end{equation*}
    and 
\begin{align*}
        \tilde B_N^{ ( k ) } ( s, t, s', t' ) 
&
    := 
	    \frac{ 1 }{  N} 
	    \sum_{ i =1 }^{ N} (  
	        \tilde{ X }_i ( s, t ) \tilde{ X }_i ( s', t' )
			- 
			C( s, t, s', t' ) ) \, w_{i,N}.
\end{align*}
The second equality follows by change of summation, where we set $Z^{ ( k ) }_j = 0 $ (deterministic) for any $j \le 0$. In the last  equality, we have defined the (random) weight $w_{i,N}$ in the obvious way. \\
In order to enhance the clarity of our proof we make two simplifications: First, instead of proving convergence of the vector 
$
       \sqrt{ N }  
       (
         \hat{ C }_N  - C  , { B }_N^{ ( 1 ) }, \cdots, 
            { B }_N^{ ( \bl ) } 
       )  
$
we confine ourselves to convergence of their centered versions, i.e., of 
$
       \sqrt{ N }  
       (
         \tilde{ C }_N-C  , \tilde{ B }_N^{ ( 1 ) }, \cdots, 
            \tilde{ B }_N^{ ( \bl ) } 
       )  
$.
Proving that the difference of these vectors is of order $o_{ \mathbb{ P } }(1)$ follows by similar, but simpler techniques as used in the below proof and is therefore omitted. (It is a consequence of the fact that $\tilde{ X }_i$, $i \ge 1$ also satisfies a CLT). Second, w.l.o.g.\ we set $\bl = 1$, since adjusting for $\bl > 1 $ is straightforward and a notational burden.\\
	\fbox{ Step 1: } We begin, proving weak convergence of the finite dimensional distributions, 
   by means of the Cram\'{e}r--Wold device (see, for instance, Theorem~29.4 in \cite{BillBook12}). For this purpose, let  $p \in \N$ be fixed but arbitrary and consider arbitrary tuples $( s_m, t_m, s_m' , t_m') \in (K_1 \times K_2 )^2$ and numbers $a_m, a_m' \in \R$, for $1 \le j \le p$. Recall that to apply the Cram\'{e}r--Wold device, we have to establish weak convergence of the real-valued, random variables
$$
        CW_N 
    :=
        \sqrt{ N } \sum_{m=1}^p 
            \left\{ a_m  
                \left[ \tilde{ C }_N( s_m, t_m, s_m' , t_m') - C( s_m, t_m, s_m' , t_m') \right] 
                + 
                a_m' \tilde B_N^{ ( 1 ) } ( s_m, t_m, s_m' , t_m') 
            \right\}
$$
    to 
$$
    \sum_{m=1}^p 
            \left( a_m  
                G( s_m, t_m, s_m' , t_m') 
                + 
                a_m' G^{ ( 1 ) } ( s_m, t_m, s_m' , t_m') 
            \right),
$$
where $G, G^{ ( 1 ) }$ are two independent identically distributed Gaussian processes. 
We proceed by a blocking technique. It follows, by definition, that we can rewrite $CW_N$ as
\begin{align*}
& 
        CW_N 
    =  
        \frac{ 1 
        }{ 
        \sqrt{N}
        } 
        \sum_{ i =1 }^{ N} 
        \sum_{m=1}^p  
            \left( \tilde{ X }_i ( s_m, t_m ) \tilde{ X }_i ( s_m', t_m' ) 
		    -
		    C( s_m, t_m , s_m', t_m' )
		    \right) 
		    \, ( a_m + a_m' w_{ i, N } ) 
\\
	=&
		\sqrt{ \frac{ b_N }{ N } } 
		\sum_{ j = 1  }^{ \lfloor  N / b_N \rfloor } \sum_{ i = ( j - 1 ) b_N + 1 }^{ j b_N }  \sum_{m=1}^p \left( \tilde{ X }_i ( s_m, t_m ) \tilde{ X }_i ( s_m', t_m' ) 
		-
		C( s_m, t_m , s_m', t_m' )\right)  \, 
		\frac{ ( a_m + a_m' w_{ i, N } ) 
		}{
		\sqrt{b_N} 
		} 
		+ 
		\operatorname{ Rem }_1.
\end{align*}
Here $(b_N)_{ N \in \mathbb{ N } }$ is a sequence of natural numbers, such that $l_N / b_N \to 0$ and $b_N / \sqrt{ N } \to 0$ and $\operatorname{ Rem }_1$ a remainder capturing all ``overhanging terms'' with indices between $b_N \lfloor  N/b_N \rfloor$ and $N$. Using the triangle inequality (and counting terms), it is straightforward to see that 
$
    \mathbb{ E } | \operatorname{ Rem }_1|
=
    \mathcal{O}(b_N/\sqrt{N})=o(1)
$ and hence
$
\operatorname{ Rem }_1
=
o_\mathbb{P}(1)$. 
For ease of reference, we now define the random variables
\begin{align} \label{Eq_def_Y}
        Y_j & 
    :=  
        \sum_{ i =( j -1 ) b_N + 1 }^{ j b_N} 
        \sum_{m=1}^p 
            \left( \tilde{ X }_i ( s_m, t_m ) 
                \tilde{ X }_i ( s_m', t_m' ) 
		    -
		    C( s_m, t_m , s_m', t_m' )
		    \right)  
		    \, 
		    \frac{ ( a_m + a_m' w_{ i, N }) }{ 
		    \sqrt{ b_N } 
		    }
\\
	&= 
	    \sum_{ i = ( j -1 ) b_N + 1 }^{ j b_N}   
	    \, \sum_{m=1}^p \breve Y_{i,m} 
	    \frac{ ( a_m + a_m' w_{ i, N } ) 
	    }{ 
	    \sqrt{ b_N } 
	    }, \nonumber
\end{align}
where $\breve Y_{i,m}$ is defined in the obvious way.  This allows us to write $CW_N = \sqrt{\frac{b_N}{N}} \sum_j Y_j +o_{\mathbb{P}}(1)$. In the following, we demonstrate weak convergence of the (non-negligible) sum, by virtue of the central limit theorem of Wooldridge--White (Theorem~5.20 in \cite{White2001}).  Therefore, we have to check three conditions: Sufficiently fast decay of the mixing coefficients of $(Y_j)_{j=1,\cdots,\lfloor  N/b_N \rfloor }$, uniform boundedness of fourth moments and existence of the asymptotic variance. \\
\textbf{Mixing:} The variables $(Y_j)_{j=1, \cdots, \lfloor  N/b_N \rfloor }$ form a triangular array of $\alpha$-mixing random variables, with mixing coefficients satisfying (for all $N \ge N_0$ and some sufficiently large $N_0 \in \N$) for $|j-j'|\ge 2$
\begin{equation} \label{Eq_mixing_Y}
        \alpha(Y_j, Y_{j'}) 
    \le 
        \kappa 
        \left( 
        ( | j - j' | - 1 ) \, (b_N - l_N + 2 ) \right)^{-a}.
\end{equation}
Here we have used Assumption~\ref{Assum_MainAssumBanachCLT} $iv)$ together with the definition of the random variables $Y_j, Y_{j'}$. More specifically, we have used that $Y_j$ only depends on the functions 
$
\tilde{ X }_{ ( j - 1 ) b_N + 1}, \cdots,
\tilde{ X }_{j b_N}
$ 
and on 
$
w_{ ( j - 1 ) b_N +1 , N }, \cdots, w_{ j b_N, N }
$. 
In particular, the mixing coefficients for $|j-j'|\ge 2$ converge (uniformly) to $0$ and hence the decay condition in the theorem of Wooldridge--White is trivially satisfied.\\
\textbf{Bounded fourth moments:} We now want to show that  
$
\mathbb{ E } [Y_j^4]
\le 
\kappa
<
\infty$  uniformly in $j$ and begin with the following calculation
\begin{align} \label{Eq_moment_bound}
        \mathbb{ E } [Y_j^4] 
    \le& 
        \kappa 
        \sum_{ m = 1 }^p  
        \mathbb{ E } 
        \left[ 
            \left| 
            \sum_{ i = ( j - 1 ) b_N + 1 }^{ j b_N}   
            \, \breve Y_{ i, m } 
            \frac{ ( a_m + a_m' w_{ i, N } ) 
            }{
            \sqrt{ b_N } } \right|^4 \right]
\\
    = 
&
    \kappa 
        \sum_{ m = 1 }^p 
        \mathbb{ E } 
        \left[ 
            \mathbb{ E } 
            \left[ 
                \left|\sum_{ i = ( j - 1 ) b_N+1 }^{ j b_N} \breve Y_{i,m}  
                \, 
                \frac{(a_m+a_m' w_{ i,N } ) 
                }{ 
                \sqrt{ b_N } 
                } 
                \right|^4
                \bigg| w_{ 1, N }, \ldots, w_{ N, N }
            \right] 
        \right] \nonumber
\\
    \le 
&
        \kappa 
        \sum_{m=1}^p  
        \mathbb{ E } 
        \left[ 
            \left( 
            \sum_{ i = ( j - 1 ) b_N+1 }^{ j b_N  } 
                \left(
                \frac{ ( a_m + a_m' w_{ i, N } ) 
                }{
                \sqrt{ b_N } 
                } 
                \right)^2 
            \right)^4 
        \right] 
    \le 
        \kappa. \nonumber
\end{align}
In the first inequality the constant $\kappa$ only depends on $p$. In the first equality, we condition on the $w_{ i, N }$, $1 \le i \le N$ (using the tower property). 
Hence, we can apply Theorem~3 from \cite{Yoshihara1978}, to get the second inequality. Notice that we assume $\breve Y_{ i, m }$ to have a finite $\gamma/2$-moment, which holds since by $iii)$ of Assumptions~\ref{Assum_MainAssumBanachCLT} $X_i$ have finite $\gamma$-moments. In the above cited theorem, we have made the parameter choices $m=4$ and $\delta= \gamma/2 - 4$ and its summability condition $(ii)$ is met since $a> 2 \gamma / ( \gamma - 8 )$ according to $iv)$ of our Assumptions~\ref{Assum_MainAssumBanachCLT}. The constant $\kappa$ in this inequality only depends on the summability of the mixing coefficients (see the proof of Theorem~3 of \cite{Yoshihara1978} for more details) and moments (of $\breve Y_i$), but not on anything else. For the third inequality, we notice that $(a_m+a_m' w_{i,N})^2$ has bounded moments of any order ($w_{i,N}$ is normal with variance 
$1$) and hence by counting terms the last inequality follows, with $\kappa$ now also depending on $a_m, a_m'$, $1 \le m \le p$.\\

\textbf{Convergence of the variance:} Next, we have to prove the long-run variance exists. In order to achieve this, we decompose
\begin{equation} \label{Eq_CW_variance}
        \mathbb{ E } 
        \left[ 
            \left( 
            \sqrt{ \frac{ b_N }{ N } } 
            \sum_j Y_j 
            \right)^2 
        \right] 
    = 
        \frac{ b_N }{ N } 
        \sum_{j=1}^{ \lfloor N / b_N \rfloor }  
        \mathbb{ E } Y_j^2 
    + 
        \frac{ b_N }{ N } \sum_{ | j - j' | = 1 } | \mathbb{ E } Y_j Y_{ j' } | 
    +  
        2 \sum_{h \ge 2} \sup_{ j } | \mathbb{ E } Y_j Y_{ j + h } |.
\end{equation}
Using the mixing condition \eqref{Eq_mixing_Y} we see that the last sum can be bounded (for sufficiently large $N$) by
$$
        \kappa b_N^{-a (\gamma-4)/\gamma} 
        \sup_{j}( \mathbb{ E } |Y_j|^{\gamma/2})^{4/\gamma} \sum_{h \ge 2}  h^{-a (\gamma-4) / \gamma }  
    =
        o(1).
$$
Here we have used a standard covariance inequality for $\alpha$-mixing (see Lemma~3.11 in \cite{DehMikBook02}) to bound $|\mathbb{ E } Y_j Y_{j+h}|$  and for the small $o$-rate, that the sum on the right converges for $a>\gamma/(\gamma-4)$ (see Assumption~\ref{Assum_MainAssumBanachCLT}~$iv)$). Similarly, we can upper bound the other sum of covariances by
$$
        \frac{ b_N }{ N } \sum_{ | j - j' | = 1 } |\mathbb{ E } Y_j Y_{ j' } | 
    \le 
        \kappa  \sup_{j} | \mathbb{ E } Y_j Y_{ j + 1 } |.
$$
We further bound the covariance $|\mathbb{ E } Y_j Y_{j+1}|$ (and show that it converges to $0$ uniformly in $j$). To this end, let $(s_N)_{N \in \N}$ denote an increasing sequence of natural numbers with $l_N/s_N \to 0$ and $s_N/b_N \to 0$. 
Recalling \eqref{Eq_def_Y} we can now split up
\begin{align*}
Y_{j} = & \sum_{ i =j b_N+1 }^{ (j+1)b_N-s_N} \sum_{m=1}^p \breve Y_{i,m} \frac{(a_m+a_m' w_{i,N})}{\sqrt{b_N}}+ \sum_{ i =(j+1)b_N-s_N+1}^{ (j+1)b_N}\sum_{m=1}^p \breve Y_{i,m} \frac{(a_m+a_m' w_{i,N})}{\sqrt{b_N}} =: Y_{j,1}+Y_{j,2}, \\
Y_{j+1} = & \sum_{ i =(j+1) b_N+1 }^{ (j+2)b_N-s_N}  \sum_{m=1}^p\breve Y_{i,m} \frac{(a_m+a_m' w_{i,N})}{\sqrt{b_N}}+ \sum_{ i =(j+2)b_N-s_N+1}^{ (j+2)b_N} \sum_{m=1}^p \breve Y_{i,m} \frac{(a_m+a_m' w_{i,N})}{\sqrt{b_N}} =: Y_{j+1,1}+Y_{j+1,2}.
\end{align*}
Here the variables $Y_{j,1}, Y_{j,2}, Y_{j+1,1}, Y_{j+1,2}$ are defined in the obvious way. Using the same techniques as in \eqref{Eq_moment_bound}, it follows that $\mathbb{ E } [(Y_{j,1})^4], \mathbb{ E } [(Y_{j+1,1})^4]<\infty$ and (counting terms)  $\mathbb{ E } [(Y_{j,2})^4], \mathbb{ E } [(Y_{j+1,2})^4]=o(1)$.
The Cauchy--Schwarz inequality thus implies $\mathbb{ E } [Y_{j,1}Y_{j+1,2}], \mathbb{ E } [Y_{j,2}Y_{j+1,2}], \mathbb{ E } [Y_{j,2}Y_{j+1,1}]=o(1)$. Moreover, using the covariance inequality for $\alpha$-mixing $\mathbb{ E } [Y_{j,1}Y_{j+1,2}] \le \{ \mathbb{ E } [(Y_{j,1})^4]\mathbb{ E } [(Y_{j+1,1})^4] \}^{1/2} \alpha(Y_{j,1}, Y_{j+1,1})^{1/2}$. Now  $ \alpha(Y_{j,1}, Y_{j+1,1}) \le \kappa (s_N-l_N)^{-a} =o(1)$, where we have used Assumption~\ref{Assum_MainAssumBanachCLT} $iv)$ together with the definition of the random variables $Y_{j,1}, Y_{j+1,1}$. Recall therefore, that  $Y_{j,1}$ only depends on the functions $\tilde X_{ j b_N + 1 }, \ldots,\tilde X_{(j+1)b_N-s_N}$ and on the weights 
$w_{j b_N + 1, N }, \ldots, w_{ ( j + 1 ) b_N - s_N, N }$.
These considerations imply that the second and third term, on the right of \eqref{Eq_CW_variance}, i.e., the mixed terms, are of order $o(1)$ and hence the variance is equal to $\frac{b_N}{N} \sum_{j=1}^{\lfloor N/b_N \rfloor} \mathbb{ E } Y_j^2$. Finally, we have to show that this variance convergences. Therefore, let us consider, for $1 < j \le \lfloor N/b_N \rfloor$,  $\mathbb{ E } Y_j^2 $ 
\begin{align} \label{Eq_CW_variance_single}  
=&  
        \mathbb{ E } 
        \left[ 
            \mathbb{ E } 
            \left[ 
                \left( \sum_{ m = 1 }^p \sum_{ i = j b_N + 1  }^{ ( j + 1 ) b_N } \breve Y_{ i, m }  \, 
                \frac{ ( a_m + a_m' w_{ i, N } ) 
                }{
                \sqrt{ b_N } 
                } 
                \right)^2 
                  \Big| w_{ 1, N } \ldots,  
                  w_{ N, N } 
            \right]
        \right]
\\
    =&
        \sum_{ m, l = 1 }^p 
        \frac{ 1 
        }{
         b_N }
        \sum_{ i, k = j b_N + 1 }^{ ( j + 1 ) b_N }  \mathbb{ E } [ \breve Y_{ i, m } \breve Y_{ i, l } ]  \,  \mathbb{ E } [ ( a_m + a_m' w_{ i, N } )( a_l + a_l' w_{ k, N } ) ]
        \nonumber
\end{align}
where we used independence of $\breve Y_{ i, m }$ and $w_{ i, N }$, $1 \le i \le N$. 
Due to our stationarity Assumption~\ref{Assum_MainAssumBanachCLT}, we have $\mathbb{ E } [\breve Y_{i,m} \breve Y_{k,l}] = \tau_Y^{m,l}(|i-k|)$ for a function $\tau_Y^{m,l}: \N \to \R$. Moreover, by construction the covariance of $w_{i,N}$ and $w_{k,N}$ also only depends on $|i-k|$ and $N$. Hence, we can consistently define
$$
    \tau_{w,N}^{m,l}(|i-k|)
:= 
    \mathbb{ E } [(a_m+a_m' w_{i,N}) (a_l+a_l' w_{k,N})  ] =a_m a_l+a_m'a_l' 
    \mathbb{ E } [ w_{i,N} w_{k,N} ] 
$$
which converges as $N \to \infty$ and $\mathbb{ E } [w_{i,N}w_{k,N}] \to 1$ (this follows by definition of the weights; see the very beginning of this proof). We can hence rewrite 
$
    \eqref{Eq_CW_variance_single}$
    as
$$
    \sum_{m,l=1}^p\sum_{|h|<b_N} \tau_Y^{m,l}(|h|) \tau_{w,N}^{m,l}(|h|) (1-|h|/b_N).
$$
Notice that this object does not depend on $j$ and converges to the long-run variance $ \sum_{m,l=1}^p\sum_{h \in \Z} \tau_Y(|h|)^{m,l}$. This latter convergence can be established directly by the dominated convergence theorem. Indeed, first observe that $|\tau_{w,N}^{m,l}(|h|)|\le \kappa$, for $ (1-|h|/b_N) \le 1$. Secondly, the terms $\tau_Y^{m,l}(|h|)$ are summable for any $m,l$, which again follows by the covariance inequality for $\alpha$-mixing random variables. Let $|i-j|=|h|\ge 1$, then
$$
\tau_Y^{m,l}(|h|)=\mathbb{ E } [\breve Y_{i,m} \breve Y_{k,l}]  \le \kappa \{\mathbb{ E } [|\breve Y_{i,m}|^{\gamma/2}]\}^{2/\gamma} \{\mathbb{ E } [|\breve Y_{k,l}|^{\gamma/2}]\}^{2/\gamma} (|h|+1)^{-a (\gamma-4)/\gamma},
$$
which is summable since $a>\gamma/(\gamma-4)$ and due to uniform boundedness of the moments, see $iii)$ of our Assumptions~\ref{Assum_MainAssumBanachCLT}. It follows by straightforward modifications that also $\mathbb{ E } Y_1^2$ converges to the same variance. As a consequence of the above considerations, the variance in \eqref{Eq_CW_variance} converges and we can apply  the central limit theorem of Wooldridge--White (Theorem~5.20 in \cite{White2001}), which entails convergence of the marginal distributions.\\

\noindent It remains to show that 
$
\sqrt{ N } 
\sum_{ p = 1 }^{ N } 
a_m ( \tilde{ C }_N - C ) (s_m, t_m, s_m', t_m' ) 
$ 
and
$
\sqrt{ N } 
\sum_{ p = 1 }^{ N } 
a_m' \tilde{ B }^{ ( 1 ) }_N (s_m, t_m, s_m', t_m' ) 
$
are asymptotically independent and have the same (asymptotic) variance. The asymptotic independence follows readily from the uncorraletedness of the sequences 
$(\tilde{ X }_i )_{ i \in \mathbb{ Z } } $ 
and 
the weights together with the Gaussian limit. As for the equivalence of their variance, this follows by a quick calculation using similar techniques as in the first part of our proof.\\

    Before proceeding to the proof of tightness by asymptotic equicontinuity we state the definition of packing numbers for the sake of completeness.

\begin{defi} 
    Let $( \mathcal{ X }, d )$ be a metric space and $B( x, r)$ a ball of radius $r > 0$ 
    centered around $x \in  \mathcal{ X }$. Then for $\varepsilon > 0$, we define the $\varepsilon$-packing number $D( \varepsilon, d )$ as
$$
    \sup 
    \left\{ n \in \N \, 
        |  
        \, 
        \bigcup_{i = 1}^n B( x_i, \varepsilon ) \supset  \mathcal{ X } 
        \text{ where  $d( x_i , x_j ) > \varepsilon$, $x_i \in \mathcal{ X }$,  $1 \le i \le n$ } 
    \right\}.
$$
\end{defi}

\noindent Note that, clearly, the packing number becomes bigger for smaller $\varepsilon > 0$ and remains finite for any totally bounded sets.  

  \noindent 
In the subsequent part of our proof, for  $K \subset \mathbb{ R }^p$ we set $\rho_K( x, y )
:= 1 \wedge \max_{ i = 1 }^p | x_i - y_i |$.

\noindent	\fbox{ Step 2: }
    We show that the process $\sqrt{ N } (\tilde{ C }_N - C , \tilde{ B }^{ ( 1 ) }_N )$ is asymptotically uniformly $\tilde{ \rho }$-equicontinuous in probability, where 
    $$ 
    \tilde{ \rho } ( (s, t, u, v), (s', t', u', v') )
    := 
    \max\{ 
    \rho_{ K_1 \times K_2 } ( (s, t), ( s', t') )
    , 
    \rho_{ K_1 \times K_2 } ( (u, v), ( u', v') )
    \} 
$$ 
    is our  metric on $( K_1 \times K_2 )^2$. 
    Moreover, recall that by Theorem~1.5.7 in \cite{vdVWelBook96} asymptotic equicontinuity of the process is equivalent to tightness.
    For $\zeta >0$, define the set of pairs 
$$
        \mathcal{ A }_{ \zeta }
    :=
        \{
        ( (s, t, u, v), (s', t', u', v') ) 
        \in 
        ( K_1 \times K_2 )^4
        \,
        |
        \,
        \tilde{ \rho }( 
            (s, t, u, v), (s', t', u', v')
            ) 
        < 
        \zeta
        \}.
$$ 
    We will now bound, for $\epsilon > 0$,
\begin{align} 
&
    \limsup_{ N \to \infty }
    \mathbb{ P } \left\{
    \sup_{ ( \mathbf{ x } , \mathbf{ x }' )
        \in \mathcal{ A }_{ \zeta }
        }
    \left|
        \sqrt{ N }
        (\tilde{ C }_N - C , \tilde{ B }^{ ( 1 ) }_N ) 
        ( \mathbf{ x } )
        - 
        \sqrt{ N } (\tilde{ C }_N - C , \tilde{ B }^{ ( 1 ) }_N )
        ( \mathbf{ x }' )
    \right|
> 
    \varepsilon
    \right\}
    \nonumber
\\
\le&
    \limsup_{ N \to \infty } 
    \mathbb{ E } \left(
    \sup_{ ( \mathbf{ x } , \mathbf{ x }' )
        \in \mathcal{ A }_{ \zeta }
        }
    \left|
        \sqrt{ N }
        (\tilde{ C }_N - C , \tilde{ B }^{ ( 1 ) }_N ) 
        ( \mathbf{ x } )
        - 
        \sqrt{ N } (\tilde{ C }_N - C , \tilde{ B }^{ ( 1 ) }_N )
        ( \mathbf{ x }' )
    \right|
    \right)^J
    / \varepsilon^J . \label{Eq_ProbModContStatAndBootstr}
\end{align}
    Using Theorem~2.2.4. in \cite{vdVWelBook96} it is enough to bound the $J$-th moment of the difference in two locations. More precisely, for a $J$ to be specified below, we upperbound 
\begin{equation} \label{Eq_MomCovOpForEquicont}
    \mathbb{ E } \left| 
        \sqrt{ N } 
        ( \tilde{ C }_N - C ) (s, t, u, v ) 
        - 
        \sqrt{ N } 
        ( \tilde{ C }_N - C ) (s', t', u', v' )
    \right|^J
\end{equation}
    and
\begin{equation} \label{Eq_MomBootstrpForEquicont}
    \mathbb{ E } \left| 
        \sqrt{ N }
        \tilde{ B }^{ ( 1 ) }_N (s, t, u, v ) 
        - 
        \sqrt{ N }
        \tilde{ B }^{ ( 1 ) }_N (s', t', u', v' )
    \right|^J .
\end{equation}
    Applying Theorem~3 in \cite{Yoshihara1978} on 
    the $\alpha$-mixing random variables:
    $$
    \tilde{ X }_1 ( s, t )\tilde{ X }_1 ( u, v ) 
    - 
    \mathbb{ E } \left[ \tilde{ X }_1 ( s, t )  \tilde{ X }_1 ( u, v ) \right]
    -
    \tilde{ X }_1 ( s', t' )\tilde{ X }_1 ( u', v' ) 
    + 
    \mathbb{ E } \left[ \tilde{ X }_1 ( s', t' )  \tilde{ X }_1 ( u', v' ) \right]
    $$ 
    with weights $a_i = 1 / \sqrt{ N }$ 
    and (arbitrary, but small) $\delta >0$ 
    we see \eqref{Eq_MomCovOpForEquicont} is less than
\begin{equation}
\label{Eq_BoundApplicationTheo3Yosh}
        \const
        \mathbb{ E }
        \left(
        \left|
        \tilde{ X }_1 (s, t ) \tilde{ X }_1 (u, v )
        -
        \mathbb{ E } \left[ \tilde{ X }_1 ( s, t )  \tilde{ X }_1 ( u, v ) \right]
    -
        \tilde{ X }_1 (s', t' ) \tilde{ X }_1 (u', v' )
    +
    \mathbb{ E } \left[ \tilde{ X }_1 ( s', t' )  \tilde{ X }_1 ( u', v' ) \right]
        \right|
        \right)^{ J }
\end{equation}
    whenever 
   $
   \sum_{ i \ge 1 } 
   (i + 1 )^{ J /2 + 1 } 
   \alpha( i )^{ \delta / ( J + \delta ) }
   <
   \infty.
   $

    The latter term will be bound using continuity properties of our 
     random variables
      as assumed in Assumption~\ref{Assum_MainAssumBanachCLT}$(ii)$. Indeed, taking the expectation of relation~\ref{Eq_HoelderContData}, note we have that 
$$
    | \tilde{ X }_i ( s, t ) - 
    \tilde{ X }_i ( s', t' ) |
    \leq
        ( M + \mathbb{ E } M ) 
        \rho_{ K_1 \times K_2} 
        ( ( s, t), ( s', t' ) )^{ \beta }
    $$
    so the centered  random variables  are H\"older continuous with as new random constant 
    $\tilde{ M } := M + \mathbb{ E } M$. 
    A quick calculation then shows that
$$
        \left| 
        \tilde{ X }_1 ( s, t ) \tilde{ X }_1 ( u, v )
        -
        \tilde{ X }_1 ( s', t' ) 
        \tilde{ X }_1 ( u', v' )
        \right|
    \le    
        2 \| \tilde{ X }_1 \| 
        \,
        \tilde{ M }
        \tilde{ \rho }^{ \beta } ( (s, t, u, v), (s', t', u', v') ),
$$
    and similarly as above 
\begin{equation*}
        \left| 
        \mathbb{ E } \left(
        \tilde{ X }_1 (s, t ) \tilde{ X }_1 (u, v )
        \right)
        -
        \mathbb{ E } \left(
        \tilde{ X }_1 (s, t ) \tilde{ X }_1 (u, v )
        \right)
        \right|
\le
        2 \mathbb{ E } | \tilde{ M } |  
        \tilde{ \rho }^{ \beta } 
        ( ( s, t, u, v ), ( s', t', u', v' ) ).  
\end{equation*}

   Using the two above bounds, we see that \eqref{Eq_BoundApplicationTheo3Yosh} 
   can be upperbounded by 
$$
   \const
    \mathbb{ E }\left[ \tilde{ M }^J 
    \left\| X_1 \right\|^J \right] 
    \tilde{ \rho }^{ J \beta }
    ( (s, t, u, v ) , ( s', t', u', v' ) ),
$$
where $\const$ may depend on $\beta, J$ but not on $N$.

   To bound \eqref{Eq_MomBootstrpForEquicont} we first condition on the weights
   $ w_{ i, N }^{ ( 1 ) }$, $ 1 \le i \le N$. 
   Then the argument runs along the same lines as the one for bounding
   \eqref{Eq_MomCovOpForEquicont}, namely a straightforward application of Theorem~3 in \cite{Yoshihara1978}. This gives the bound
\begin{align*}
    &
        \const
        \mathbb{ E } 
        \left\{
        \left(
        \sum_{ i = 1 }^N 
            \frac{ 
            \left( 
            w_{ i, N }^{ ( 1 ) } \right)^2 
            }{
            N
            }
        \right)^J
        \mathbb{ E }
        \left[
            \left(
            ( \tilde{ X }_1 \cdot \tilde{ X }_1 - C )
            (s, t, u, v ) 
            - 
            ( \tilde{ X }_1 \cdot \tilde{ X }_1 - C )
            ( s', t', u', v' )
            \right)^J 
        \big|
        w_{ i, N }^{ ( 1 ) }, 1 \le i \le N
        \right]
        \right\}
\\
    \le&
        \const
        \mathbb{ E } \left(
        \sum_{ i = 1 }^N 
            \frac{ 
            \left( 
            w_{ i, N }^{ ( 1 ) } \right)^2
            }{
            N
            }
        \right)^J
        \mathbb{ E } \left( \tilde{ M }^J 
    \left\| X_1 \right\|^J \right) 
    \tilde{ \rho }^{ J \beta } 
    ( (s, t, u, v ), ( s', t', u', v' ) )
\end{align*}
   where we also used independence of $\tilde{ X }_i$, $w_{ i, N }^{ ( 1 ) }$. 
   Another application of Yoshihara's Theorem~3 on the
   $l_N-$dependent sequence $w_{i, N}$ allows us to bound their $J$-th moment, which remains finite since $l_N /N \to 0$, as $N \to \infty$.
    Recall that
    $
    \tilde{ \rho } 
    = 
    \max\{ \rho_{ K_1 }, \rho_{ K_2 } \}
    $ and all norms are equivalent on finite dimensional spaces. Since in general finite dimensional spaces the following bound holds (see, for instance, Ex.~6 of Section~2.1 in \cite{vdVWelBook96})
$$
        D( \varepsilon, \tilde{ \rho }^{ \beta } ) 
    < 
        \frac{ \const 
        }{ \varepsilon^{ 2  (
            d_1 
            + 
            d_2  
            )
            /   \beta 
            }
        }
$$
    (here $\const$ depends on $d_i := \operatorname{dim}( K_i )$, $i=1, 2$ as well as the diameter of $K_i$). 
    Choosing the parameter $J = \lceil 2  ( d_1 + d_2 ) / \beta \rceil + 1  $ and using Markov's inequality together with Theorem~2.2.4 
    in \cite{vdVWelBook96} the expression~\eqref{Eq_ProbModContStatAndBootstr} for any arbitrary $\nu > 0$, is less than
$
    ( \const / \varepsilon^J ) 
    ( \eta^{  - 2  (d_1 + d_2  )/ ( J \beta ) + 1 } 
    + 
    \zeta \eta^{  - 4  (d_1 + d_2  )^2/ (J \beta^2) } )
$   
    which can be made arbitrarily small picking $\zeta$ small and then $\nu$ small.
    Consequently, our process is $\tilde{ \rho }$-equicontinuous in probability.

\newpage

\section{Additional results}

\noindent Throughout this section, we always assume that the eigenvalues of an operator $A$ satisfy $|\lambda_1^{A}|\ge |\lambda_{2}^{A}| \ge |\lambda_{i}^{A}| $ for any $i \ge 3$.

\begin{lem}\label{Lemma_cont}
    Let $A \in \mathcal{ C }(K^2)^{Sym}$ be a kernel, with eigenvalues $|\lambda_1^{A}|>|\lambda_{2}^{A}|$, then we can find a continuous representative of the first eigenfunctions $v_1^{A} \in \mathcal{ C }(K)$.
\end{lem}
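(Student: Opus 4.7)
The plan is to construct a continuous representative by exploiting the eigenvalue equation $A[v_1^A] = \lambda_1^A v_1^A$, which turns a priori $L^2$-regularity into pointwise continuity via the smoothing property of integration against a continuous kernel.

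First, I would invoke the spectral theorem for compact self-adjoint operators on $L^2(K)$: since $A \in \mathcal{C}(K^2)^{Sym}$ with $K \subset \mathbb{R}^d$ compact, the induced integral operator $T_A: f \mapsto \int_K A(\cdot, y) f(y)\, dy$ is Hilbert--Schmidt (hence compact) and self-adjoint on $L^2(K)$, and so admits an orthonormal basis of eigenfunctions in $L^2(K)$ with real eigenvalues ordered as $|\lambda_1^A| \ge |\lambda_2^A| \ge \cdots$. Fix any $L^2$-representative $v_1^A$ associated with $\lambda_1^A$. Note that $\lambda_1^A \neq 0$: otherwise $|\lambda_i^A| \le |\lambda_1^A| = 0$ for all $i$, forcing $|\lambda_2^A| = 0$ and contradicting the strict gap $|\lambda_1^A| > |\lambda_2^A|$. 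Hence division by $\lambda_1^A$ is legitimate.

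Second, I would define the candidate
\[
\tilde{v}(x) := (\lambda_1^A)^{-1} \int_K A(x,y)\, v_1^A(y)\, dy, \qquad x \in K,
\]
and verify $\tilde{v} \in \mathcal{C}(K)$. Since $A$ is continuous on the compact set $K \times K$, it is uniformly continuous; so for any sequence $x_n \to x$ in $K$ one has $\sup_{y \in K} |A(x_n, y) - A(x, y)| \to 0$. The Cauchy--Schwarz inequality then yields
\[
|\tilde{v}(x_n) - \tilde{v}(x)| \le |\lambda_1^A|^{-1} |K|^{1/2} \|v_1^A\|_{L^2} \sup_{y \in K} |A(x_n, y) - A(x, y)| \longrightarrow 0,
\]
so $\tilde v$ is continuous on $K$.

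Finally, the eigenvalue equation $T_A[v_1^A] = \lambda_1^A v_1^A$ (valid in $L^2$) gives $\tilde{v} = v_1^A$ almost everywhere on $K$. Thus $\tilde{v}$ is a continuous function that lies in the $L^2$-equivalence class of $v_1^A$, and we may designate it as the required representative. No substantive obstacle arises: the argument is a routine smoothing/bootstrap step, and the only delicate point is that the strict spectral gap guarantees $\lambda_1^A \neq 0$, which is indispensable for normalizing the right-hand side above.
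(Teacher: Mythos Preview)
Your proof is correct and follows essentially the same approach as the paper: both define the continuous representative via the eigenvalue relation $\tilde v = (\lambda_1^A)^{-1} A[v_1^A]$ and verify continuity using the uniform continuity of $A$ on the compact set $K^2$ together with the Cauchy--Schwarz inequality. Your version is slightly more explicit (in particular you spell out why $\lambda_1^A \neq 0$), but the argument is the same.
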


\begin{proof}
    We first notice that in an $L^2$-sense the equality $v_1^{A} = A[v_1^{A}]/\lambda_1^{A}$ holds. Now defining $v_1^{A}$ point-wise by the expression $A[v_1^{A}]/\lambda_1^{A}$, we see that it is already continuous, as
    \begin{align*}
    & v_1^{A}(t+h)-v_1^{A}(t) = (\lambda_1^{A})^{-1}\int_K [A(s,t+h)-A(s,t)]v_1^{A}(s)ds\\
    \le &  (\lambda_1^{A})^{-1} \Big\{\int_K [A(s,t+h)-A(s,t)]^2 ds\Big\}^{1/2} \le \const \sup_{t}|A(s,t+h)-A(s,t)| =o(1).
    \end{align*}
    Here we have used  Cauchy--Schwarz  in the first inequality. The small-$o$ refers to convergence as $|h|\to 0$ and follows because the continuous kernel $A$ is uniformly continuous on the compact set $K^2$. 
\end{proof}

\begin{lem} \label{Lem_bounds_op}
	Suppose that $A_0 \in \mathcal{ C }(K^2)^{Sym}$ with $A_0$ satisfying $|\lambda_1^{A_0}|>|\lambda_2^{A_0}|$. Furthermore, consider for some $\delta>0$ the operator $A \in \mathcal{ C }(K^2)^{Sym}$ with $\| A - A_0 \| \le  \delta$ and a choice of eigenfunction s.t.\ $\int_K v_1^{A_0}(t) v_1^{A}(t)dt \ge 0$. Then it holds with a constant $\const := \const (A_0, \delta)$

\begin{itemize}
\item[ i) ] For $j=1,2$
\begin{equation*} 
		\left| |\lambda_j^A| - |\lambda^{A_0}_j| \right|
	\le 
		\const \| A-A_0 \|.
\end{equation*}

	\item[ ii) ] 
\begin{equation*} 
		\left\{\int_K (v_1^A(t) - v_1^{A_0}(t))^2 dt \right\}^{1/2}
	\le 
		 \const \| A - A_0\|.
\end{equation*}
\end{itemize}
\end{lem}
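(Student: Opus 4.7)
My plan is to reduce both inequalities to classical perturbation theory for compact self-adjoint operators on $L^2(K)$. The first observation I would make is a norm comparison: for any symmetric continuous kernel $B$, the associated $L^2$-integral operator satisfies $\|B\|_{op} \le \|B\|_{HS} \le |K|\,\|B\|$, where $|K|$ is the Lebesgue measure of $K$ and $\|\cdot\|$ is the sup-norm on kernels. Applied to $B = A - A_0$, this lets me work throughout with the $L^2$-operator norm at the cost of an overall factor depending only on $K$.

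For part $i)$, I would invoke Weyl's inequality for compact self-adjoint operators: once the eigenvalues are ordered consistently, $|\lambda_j^A - \lambda_j^{A_0}| \le \|A - A_0\|_{op}$ for every $j$. In the covariance setting that is the focus of the paper all eigenvalues are non-negative, so ordering by absolute value coincides with signed decreasing order, and the gap $|\lambda_1^{A_0}| > |\lambda_2^{A_0}|$ together with a small enough $\delta$ keeps the two leading eigenvalues of $A$ separated from the tail. Combined with the norm comparison above, this yields $i)$ with $\const = |K|$.

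For part $ii)$, I would expand $v_1^A$ in the $L^2$-orthonormal eigenbasis $\{w_j\}_{j \ge 1}$ of $A_0$ (so $w_1 = v_1^{A_0}$) as $v_1^A = \sum_j c_j w_j$, with $\sum_j c_j^2 = 1$. Applying the eigen-equation $A v_1^A = \lambda_1^A v_1^A$ and taking $L^2$-inner products with $w_j$ gives
\[
c_j \bigl(\lambda_1^A - \lambda_j^{A_0}\bigr) = \bigl\langle (A - A_0)\, v_1^A,\; w_j\bigr\rangle_{L^2}.
\]
Squaring, summing over $j \ge 2$, and using Parseval bounds the right-hand side by $\|(A - A_0) v_1^A\|_{L^2}^2 \le \|A - A_0\|_{op}^2$. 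Part $i)$ together with the spectral gap $g := |\lambda_1^{A_0}| - |\lambda_2^{A_0}|$ and $\delta$ small enough ensures $|\lambda_1^A - \lambda_j^{A_0}| \ge g/2$ for all $j \ge 2$, so $\sum_{j \ge 2} c_j^2 \le 4\,\|A - A_0\|_{op}^2 / g^2$. The sign-alignment hypothesis $\int_K v_1^{A_0}(t) v_1^A(t)\, dt \ge 0$ pins down $c_1 \ge 0$, and the short identity $\|v_1^A - v_1^{A_0}\|_{L^2}^2 = 2(1 - c_1) \le 2(1 - c_1^2) = 2 \sum_{j \ge 2} c_j^2$ (valid for $c_1 \ge 0$) then converts this bound on the tail coefficients into the estimate claimed in $ii)$.

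The main point requiring care is the uniform control of the denominator $\lambda_1^A - \lambda_j^{A_0}$ for every $j \ge 2$: this is where I feed $i)$ back into the proof of $ii)$, and where the dependence of $\const$ on both $A_0$ (through the gap $g$) and on $\delta$ (to keep the Weyl perturbation strictly smaller than $g/2$) enters. Once this gap-preservation step is in place, everything else reduces to a routine Parseval computation.
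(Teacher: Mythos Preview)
Your argument is correct and follows essentially the same route as the paper: both reduce to classical $L^2$-perturbation theory for compact self-adjoint operators via the observation that the sup-norm on kernels dominates the Hilbert--Schmidt (hence operator) norm on $L^2(K)$. The paper simply cites Lemmas~2.2--2.3 of Horv\'ath--Kokoszka for the eigenvalue and eigenfunction bounds, whereas you write out the Weyl inequality and the Davis--Kahan style eigenbasis expansion explicitly; the underlying content is the same.
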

\noindent The identities follow directly from $i)$ in \cite{HorKokBook12} (Lemmas~2.2-3), applied to the operators $A$ and $A_0$.
Notice that we have here exploited that the sup-norm is stronger than the $L^2$-norm.

\begin{lem}\label{Lemma_eigen_det}
    Let $A_0 \in \mathcal{ C }(K^2)^{Sym}$ be a kernel, with eigenvalues $|\lambda_1^{A_0}|>|\lambda_{2}^{A_0}|$ and eigenfunction $v_1^{A_0}$ (where some choice of sign for the eigenfunction is fixed). Then there exists a $\delta=\delta(A_0)>0$ sufficiently small, s.t.\ for any $A \in \mathcal{ C }(K^2)^{Sym}$ with $\|A-A_0\| \le \delta$ it holds that $|\lambda_1^{A}|>|\lambda_{2}^{A}|$ and some choice of sign exists for $v_1^{A}$ s.t. $\int_K (v_1^{A_0}(t)-v_1^{A}(t))^2dt <\int_K (v_1^{A_0}(t)+v_1^{A}(t))^2dt$.
\end{lem}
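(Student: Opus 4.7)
The plan is to establish the two claims in sequence, leveraging Lemma~\ref{Lem_bounds_op} together with elementary Hilbert-space geometry.

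First, I would handle the eigenvalue separation. Fix any $\delta_0>0$ small enough that Lemma~\ref{Lem_bounds_op}(i) applies, giving a constant $\const=\const(A_0,\delta_0)$ with
\[
  \big| |\lambda_j^A| - |\lambda_j^{A_0}| \big| \le \const \,\| A - A_0 \|, \qquad j=1,2,
\]
for all $A$ with $\|A-A_0\| \le \delta_0$. Setting the spectral gap $\eta := |\lambda_1^{A_0}| - |\lambda_2^{A_0}|>0$, the triangle inequality yields $|\lambda_1^A| - |\lambda_2^A| \ge \eta - 2\const\|A-A_0\|$, which is strictly positive whenever $\|A-A_0\| < \eta/(2\const)$. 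Hence, shrinking $\delta$ to $\delta \le \min\{\delta_0,\eta/(4\const)\}$, the top eigenvalue of $A$ is simple and $v_1^A$ is well-defined up to a global sign.

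Second, I would pin down the correct sign for $v_1^A$ by a perturbation argument. Among the two available sign choices, select the one with $\int_K v_1^{A_0}(t) v_1^A(t)\,dt \ge 0$. With this selection, Lemma~\ref{Lem_bounds_op}(ii) gives a constant $\const'=\const'(A_0,\delta)$ such that
\[
  \Bigl\{\int_K \bigl(v_1^{A_0}(t) - v_1^A(t)\bigr)^2 dt\Bigr\}^{1/2} \le \const'\,\| A - A_0 \|.
\]
I would then further reduce $\delta$ so that $\const'\delta < \sqrt{2}$. The third step is an application of the parallelogram identity in $L^2(K)$: since $v_1^{A_0}$ and $v_1^A$ are both $L^2$-unit-normalized,
\[
  \int_K \bigl(v_1^{A_0}(t) - v_1^A(t)\bigr)^2 dt \;+\; \int_K \bigl(v_1^{A_0}(t) + v_1^A(t)\bigr)^2 dt \;=\; 4.
\]
The previous bound forces the first summand to be strictly less than $2$, so the second is strictly greater than $2$, and the desired strict inequality drops out.

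The subtle point — and the main obstacle — lies in step two: Lemma~\ref{Lem_bounds_op}(ii) is stated \emph{assuming} the sign choice $\int v_1^{A_0} v_1^A \ge 0$, but a priori the inner product could vanish, in which case the "maximizing" sign is ambiguous and the strict inequality sought in the lemma becomes an equality. This is precisely what the quantitative bound $\const'\delta < \sqrt{2}$ rules out: it forces $\min_{\sigma\in\{-1,+1\}}\|v_1^{A_0} - \sigma v_1^A\|_{L^2} < \sqrt{2}$, equivalently $|\int v_1^{A_0} v_1^A| > 0$, so that one sign strictly dominates the other. Everything else is routine bookkeeping on $\delta$.
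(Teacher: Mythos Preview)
Your proof is correct and follows essentially the same route as the paper's: both parts are deduced directly from Lemma~\ref{Lem_bounds_op}, with part~(i) controlling the eigenvalue gap and part~(ii) forcing the sign-aligned $L^2$-distance to be small. The paper is much terser (it simply says the sign inequality ``follows directly from part~$ii)$''), whereas you spell out the parallelogram identity to convert the bound $\|v_1^{A_0}-v_1^A\|_{L^2}<\sqrt{2}$ into the strict inequality; this is exactly the computation the paper is suppressing.
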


\begin{proof}
    The proof is a direct consequence of the preceding Lemma~\ref{Lem_bounds_op}. Notice that
    \begin{align*}
        |\lambda_1^{A}|=(|\lambda_1^{A}|-|\lambda_1^{A_0}|)+(|\lambda_1^{A_0}|-|\lambda_2^{A_0}|)+(|\lambda_2^{A_0}- |\lambda_{2}^{A}|)+|\lambda_{2}^{A}| \ge |\lambda_{2}^{A}|-\kappa \delta,
    \end{align*}
    where $\kappa$ comes from Lemma~\ref{Lem_bounds_op}. For sufficiently small $\delta>0$, the inequality $|\lambda_1^{A}|>|\lambda_{2}^{A}|$ holds. Finally, the inequality $\int_K (v_1^{A_0}(t)-v_1^{A}(t))^2dt <\int_K (v_1^{A_0}(t)+v_1^{A}(t))^2dt$ follows directly from part $ii)$ of Lemma~\ref{Lem_bounds_op}, again for small enough $\delta$.
\end{proof}

	\end{document}